\newtheorem{myDef}{Definition}
\newtheorem{myExa}{Example}
\newcommand{\name}{$\mathtt{Starling}$}
\newcommand{\ID}[1]{$\boldsymbol{\mathtt{#1}}$}
\newcommand{\squishlist}{
	\begin{list}{$\bullet$}
		{ \setlength{\itemsep}{1pt}
			\setlength{\parsep}{1pt}
			\setlength{\topsep}{2.5pt}
			\setlength{\partopsep}{0.5pt}
			\setlength{\leftmargin}{1em}
			\setlength{\labelwidth}{1em}
			\setlength{\labelsep}{0.6em}
		}
	}
	\newcommand{\squishend}{
	\end{list}
}
  \newcommand\figcaption{\def\@captype{figure}\caption}
  \newcommand\tabcaption{\def\@captype{table}\caption}
  \providecommand\BibTeX{{%
    \normalfont B\kern-0.5em{\scshape i\kern-0.25em b}\kern-0.8em\TeX}}}
\renewcommand\footnotetextcopyrightpermission[1]{} %
\newif\ifarxiv
\newif\ifconf
\begin{document}

\title{\texttt{Starling}: An I/O-Efficient Disk-Resident Graph Index Framework for High-Dimensional Vector Similarity Search on Data Segment}
\titlenote{This is the full version of the work being accepted for publication in \emph{Proc. of the ACM on Management of Data} (SIGMOD 2024). \url{https://doi.org/10.1145/3639269}}


\author{Mengzhao Wang}
\authornote{Work done while working with Zilliz.}
\authornote{Both authors contributed equally to this research.}
\affiliation{%
  \institution{Zhejiang University}
  \country{China}}
\email{wmzssy@zju.edu.cn}
\orcid{0000-0003-3806-1012}

\author{Weizhi Xu}
\authornotemark[3]
\affiliation{%
  \institution{Zilliz}
  \country{USA}}
\email{Weizhi.Xu@zilliz.com}
\orcid{0009-0006-7384-7212}

\author{Xiaomeng Yi}
\affiliation{%
 \institution{Zhejiang Lab}
 \country{China}}
\email{xiaomeng.yi@zhejianglab.com}
\orcid{0000-0001-5792-9994}

\author{Songlin Wu}
\authornotemark[2]
\affiliation{%
  \institution{Tongji University}
  \country{China}}
\email{wusonglin@tongji.edu.cn}
\orcid{0009-0005-3560-436X}

\author{Zhangyang Peng}
\affiliation{%
  \institution{Hangzhou Dianzi University}
  \country{China}}
\email{pengzhangyang@hdu.edu.cn}
\orcid{0009-0002-5383-9750}

\author{Xiangyu Ke}
\affiliation{%
  \institution{Zhejiang University}
  \country{China}}
\email{xiangyu.ke@zju.edu.cn}
\orcid{0000-0001-8082-7398}

\author{Yunjun Gao}
\affiliation{%
  \institution{Zhejiang University}
  \country{China}}
\email{gaoyj@zju.edu.cn}
\orcid{0000-0003-3816-8450}

\author{Xiaoliang Xu}
\affiliation{%
  \institution{Hangzhou Dianzi University}
  \country{China}}
\email{xxl@hdu.edu.cn}
\orcid{0000-0001-8040-6809}

\author{Rentong Guo}
\affiliation{%
  \institution{Zilliz}
  \country{USA}}
\email{Rentong.Guo@zilliz.com}

\author{Charles Xie}
\affiliation{%
  \institution{Zilliz}
  \country{USA}}
\email{charles.xie@zilliz.com}

\renewcommand{\shortauthors}{Mengzhao Wang, et al.}
\renewcommand{\shorttitle}{\texttt{Starling}: An I/O-Efficient Disk-Resident Graph Index Framework}

\begin{abstract}
  High-dimensional vector similarity search (HVSS) is gaining prominence as a powerful tool for various data science and AI applications. As vector data scales up, in-memory indexes pose a significant challenge due to the substantial increase in main memory requirements. 
  A potential solution involves leveraging disk-based implementation, which stores and searches vector data on high-performance devices like NVMe SSDs. However, implementing HVSS for data segments proves to be intricate in vector databases where a single machine comprises multiple segments for system scalability. In this context, each segment operates with limited memory and disk space, necessitating a delicate balance between accuracy, efficiency, and space cost. Existing disk-based methods fall short as they do not holistically address all these requirements simultaneously.
  
  In this paper, we present {\name}, an I/O-efficient disk-resident graph index framework that optimizes data layout and search strategy within the segment. It has two primary components: 
  {\bf (1)} a data layout incorporating an in-memory navigation graph and a reordered disk-based graph with enhanced locality, reducing the search path length and minimizing disk bandwidth wastage; 
  and {\bf (2)} a block search strategy designed to minimize costly disk I/O operations during vector query execution. Through extensive experiments, we validate the effectiveness, efficiency, and scalability of {\name}. 
  On a data segment with 2GB memory and 10GB disk capacity, {\name} can accommodate up to 33 million vectors in 128 dimensions, offering HVSS with over 0.9 average precision and top-10 recall rate, and latency under 1 millisecond. The results showcase {\name}'s superior performance, exhibiting 43.9$\times$ higher throughput with 98\% lower query latency compared to state-of-the-art methods while maintaining the same level of accuracy.
\end{abstract}

\begin{CCSXML}
<ccs2012>
   <concept>
       <concept_id>10002951.10002952.10002971.10003451.10003189</concept_id>
       <concept_desc>Information systems~Record and block layout</concept_desc>
       <concept_significance>500</concept_significance>
       </concept>
   <concept>
       <concept_id>10002951.10002952.10003190.10003192.10003210</concept_id>
       <concept_desc>Information systems~Query optimization</concept_desc>
       <concept_significance>500</concept_significance>
       </concept>
   <concept>
       <concept_id>10002951.10003317.10003359.10003363</concept_id>
       <concept_desc>Information systems~Retrieval efficiency</concept_desc>
       <concept_significance>500</concept_significance>
       </concept>
 </ccs2012>
\end{CCSXML}

\ccsdesc[500]{Information systems~Record and block layout}
\ccsdesc[500]{Information systems~Query optimization}
\ccsdesc[500]{Information systems~Retrieval efficiency}

\keywords{high-dimensional vector, approximate nearest neighbor search, range search, disk-based graph index, block shuffling}

\received{July 2023}
\received[revised]{October 2023}
\received[accepted]{November 2023}

\maketitle

\section{Introduction}\label{sec: intro}
The management of unstructured data, including video, image, and text, has become an urgent requirement \cite{Milvus_sigmod2021}. A notable advancement in this domain has been the widespread adoption of learning-based embedding models, which leverage high-dimensional vector representations to enable effective and efficient analysis and search of unstructured data \cite{le2014distributed, salvador2017learning}. High-dimensional Vector Similarity Search (HVSS) is a critical challenge in many domains, such as databases~\cite{graph_survey_vldb2021,NSG}, information retrieval~\cite{grbovic2018real,huang2020embedding}, recommendation systems~\cite{covington2016deep,okura2017embedding}, scientific computing~\cite{nasr2010hashing,zhu2020benchmark}, and large language models (LLMs)~\cite{li2023skillgpt,openai_hvss,retrieval-lm-tutorial}. The computational complexity associated with exact query answering in HVSS has spurred recent research efforts toward developing approximate search methods \cite{HNSW, NSG, graph_survey_vldb2021}.
While various compact index structures and intelligent search algorithms have been proposed to achieve a balance between efficiency and accuracy \cite{DPG, graph_survey_vldb2021, li2022deep}, the majority of these approaches assume that both the vector dataset and its corresponding index can be accommodated in the main memory (DRAM). 
However, in practical scenarios, the sheer volume of vector data often surpasses the capacity of the main memory, necessitating substantial expansion of main memory resources \cite{trilliondata, SPANN, li2022efficient} to support state-of-the-art in-memory HVSS algorithms like NSG \cite{NSG} and HNSW \cite{HNSW}. To illustrate, constructing an HNSW index based on one billion floating-point vectors in 96 dimensions would require more than 350GB of memory, resulting in out-of-memory failures \cite{HM_ANN} when executed on a typical single server \cite{bigann}.
Consequently, there is a growing demand for disk-based methods that leverage solid-state disks (SSDs) as a storage and retrieval medium for vector data, thereby circumventing the constraints imposed by main memory \cite{SPANN, DiskANN, zhang2019grip}. This paradigm shift allows for handling massive-scale vector datasets on a single machine, accommodating billions of vectors with improved efficiency and scalability.

However, maintaining a single large vector index on a solitary machine (Fig. \ref{fig: intro_segment}(a)) proves impractical for vector databases, as it constrains many essential system features for industrial applications \cite{Manu_zilliz}. For example, constructing a sole DiskANN index for a dataset at a billion-scale may demand over five days and peak memory usage of 1,100GB \cite{DiskANN}, significantly impacting system availability. Additionally, a large index exhibits poor migration capabilities, essential for scaling, load balancing, and fault tolerance \cite{Manu_zilliz}. Consequently, \textbf{vector databases consistently partition large-scale data into \textit{multiple} segments and assign an appropriate number of segments to a single machine} (Fig. \ref{fig: intro_segment}(b)) \cite{Milvus_sigmod2021,Manu_zilliz,Jingdong_paper}. Recent trends show that data fragmentation has become commonplace in mainstream vector databases \cite{Manu_zilliz,lanns,Jingdong_paper}. Each segment operates with limited storage capacity and computing resources \cite{trilliondata,Manu_zilliz}, and an independent medium-sized index is constructed on each segment for autonomous searching. Leveraging specific data fragmentation strategies and a query coordinator allows us to search only \textit{a few} segments of a machine during vector query execution \cite{Manu_zilliz,zhang2022leqat,deng2019pyramid}. This poses an open problem of efficient and accurate HVSS within a data segment in vector databases \cite{Manu_zilliz,milvus_archi}. In general, we may manage tens of millions of vectors on a typical segment with only several gigabytes of space \cite{trilliondata,Manu_zilliz}. The space left for index storage is very limited. Therefore, HVSS within the segment necessitates a meticulous equilibrium between \textit{search performance} (encompassing accuracy and efficiency) and \textit{space cost}.

\begin{figure}[!tb]
  \centering
  \setlength{\abovecaptionskip}{0.1cm}
  \includegraphics[width=0.6\linewidth]{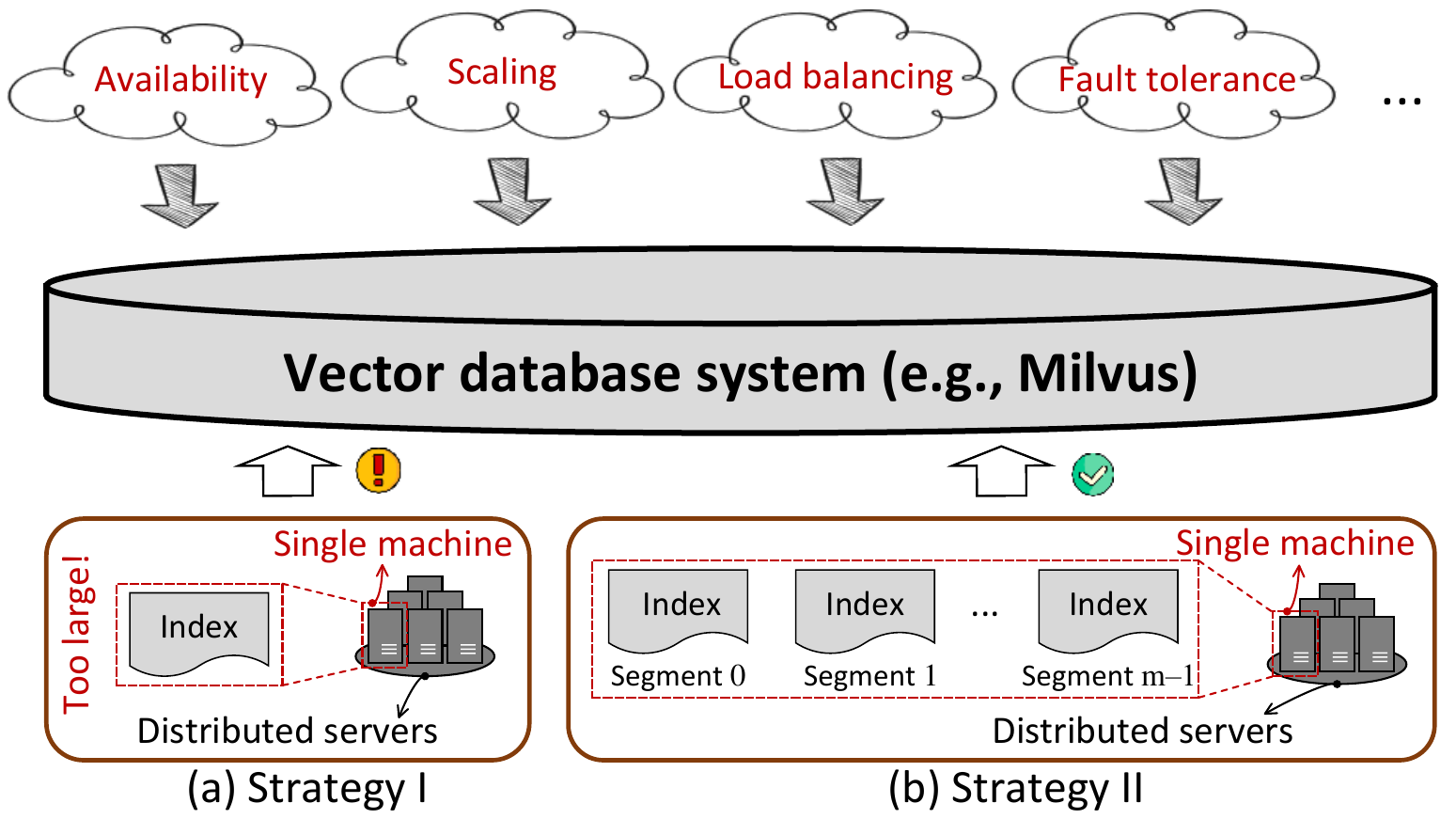}
  \caption{Two indexing strategies on a single machine for vector database system. In a distributed setting, different machines share the same strategy \cite{Manu_zilliz}.}
  \label{fig: intro_segment}
  \vspace{-0.6cm}
\end{figure}

No existing work has achieved effective HVSS on the data segment, considering both search performance and space cost. An immediate attempt is to use existing disk-based methods on the segment to handle large-scale vector data. However, as reported in the literature \cite{HM_ANN,SPANN} and also empirically verified in our experiments (\textbf{\S \ref{subsec: intra-seg}}), these techniques are in a quandary in this scenario. For example, two state-of-the-art disk-based schemes—SPANN \cite{SPANN} and DiskANN \cite{DiskANN}—lie in either of the two following extremes. At one extreme, SPANN \cite{SPANN} requires huge disk space to support efficient vector queries in the segment since each vector is copied multiple times (up to eight times \cite{SPANN}). Moreover, in vector databases, a segment may have multiple replicas that are distributed on different machines for fault tolerance \cite{Manu_zilliz}. This will further increase the disk space overhead for SPANN. At the other extreme, the vector index of DiskANN \cite{DiskANN} can fit the segment's disk capacity but DiskANN suffers from high latency due to the large number of random disk I/Os during the search process \cite{DiskANN,SPANN}. Therefore, it is particularly challenging to achieve a balance between search performance and space cost for HVSS on data segments.

We find that current disk-based methods for HVSS still follow the paradigm of memory-based solutions in their data layout and search strategy. This leads to numerous disk I/Os, which limit search performance under a given space cost. We illustrate this with DiskANN \cite{DiskANN}, the state-of-the-art disk-resident graph index algorithm\footnote{We exclude SPANN as it duplicates each vector up to eight times, which far exceeds the capacity of the data segment when serving tens of millions of vectors.} (cf. \textbf{\S \ref{subsec: disk_graph}}). \underline{First}, each hop along the search path corresponds to a disk I/O request. The search procedure on the graph index executes a sequence of disk I/O requests, which results in high search latency. \underline{Second}, the disk bandwidth is underutilized due to poor data locality. For each disk I/O in the search procedure, only one vertex’s information (a vector and its neighbor IDs) is \textit{relevant}, which typically takes a few hundred bytes. However, the smallest disk I/O unit is a block, which is usually 4KB in size. Therefore, most of the data read from the disk is wasted. According to our evaluation on the BIGANN dataset \cite{bigann}, DiskANN’s disk I/O operation takes up to 92.5\% of the time in the search procedure, and up to 94\% of the vertices (they are \textit{irrelevant}) in each loaded block are wasted (i.e., the vertex utilization ratio is low). Thus, there is much potential to achieve an I/O-efficient HVSS using graph index on the data segment.

In this paper, we introduce an I/O-efficient di\underline{\textbf{s}}k-residen\underline{\textbf{t}} gr\underline{\textbf{a}}ph index f\underline{\textbf{r}}amework tailored for high-dimensiona\underline{\textbf{l}} vector s\underline{\textbf{i}}milarity search o\underline{\textbf{n}} a data se\underline{\textbf{g}}ment, called~\name{}. {\name} optimizes data layout and search strategy to improve the search performance without additional space cost. The data layout consists of a sampled in-memory navigation graph and a reordered disk-based graph, which help find query-aware dynamic entry points and improve data locality, respectively. Based on this data layout, we design a search strategy that reduces disk I/Os by shortening the search path and increasing the vertex utilization ratio. Specifically, \name{} builds a navigation graph in memory by sampling a small portion of vectors. This allows the search procedure to quickly find some vertices that are close to the query vector without disk I/O, and then start searching on the disk-based graph from these vertices. To improve data locality on disk, \name{} reorganizes the graph index layout by shuffling the vertices and storing them with their neighbors in the same block (i.e., \textit{block shuffling}). This way, one disk I/O can load information from multiple relevant vertices in the search path. Based on this, we design a block search strategy that checks all the data from disk I/O and adds the ones that are potentially useful to the search sequence. We further improve the strategy by three computation-specific optimizations. Finally, we also propose efficient algorithms for approximate nearest neighbor search (ANNS) and range search (RS) queries based on the optimized search procedure.

To the best of our knowledge, this is the first work that tackles the fundamental data locality problem of disk-based graph index for the data segment by effective block shuffling algorithms. Our evaluation shows that {\name} achieves 43.9$\times$ higher throughput with 98\% lower query latency than state-of-the-art methods under the same accuracy. The main contributions of this work are:

\squishlist

\item We present~\name{}, an I/O-efficient disk-resident graph index framework with excellent universality that enhances the search efficiency of various graph index algorithms (e.g., Vamana \cite{DiskANN}, NSG \cite{NSG}, and HNSW \cite{HNSW}) on a data segment (\textbf{\S \ref{sec: overview}}, \textbf{\S \ref{subsec: scalability}}). 

\item We design a data layout for the data segment that consists of two components: an in-memory navigation graph (\textbf{\S \ref{subsec: navigation_graph}}) and a reordered disk-based graph (\textbf{\S \ref{subsec: graph_reorder}}). This data layout reduces the search path length and increases the vertex utilization ratio of each loaded block when searching on the disk-based graph index.

\item We prove that the block shuffling problem is NP-hard and has no polynomial time approximation algorithm with a finite approximation factor unless P=NP (\textbf{\S \ref{subsec: graph_reorder}}). Hence, we devise three effective heuristic shuffling algorithms that improve the locality of the disk-based graph index (\textbf{\S \ref{subsec: graph_reorder}}).

\item We propose a block search strategy that exploits the data locality of the reorganized index layout to reduce disk I/Os (\textbf{\S \ref{subsec: basic_page_search}}). Based on this strategy, we also provide three computation-specific optimizations (\textbf{\S \ref{subsec: basic_page_search}}) and develop search algorithms for two major types of queries in vector databases (\textbf{\S \ref{subsec: knn}} and \textbf{\S \ref{subsec: rs}}).

\item We implement~\name{} and evaluate it on four real-world datasets to verify its effectiveness, efficiency, and scalability (\textbf{\S \ref{sec: exp}}). We show that search efficiency can be improved significantly by simply adjusting the index layout on the disk.
\squishend

\section{Preliminaries}\label{sec: pre}
In this section, we formulate two critical queries in vector databases. Then we explain why current methods are inefficient on the data segment and state our optimization goal in this paper.
\subsection{Two Query Types for HVSS}
\textbf{K Nearest Neighbor Search (KNNS).}
Given a vector dataset $X=\{x_1, x_2, \ldots, x_n\} \subset \mathbb{R}^D$, a query vector $q \in \mathbb{R}^D$, a distance function $dist(\mathbb{R}^D,\mathbb{R}^D)\rightarrow \mathbb{R}$, and a positive integer $k$ $(0<k<n)$, the KNNS problem finds a set $R_{knn}$ of $k$ vectors from $X$ that are closest to $q$ such that 
for any $x_i \in R_{knn}$ and $x_j \in X \setminus R_{knn}$, 
\begin{align}
    dist(x_i,q)\leq dist(x_j,q)\quad.
\end{align}
The distance function can be Euclidean distance or others.

The KNNS problem requires scanning all the vectors in the dataset, which is impractical for large vector datasets. Hence, most studies try to find approximate results that are close to the exact KNNS results. This is the Approximate Nearest Neighbor Search (ANNS) problem~\cite{Faiss,HNSW,NSG,NSSG}. ANNS uses $Recall \in [0,1]$ to measure the accuracy of approximate results. Let $R^\prime_{knn}$ be the approximate results, then $Recall$ is
\begin{equation}
\label{equ: recall}
    Recall = \frac{|R_{knn} \cap R^\prime_{knn}|}{k}\quad.
\end{equation}
A higher recall means a more accurate result. Many studies have designed different index structures for vector data to improve the trade-off between search efficiency and accuracy.

\vspace{0.2em}
\noindent\textbf{Range Search (RS).}
Unlike the KNNS problem that returns a fixed number of results, the RS problem retrieves all the vectors that are within a given distance $r$ from the query vector $q$. Average precision $(AP)\in [0,1]$ is a metric to measure the accuracy of approximate RS results. Given the exact result $R_{range}$ and an approximate solution that ensures all the returned results $R^\prime_{range}$ are within $r$, then $AP$ can be computed as follows:
\begin{equation}
\label{equ: ap}
    AP = \frac{|R^\prime_{range}|}{|R_{range}|}
\end{equation}

\subsection{HVSS on Data Segment.} \label{subsec: disk_graph}
Vector databases divide large-scale data into multiple segments and distribute them across servers \cite{lanns,Manu_zilliz}. Each server may manage numerous segments, processing vector queries either in parallel or serially through a query coordinator. Typically, each segment operates within strict memory and disk space constraints \cite{trilliondata,Manu_zilliz}, often possessing less than 2GB of memory and under 10GB of disk capacity. Data segment facilitates system scalability \cite{Manu_zilliz}, yet it also poses challenges for HVSS within a data segment. Within this context, our objective is to achieve high search accuracy and efficiency within these space limitations. This necessitates a delicate equilibrium between search performance and the space cost for HVSS. For instance, on the BIGANN dataset \cite{bigann}, each segment might accommodate 33 million vectors, requiring 4GB of storage. Consequently, the remaining space for index storage becomes exceedingly limited. Note that we consistently construct a substantial index to achieve a better accuracy and efficiency trade-off. In the following, we delve into the reasons why current HVSS advancements struggle within the context of data segments.

Mainstream in-memory algorithms such as HNSW \cite{HNSW}, NSG \cite{NSG}, and HVS \cite{HVS} encounter challenges when handling tens of millions of vectors on a segment due to their approach of loading all vectors and index into memory, exceeding memory limitations. While some vector compression methods like LSH \cite{HuangFZFN15} and PQ \cite{PQ} store compressed vectors in memory, the compression process introduces significant errors that notably degrade search accuracy. For instance, the top-1 recall rate of the leading compression method seldom surpasses 0.5 \cite{DiskANN}. Disk-based solutions present more promise as they require less memory and achieve high search accuracy. Traditional disk-based methods based on trees are excluded due to their susceptibility to the ``curse of dimensionality'' when addressing HVSS \cite{DPG}. Instead, the focus is directed towards two state-of-the-art disk-based methods for HVSS: SPANN \cite{SPANN} and DiskANN \cite{DiskANN}. SPANN achieves high search accuracy and efficiency but at the cost of substantial storage space. It may replicate each vector up to eight times, leading to extensive disk overhead that far surpasses the capacity of the data segment when managing tens of millions of vectors. On the other hand, DiskANN follows the graph index algorithm and achieves superior search performance with less disk capacity. It has served as a baseline for HVSS on disk \cite{bigann}. However, its data layout and search strategy still align with the paradigm of memory-based solutions, necessitating disk I/O for each vertex access. In summary, devising an efficient and accurate solution for HVSS on a data segment remains challenging, involving the management of tens of millions of high-dimensional vectors within limited space.

\subsection{Our Optimization Objective} \label{subsec: optimization_objective}
We draw two conclusions from our observations. \underline{First}, the disk-resident graph index proves to be the optimal choice for HVSS on the data segment, offering improved search performance with minimal space overhead \cite{HM_ANN}. \underline{Second}, the disk-based graph index can be further optimized for I/O-efficiency in HVSS. Consequently, the HVSS problem on the data segment can be framed as enhancing the I/O-efficiency of disk-resident graph index. Let $T_{I/O}$, $T_{comp}$, and $T_{other}$ represent the I/O time, computation time, and other times (such as data structure maintenance) when conducting searches on the disk-based graph, respectively. The total search time can be denoted as 
\begin{equation}
\label{equ: search time}
    T_{total}=T_{I/O}+T_{comp}+T_{other}.
\end{equation}
In Eq. \ref{equ: search time}, $T_{I/O}$ emerges as the dominant factor influencing the search efficiency. For instance, upon visualizing the search time costs of DiskANN on BIGANN (Fig. \ref{fig: block search optimization}(d)), it becomes evident that $T_{I/O}$ constitutes up to 92.5\% of $T_{total}$, while $T_{comp}$ and $T_{other}$ collectively occupy less than 7.5\%. This highlights the efficiency bottleneck as the cost of disk I/Os for HVSS on the disk-resident graph index. Therefore, our objective is to improve search efficiency by minimizing $T_{I/O}$. Note that we must not introduce additional space overhead (including memory and disk), compared to the original graph index, in order to adhere to the space capacity limitation of the data segment.

\begin{figure*}[!tb]
  \centering
  \setlength{\abovecaptionskip}{0cm}
  \setlength{\belowcaptionskip}{-0.4cm}
  \includegraphics[width=\linewidth]{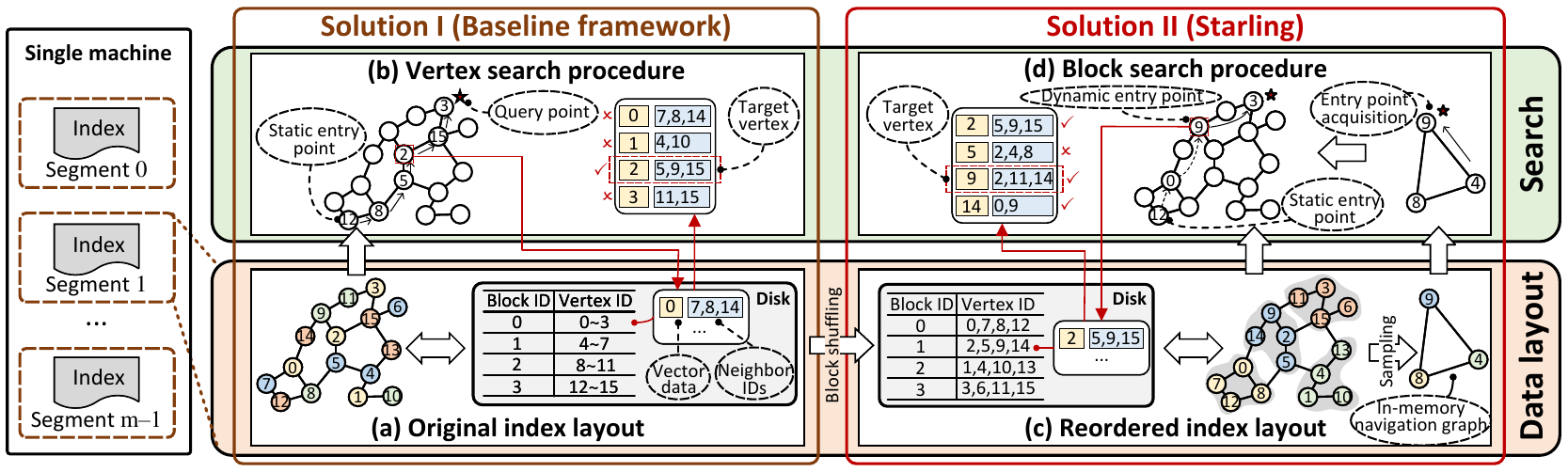}
  \caption{{Illustration of the data layouts and search strategies for the baseline and {\name}, respectively.}}
  \label{fig: overview_graph_layout}
\end{figure*}

\section{Design Philosophy}\label{sec: overview}
We analyze two key factors that affect I/O time for existing graph index methods. Then we give an overview of~\name{}, illustrating its data layout and search strategy on the data segment.

\subsection{I/O-efficiency Analysis}
\label{subsec: problem_analysis}
When conducting searches on the disk-based graph, I/O-efficiency depends on two primary factors: the vertex utilization ratio in each disk I/O and the length of the search path. The former indicates the extent to which useful vertices are loaded in a given block, while the latter denotes the number of hops required from the entry point to the result point. Both of these factors influence the number of disk I/Os and thus determine the I/O time.
A higher vertex utilization ratio signifies that more relevant information is loaded in each disk I/O, leading to more effective utilization of disk bandwidth and requiring fewer disk I/Os for each query. A longer search path implies an increased number of disk I/Os during the search process. We delve into an analysis of the vertex utilization ratio and the search path length within the \textit{baseline framework}\footnote{Unless specifically stated, we refer to DiskANN as the baseline framework.} and reveal two underlying issues.
 
\vspace{0.2em}
\noindent\textbf{Problem 1: Poor data locality.} Fig.~\ref{fig: overview_graph_layout}(a) shows that the baseline assigns ID-consecutive vertices (a vertex contains the vector data and neighbor IDs) to the same blocks. For example, block \ID{0} stores vertices \ID{0}$\sim$\ID{3}. Since a block is the smallest disk I/O unit, reading vertex \ID{2} also reads vertices \ID{0}, \ID{1}, and \ID{3}. This wastes disk bandwidth as the other three vertices are irrelevant to vertex \ID{2}. A naive way to avoid the wastage is to check all the vertices in a block. However, a segment has up to $10^7\times$ more vertices than a block in a real-world scenario. The probability of finding a near vertex (w.r.t the target vertex) among the non-target vertices\footnote{We load a block according to a target vertex. In this example, the target vertex is \ID{2}.} in the block is too low to improve efficiency effectively. Therefore, the baseline only checks the target vertex and discards the rest. This means a low {vertex utilization ratio} in each disk I/O. According to our evaluation on BIGANN (Tab. \ref{tab: vertex utilization and search path}), up to 94\% of data read from the disk is wasted. Obviously, ID-consecutive vertices in a block do not imply spatial proximity (e.g., vertices of the same color are scattered on the graph topology in Fig. \ref{fig: overview_graph_layout}(a)). Therefore, the baseline has poor data locality.

\vspace{0.2em}
\noindent\textbf{Problem 2: Long search path.} The I/O complexity of searching on a disk-based graph index is proportional to the {search path length}. The baseline framework uses a fixed or random vertex as the entry point for the search. However, a segment may have tens of millions of vertices, so the entry point may be far from the query. In this case, only the last few vertices in the path are near the query and likely to be in the final result. However, we need to read each vertex along the path from the disk. Fig. \ref{fig: overview_graph_layout}(b) shows an extreme example where the entry point is the farthest vertex from the query point. It takes five hops to search from \ID{12} to \ID{3}. If we start from \ID{9}, we can reduce the hops to two. For a dataset of tens of millions of vectors, even searching for only the top-10 nearest neighbors may generate a path of hundreds of hops to achieve high accuracy. Therefore, the baseline has a long search path.

\subsection{Framework Overview}
We present~\name{}, a framework designed to enhance the I/O-efficiency of disk-based graph index for HVSS on a data segment. {While a single machine may encompass multiple segments, they all share the same index layout and search strategy. Therefore, our primary focus is on one specific segment}. As depicted in Fig. \ref{fig: overview_graph_layout}(c), \name{} preserves the graph's topology while optimizing the data layout to augment the vertex utilization ratio and diminish the search path length. Moreover, it employs a block search strategy aimed at reducing disk I/Os. Below, we provide an overview of the data layout and search strategy employed by \name{}:

\vspace{0.2em}
\noindent\textbf{Data layout on disk.} \name{} improves data locality by shuffling the data blocks in accordance with the graph topology (Fig. \ref{fig: overview_graph_layout}(c) left). {This aligns the data layout with the search procedure, which tends to visit neighboring vertices \cite{graph_survey_vldb2021}.} Specifically, \name{} endeavors to store a vertex and its neighbors in the same block, enabling a single disk read to fetch not only the target vertex but also other likely candidates. This approach increases the {vertex utilization ratio}, as each disk I/O operation brings multiple relevant vertices. Furthermore, the search procedure can more readily jump to a closer vertex from the query, potentially reducing the number of required disk I/Os.

\vspace{0.2em}
\noindent\textbf{Data layout in memory.} \name{} identifies query-aware entry points for the disk-based graph using an in-memory navigation graph (Fig. \ref{fig: overview_graph_layout}(c) right). {This approach incorporates the concept of multi-layered graphs~\cite{HVS,HNSW,HM_ANN} and involves sampling a small fraction of vectors from the disk-based graph to construct the navigation graph. During this process, any in-memory graph algorithm~\cite{NSG,NSSG} can be utilized.} Given a query, \name{} initially explores the in-memory navigation graph to obtain query-close vertices as entry points. Subsequently, it initiates the disk-based graph search from these identified points. This methodology effectively reduces the {search path length} on the disk-based graph.

\vspace{0.2em}
\noindent\textbf{Search strategy.} \name{} uses a block search strategy to exploit data locality (Fig. \ref{fig: overview_graph_layout}(d)). Unlike the baseline, which explores data on a vertex basis, this strategy processes data by blocks. This way, it benefits from the optimized data layout. For each loaded block, {\name} computes the distance to the query for all vertices. Then, it selects the vertices that are close to the query and checks their neighbors for new search candidates. This strategy lowers the disk operation cost by exploring more relevant data per block but increases the computation cost. We further optimize the performance with three computation-specific optimizations (\textbf{\S \ref{subsec: basic_page_search}}).

Example~\ref{exa: basic_page_search} shows how~\name{} reduces disk I/Os.

\begin{myExa}
  \label{exa: basic_page_search}
  \rm Fig.~\ref{fig: overview_graph_layout}(b) shows a vertex search strategy that starts from vertex \ID{12} and reaches the result vertex \ID{3} in five hops for a given query point. This strategy needs at least six disk I/Os to access the vertices in the path and their neighbors. This is inefficient, especially for large-scale scenarios, because the probability of finding a vertex close to the query among the non-target vertices in a loaded block is very low. Therefore, exploring data on a block basis with the original index layout is more harmful than helpful. In contrast, \name{} reduces the disk I/Os to three with a reordered index layout, as shown in Fig. \ref{fig: overview_graph_layout}(d). The block search procedure works as follows: (1) It loads block \ID{0} and visits vertices \ID{12}, \ID{0}, \ID{7}, and \ID{8} in the block. Then it computes their distance to the query and selects \ID{0} to visit its neighbors. (2) It loads block \ID{1} with vertices $\{$\ID{2}, \ID{5}, \ID{9}, \ID{14}$\}$ and chooses \ID{9} as the next hop. Vertex \ID{3} as the search result will be found when~\name{} loads the next block according to vertex \ID{11}. With a navigation graph built on the vectors of $\{$\ID{4}, \ID{8}, \ID{9}$\}$ in memory, it obtains \ID{9} as the entry point for disk-based graph search. Since \ID{9} is close to the query, it only takes two disk I/Os (one to visit the entry point and one to visit its neighbors) to get the query result.
\end{myExa}
\section{Data Layout}
\label{sec: index}
Fig. \ref{fig: overview_graph_layout}(c) shows how {\name} organizes data on a segment. \underline{First}, it builds a disk-based graph index on the full dataset and rearranges it by block shuffling to improve data locality. We can use different methods to construct {\name}'s disk-based graph, such as NSG \cite{NSG}, HNSW \cite{HNSW}, and Vamana \cite{DiskANN}. For more details on these methods, please refer to the original papers. We do not focus on developing a specific graph index algorithm since existing ones are well-studied but not suitable for disk deployment. Instead, we address the block shuffling problem on the disk-based graph (\textbf{\S \ref{subsec: graph_reorder}}). \underline{Second}, {\name} samples some data points from the full dataset and builds an in-memory navigation graph (\textbf{\S \ref{subsec: navigation_graph}}). This structure allows searching on the disk-based graph to begin from some query-aware entry points, which reduces the search path length.

\subsection{Block Shuffling on the Disk}\label{subsec: graph_reorder}
\textbf{Notations.} Let $G=(V,E)$ represent a disk-based graph index, where $V$ and $E$ denote the sets of vertices and edges, respectively. The Edges are directed and are stored as adjacency lists of vertices. Each vertex necessitates $\gamma$ KB of storage to house its vector data, neighbor count $\lambda$, and a list of neighbor IDs (with a maximum length of $\Lambda$)\footnote{Each vertex is allocated $\Lambda$ ID spaces for alignment, and padding is added when $\lambda<\Lambda$. For simplicity, the neighbor count $\lambda$ is omitted in Fig. \ref{fig: overview_graph_layout} and \ref{fig: reordering}.}. The size of a disk block is $\eta$ KB. Since we do not split the data of a vertex into two blocks, each block can accommodate at most $\varepsilon=\lfloor \eta/\gamma \rfloor$ vertices. Therefore, $\rho = \lceil |V|/\varepsilon \rceil$ blocks are required to store the graph index.

Next, we give the definition of block-level graph layout:

\begin{myDef}
  \label{def: page_based_graph_layout}
  \textbf{Block-Level Graph Layout.} The block-level graph layout of a graph index is a scheme that assigns $|V|$ vertices to $\rho$ blocks.
\end{myDef}

\begin{myExa}
  \label{exa: diskann_layout}
  \rm Fig. \ref{fig: overview_graph_layout}(a) shows a graph index $G=(V,E)$ with 16 vertices ($|V|=16$), four vertices per block ($\varepsilon=4$), and four blocks in total ($\rho=4$). For DiskANN \cite{DiskANN_code} on 33 million BIGANN dataset ($|V|=3.3 \times 10^7$) \cite{bigann}, each vector is 128-dimensional and one byte per value. If $\Lambda$ is 31 and $\eta$ is 4 KB, then $\gamma=(128+4+31\times 4) /1,024$ KB (ID is unsigned integer type), $\varepsilon = 16$, and $\rho=2,062,500$. Thus, DiskANN puts 16 ID-contiguous vertices in a block, such as $0 \sim 15$.
\end{myExa}

Fig. \ref{fig: overview_graph_layout}(a) and (c) show two different ways of organizing the graph index on disk block level. As discussed in \textbf{\S \ref{subsec: problem_analysis}} (Problem 1), the graph layout affects disk I/Os during searching. Specifically, a layout without data locality lowers the vertex utilization ratio and increases random disk I/Os. We use the overlap ratio $OR(G)$ of the graph index $G$ to measure the locality of the graph layout. For any vertex $u \in V$, $OR(G)$ is the average of $OR(u)$ over $V$, where $OR(u)$ is the proportion of vertices that are $u$'s neighbors among all the vertices except $u$ in the block. We can compute $OR(u)$ by
\begin{equation}
  \label{equ: overlap_ratio}
  OR(u) = \left\{
  \begin{matrix}
  \frac{|B(u)\cap N(u))|}{|B(u)|-1} && |B(u)|>1 \\
  0 && |B(u)|\leq 1
  \end{matrix}
  \right.\quad,
\end{equation}
where $B(u)$ is the set of vertices in the block that contains $u$ and $N(u)$ is the set of $u$'s neighbors ($|N(u)|\leq\Lambda$). In real-world datasets, the maximal vertex count $\varepsilon$ is about 10 in each block and there is at most one block with a different $\varepsilon$ (i.e., $|V|$ is not divisible by $\varepsilon$) in graph layout. We use $OR(B(u))=\sum_{v\in B(u)} ({OR(v)}/{|B(u)|})$ to denote the overlap ratio of the block $B(u)$. $OR(G)$ is calculated as $\sum_{u\in V} ({OR(u)}/{|V|})$. A higher $OR(G)$ means better data locality and a higher vertex utilization ratio in a loaded block.

\begin{myExa}
  \label{exa: overlap_ratio}
  \rm A graph layout with optimal data locality has $OR(G)$ $= 1$, meaning that every vertex in a block is a neighbor of any other vertex in the block. However, we get $OR(G)$ close to 0 for the DiskANN \cite{DiskANN} graph index on the 33 million BIGANN (Fig. \ref{fig: data_locality}(a)). This indicates the poor data locality of the DiskANN graph layout.
\end{myExa}

To enhance data locality, we use \textit{block shuffling} to adjust the graph layout, which is defined as:
\begin{myDef}
  \label{def: graph_reordering}
  \textbf{Block Shuffling.} Given a graph layout for a disk-based graph index $G$, the block shuffling aims to get a new layout that maximizes the $OR(G)$ while satisfying Def. \ref{def: page_based_graph_layout}.
\end{myDef}

In Def. \ref{def: graph_reordering}, we aim to get a graph layout where every vertex in a block is a neighbor of other vertices within the same block. However, the graph index built on high-dimensional vectors is complex, as the neighborhood relationship encompasses both navigation and similarity aspects \cite{DPG,NSSG}. {A vertex may have neighbors that belong to different clusters \cite{HNSW}, and all vertices exhibit a constant out-degree.} Hence, it is challenging, if not impossible, to ensure that any two vertices in a block are mutual neighbors. We prove that the block shuffling problem is NP-hard in Theorem \ref{theorem: graph reordering np-hard}. Furthermore, it lacks a polynomial time approximation algorithm with a finite approximation factor. This motivates our block shuffling research, as the baseline framework merely populates a block with ID-contiguous vertices.

\begin{theorem}
  \label{theorem: graph reordering np-hard}
  The block shuffling problem is NP-hard and does not have a polynomial time approximation algorithm with a finite approximation factor unless P=NP.
\end{theorem}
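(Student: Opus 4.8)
The plan is to establish both halves of the statement through reductions from a hard partitioning problem. For NP-hardness I would reduce from \textsc{Partition into Triangles} (equivalently \textsc{Exact Cover by 3-Sets}), which asks whether the vertices of a graph $H$ with $|V_H|=3m$ can be split into $m$ vertex-disjoint triangles. Given $H$, I build a block-shuffling instance by taking the disk-graph $G$ to be $H$ (realizing each undirected edge as a pair of directed arcs, so that the neighbour relation $N(\cdot)$ of Def.~\ref{def: page_based_graph_layout} coincides with adjacency in $H$) and fixing the block capacity $\varepsilon=3$. Then $|V|=3m$ and $\rho=\lceil|V|/\varepsilon\rceil=m$, so $\rho\varepsilon=|V|$ and every block is forced to hold exactly three vertices. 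This reduction is clearly polynomial.

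The first step is the equivalence that some layout attains $OR(G)=1$ if and only if $H$ admits a triangle partition. For the forward direction, since $OR(G)$ is an average of the quantities $OR(u)\in[0,1]$, the value $OR(G)=1$ forces $OR(u)=1$ for every $u$; by Eq.~\ref{equ: overlap_ratio} this means each vertex is adjacent to both of its block-mates, so every full block is a triangle and the blocks yield the partition. Conversely, laying out each triangle of a partition as one block makes every $OR(u)=1$. Because $OR(G)\le 1$ always holds, deciding whether the optimum equals $1$ is exactly the \textsc{Partition into Triangles} decision, and hence maximizing $OR(G)$ is NP-hard. To cover general $\varepsilon$ I would swap in \textsc{Partition into $K_\varepsilon$} (partition into cliques of fixed size $\varepsilon\ge 3$), which is likewise NP-complete, and rerun the same full-block argument.

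The second and harder step is the inapproximability claim. A finite multiplicative factor can be excluded only by producing a gap in which ``yes'' instances keep the optimum bounded below by a positive constant while ``no'' instances drive it to $0$: any algorithm returning a feasible layout of value at least $OPT/\alpha$ would then have to report a strictly positive value on yes-instances and value $0$ on no-instances, separating them in polynomial time. The obstacle here is real and is where I expect the main difficulty, because $OR(G)$ is an average of \emph{local} terms, so merely co-locating the two endpoints of a single edge already yields positive overlap; a naive reduction therefore never pushes the no-instance optimum down to $0$. I would attempt to engineer the reduction so that the only adjacencies of $G$ lie inside mutually exclusive ``candidate blocks,'' each corresponding to a set of an exact-cover instance, with the block capacity and the exact block count $\rho$ together forcing the chosen blocks to partition $V$; the intended effect is that a globally consistent selection exists precisely when the cover does, while in its absence no feasible layout can place any adjacent pair in a common block. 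Certifying this last property, namely that \emph{every} feasible layout of a no-instance leaves all blocks edge-free so that $OPT=0$, is the crux of the argument; once it is secured, the resulting $0$-versus-constant gap immediately rules out any polynomial-time approximation with finite factor unless P$=$NP.
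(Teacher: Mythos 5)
Your NP-hardness half is correct and takes a genuinely different route from the paper. The paper reduces from a strongly NP-complete numerical partitioning problem (its ``triple shuffling'' variant of 3-\textsc{Partition}), builds $G$ as a disjoint union of cliques of sizes $\alpha_i$, and argues that a layout attaining the maximum overlap ratio must pack three cliques of total size $\Omega$ into each block; you instead reduce from \textsc{Partition into Triangles} with $\varepsilon=3$, take $G$ to be the input graph itself, and characterize $OR(G)=1$ directly through Eq.~\ref{equ: overlap_ratio}. Both correctly establish hardness of deciding whether the optimum layout value is attained; yours is the more elementary construction, while the paper's clique gadget is the one it then reuses for the inapproximability half, so its two parts share a single reduction.

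The genuine gap is in the second half, and it is not just an unfinished step --- the construction you describe cannot exist under Def.~\ref{def: page_based_graph_layout}. You need no-instances on which \emph{every} feasible layout has $OR(G)=0$, i.e., no feasible layout co-locates any adjacent pair. But a block-level layout is an arbitrary assignment of $|V|$ vertices to $\rho$ blocks of capacity $\varepsilon\geq 2$; feasibility imposes no constraint that would forbid placing the two endpoints of an edge in the same block. Hence as soon as $G$ contains a single edge $(u,v)$ with $v\in N(u)$, the layout putting $u$ and $v$ together already achieves $OR(u)\geq 1/(\varepsilon-1)$ and therefore $OR(G)\geq 1/\bigl(|V|(\varepsilon-1)\bigr)>0$, so the optimum is strictly positive on every non-trivial instance and the $0$-versus-constant gap you are counting on is unattainable no matter how the adjacencies are engineered. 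To finish you would need a different mechanism --- for instance a super-constant multiplicative gap between yes- and no-instances, or the paper's tactic of arguing that a finite-factor algorithm on the clique instance would let one decide whether the maximum value $c$ is attained (the paper's own write-up of that step is itself terse, but it does not rest on driving the no-instance optimum to zero, which is the specific point where your plan breaks).
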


\begin{proof}
(Sketch.) We reduce the block shuffling problem to the triple shuffling problem, which is strongly NP-complete \cite{book1980michael,andreev2004balanced}. The triple shuffling problem is defined as follows: given $t=3\cdot \rho$ integers $\alpha_0$, $\alpha_1$, $\cdots$, $\alpha_{t-1}$ and a threshold $\Omega$ such that $\Omega/4 < \alpha_i < \Omega/2$ and 
\begin{equation}
  \label{equ: block reconstruction proof}
  \sum_{i=0}^{t-1} \alpha_i = \rho \cdot \Omega \quad,
\end{equation}
the task is to partition the numbers into $\rho$ triples and DECIDE if these triples can be shuffled by swapping numbers between triples so that each triple sums up to $\Omega$. We construct a graph $G$ with cliques of size $\alpha_i$ for each integer $\alpha_i$. Let $\Omega$ and $\rho$ be the number of vertices in a block and the number of blocks in the graph layout of $G$, respectively. We demonstrate that the block shuffling problem on $G$ has a solution iff the triple shuffling problem can be solved. Consequently, the block shuffling problem is NP-hard. Assuming the existence of a polynomial time approximation algorithm with a finite approximation factor for the block shuffling problem, we could use it to solve the triple shuffling problem. However, we establish that the triple shuffling problem cannot be solved by such an approximation algorithm with a finite approximation factor. This contradicts the assumption. (The detailed proof is available in Appendix \ref{appendix: proof theorem 4.1})
\end{proof}

We design three heuristic algorithms to handle the shuffling problem. We first describe a straightforward solution (\textit{Algorithm I}), and then present two optimized algorithms (\textit{Algorithms II–III}).

\vspace{0.5em}
\noindent\textit{\textbf{Algorithm I: Block Neighbor Padding (BNP).}}
This algorithm fills disk blocks in a one-by-one fashion. To fill blocks, it checks vertices in ascending order of IDs. If a vertex $u$ is not assigned to any block yet, then the algorithm tries to assign the vertex and its neighbors to the current block. Once a block is full the algorithm opens a new block and assigns vertices to it. BNP has a time complexity of $O(|V|)$ and improves the overlap ratio $OR(G)$ by assigning vertices and their neighbors to the same block. However, the improvement is limited as some neighbors of $u$ (denoted as $z, v \in N(u)$) may not be adjacent to each other, which lowers $OR(z)$ or $OR(v)$. Also, some neighbors of $u$ may have been assigned to other blocks earlier, such as $z$ also being a neighbor of $o$ with a smaller ID than $u$. It cannot be stored with $u$ as we store each vertex only once.

\begin{figure}[!tb]
  \centering
  \setlength{\abovecaptionskip}{0.1cm}
  \setlength{\belowcaptionskip}{-0.15cm}
  \includegraphics[width=0.6\linewidth]{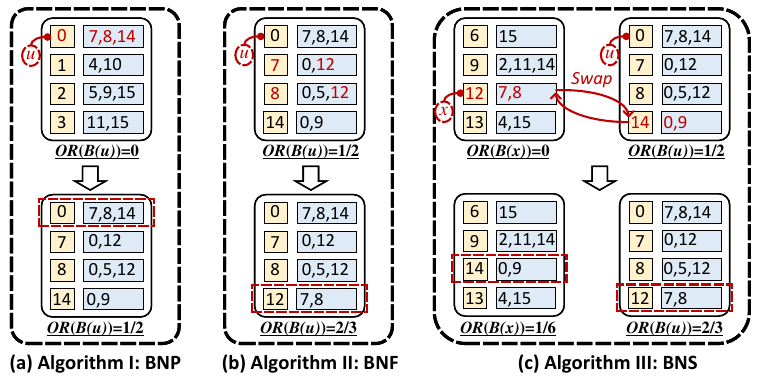}
  \caption{{Block shuffling. Refer to Fig. \ref{fig: overview_graph_layout} for graph topology.}}
  \label{fig: reordering}
\end{figure}

\begin{myExa}
  \label{exa: straightforward_solution}
  \rm In Fig. \ref{fig: reordering}(a), we have $B(u)$ $=$ $\{$ \ID{0}, \ID{1}, \ID{2}, \ID{3} $\}$ and $OR(B(u))$ $=0$ for the original layout (the ID of $u$ is \ID{0}). BNP puts \ID{0} and its neighbors \ID{7}, \ID{8}, \ID{14} in $B(u)$, so that $B(u)=\{$ \ID{0}, \ID{7}, \ID{8}, \ID{14} $\}$ and $OR(B(u)) = 1/2$. But vertex \ID{12} cannot be in the same block with its neighbors \ID{7} and \ID{8}, that is $OR(v)=0$ for vertex $v$ with ID \ID{12}.
\end{myExa}

\noindent\textit{\textbf{Algorithm II: Block Neighbor Frequency (BNF).}} To further optimize $OR(G)$, we propose BNF, which aims to assign a vertex to the block that holds most of its neighbors, i.e., the block with the highest neighbor frequency. The BNF takes the result of BNP as an initial layout and optimizes $OR(G)$ iteratively. As illustrated in Algorithm \hyperref[alg: npf]{1}: (1) It stores the current mapping of vertex IDs to block IDs and clears all the blocks of $G$ (lines 3–5). (2) For each $u$ in $V$, BNF tries to assign it to the blocks that hold its neighbors in the previous iteration (lines 6–14). The algorithm checks blocks in descending order of neighbor count and assigns $u$ to the first block that is not full yet. If all the blocks are full, then BNF puts $u$ in an empty block in $\mathcal{B}$ (lines 13–14). (3) It repeats this process until it reaches the iteration limit $\beta$ or the $OR(G)$ gain between two iterations falls below a given threshold (lines 15–16). Then a new layout of $G$ is returned. {BNF has a time complexity of $O(\beta \cdot o \cdot|V|)$, where $\beta$ is the number of iterations, $o$ is the average out-degree, and $|V|$ is the number of vertices. This is because BNF needs to access the neighbors of all vertices in each iteration.}

\setlength{\textfloatsep}{0pt}
\begin{algorithm}[t]
\label{alg: npf}
  \caption{\textsc{Block Shuffling by BNF}}
  \LinesNumbered
  \KwIn{block-level graph layout of $G=(V,E)$ from BNP, maximum iterations $\beta$, $OR(G)$ gain threshold $\tau$}
  \KwOut{new block-level graph layout of $G$}
  
  $\mathcal{B}$$=$$\{B_0, \cdots, B_{\rho-1} \} \gets$ all blocks \textcolor{blue}{\Comment{\textsf{$\rho$ is the number of blocks}}}
  
  \While{iterations $\leq \beta$}{
    $D$ $\gets$ mapping of vertex IDs to block IDs;
    
    
    \ForAll{$B_i\in \mathcal{B}$}{
      $B_i \gets \emptyset$; \textcolor{blue}{\Comment{\textsf{clear all blocks}}}
    }
    \ForAll{$u \in V$}{
      $H$ $\gets$ $\bigcup_{a\in N(u)}\{D(a)\}$; \textcolor{blue}{\Comment{\textsf{all neighbors' block IDs}}}
    
      \While{$H \neq \emptyset$}{
        $x \gets$ block ID with the most neighbors in $H$;
        
        \If{$B_x$ is not full}{
          $B_x \gets B_x \cup \{u\}$;
          break;
        }
        $H$ = $H \setminus \{x \}$; \textcolor{blue}{\Comment{\textsf{remove the block that is full}}}
      }
      
      \If{$H = \emptyset$}{
        add $u$ to a empty block in $\mathcal{B}$;
      }
    }
    
    \If{$OR(G)$ gain $< \tau$}{
      break;
    }
  }
  
  \textbf{return} new layout of $G$
\end{algorithm}
\setlength{\textfloatsep}{12pt plus 2pt minus 2pt}

\begin{myExa}
  \label{exa: npf}
  \rm As shown in Fig. \ref{fig: reordering}(b), BNF replaces \ID{14} with \ID{12} in $B(u)$, since \ID{12} has two neighbors, \ID{7} and \ID{8}, in the block. Thus, $B(u)$ becomes $\{$ \ID{0}, \ID{7}, \ID{8}, \ID{12} $\}$ and $OR(B(u))=2/3$.
\end{myExa}

\noindent\textit{\textbf{Algorithm III: Block Neighbor Swap (BNS).}}
The design of BNS is inspired by the NN-Descent method \cite{NNDescent} in the literature of graph construction. BNS refines $OR(G)$ from an initial layout, which could be the result of BNP or BNF. In BNS, for a pair of vertices $a$ and $e$ that are neighbors of vertex $u$ and belong to different blocks $B(a)$ and $B(e)$, BNS swaps vertices with the lowest overlap ratio ($OR$) in $B(a)$ and $B(e)$ to increase the sum of $OR(B(a))$ and $OR(B(e))$. {Let $o$ be the average out-degree, then the number of swaps is $o^2$ for each vertex in $V$. In each swap, the time complexity of computing one block's $OR$ is $O(o\cdot \varepsilon)$, where $\varepsilon$ is the number of vertices in a block. BNS operates in an iterative manner, where each iteration checks vertex pairs among the neighbor set of all vertices. Therefore, the time complexity of BNS is $O(\beta \cdot o^3 \cdot \varepsilon \cdot |V|)$, considering the number of iterations $\beta$.} We prove that $OR(G)$ does not decrease with the number of iterations in Lemma \ref{lemma: nps OR non-decreasing}.

\begin{myExa}
  \label{exa: npd}
  \rm In Fig. \ref{fig: reordering}(c), BNS identifies the blocks $B(u)$ and $B(x)$ that contain vertices \ID{0} and \ID{12}, respectively, from the neighbors of vertex \ID{7}. Then, it swaps vertices \ID{12} and \ID{14}, which have the lowest $OR$ in their blocks and can increase $(OR(B(x))+OR(B(u)))$ by swapping. Finally, we get $OR(B(x))=1/6$ and $OR(B(u))=2/3$.
\end{myExa}

\begin{lemma}
\label{lemma: nps OR non-decreasing}
In BNS, the $OR(G)$ is a monotonically non-decreasing function of the number of iterations $\beta$.
\end{lemma}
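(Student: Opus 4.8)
The plan is to exploit the fact that $OR(G)$ decomposes additively over vertices and that a single swap touches only two blocks. First I would rewrite the objective in a form that makes the effect of a swap transparent. Since the blocks partition $V$ and $OR(B)=\frac{1}{|B|}\sum_{v\in B}OR(v)$, we have
$$OR(G)=\frac{1}{|V|}\sum_{u\in V}OR(u)=\frac{1}{|V|}\sum_{B}\sum_{v\in B}OR(v)=\frac{1}{|V|}\sum_{B}|B|\cdot OR(B).$$
This rewriting is the backbone of the argument: it ties the global quantity we want to control to the per-block quantities that BNS actually manipulates.

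The key structural observation I would establish next is \emph{locality of a swap}: when BNS exchanges a vertex $p\in B(a)$ with a vertex $q\in B(e)$, the value $OR(w)$ is unchanged for every $w\notin B(a)\cup B(e)$. Indeed, $OR(w)=|B(w)\cap N(w)|/(|B(w)|-1)$ depends only on the membership of $w$'s own block and on $N(w)$; the former is untouched for every block other than $B(a)$ and $B(e)$, and the latter is fixed because block shuffling never alters the edge set $E$ (the graph topology is preserved). Hence the change in the global objective is confined to the two blocks involved:
$$\Delta OR(G)=\frac{1}{|V|}\bigl[\Delta S(B(a))+\Delta S(B(e))\bigr],\qquad S(B):=\textstyle\sum_{v\in B}OR(v).$$

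Since a swap is a one-for-one exchange it preserves the size of each block, so $S(B)=|B|\cdot OR(B)$ with $|B|$ constant across the swap. When the two blocks share the common size $\varepsilon$ (the generic case, as at most one block can differ when $\varepsilon\nmid|V|$), this yields $\Delta OR(G)=\frac{\varepsilon}{|V|}\bigl[\Delta OR(B(a))+\Delta OR(B(e))\bigr]$. BNS accepts a swap only when it increases $OR(B(a))+OR(B(e))$, so the bracket is non-negative and $\Delta OR(G)\ge 0$ for every accepted swap. Finally, because each BNS iteration is a finite sequence of such swaps, each individually non-decreasing, the value of $OR(G)$ after any iteration is at least its value before, and monotonicity in $\beta$ follows by induction over iterations.

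The main obstacle I anticipate is the mismatch between the \emph{unweighted} acceptance criterion $OR(B(a))+OR(B(e))$ and the \emph{size-weighted} contribution of the blocks to $OR(G)$. The clean cancellation above relies on the two swapped blocks sharing the same size; if one of them were the single undersized block, increasing the unweighted sum would not immediately certify an increase in the weighted total. I would handle this either by restricting the argument to equal-sized blocks (essentially always the situation, since there is at most one exceptional block) or by phrasing BNS's acceptance test on the weighted quantity $S(B(a))+S(B(e))$, so that non-negativity is exact in all cases. Spelling out the locality claim carefully---in particular that no vertex outside the two blocks has its $OR$ perturbed---is the only other point requiring care, and it reduces to the invariance of $N(\cdot)$ under shuffling.
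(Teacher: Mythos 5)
Your proposal follows essentially the same route as the paper's proof: a swap is local to the two blocks $B(a)$ and $B(e)$, and BNS only accepts swaps that increase $OR(B(a))+OR(B(e))$, so $OR(G)$ cannot decrease across an iteration. In fact your write-up is more careful than the paper's, which never explicitly bridges the gap between the unweighted acceptance test on $OR(B(a))+OR(B(e))$ and the size-weighted contribution $\sum_B |B|\cdot OR(B)$ to the global objective; your observation that this cancellation requires the two blocks to have equal size (true except for the single undersized block when $\varepsilon\nmid|V|$) is a legitimate edge case the paper silently skips, and either of your proposed fixes closes it.
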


\begin{proof}
In each iteration, an update is local and affects only two blocks' vertices. For any two neighbors $a$ and $e$ of vertex $u$ ($B(a)\neq B(e)$), we prove that swapping vertices does not lower the sum of $OR(B(a))$ and $OR(B(e))$. Let $OR(B(a))_i$, $OR(B(e))_i$ and $OR(B(a))_j$, $OR(B(e))_j$ be the values of $OR(B(a))$, $OR(B(e))$ before and after an iteration, respectively. We swap two vertices in $B(a)$ and $B(e)$ only if $OR(B(a))_j$ + $OR(B(e))_j$ $>$ $OR(B(a))_i$ + $OR(B(e))_i$. Therefore, the sum of $OR(B(a))$ and $OR(B(e))$ is non-decreasing.
\end{proof}

\vspace{-0.25cm}
\noindent\textbf{Analysis.} Among our three block shuffling algorithms, BNP emerges as the fastest since it scans all vertices just once. On the other hand, BNF's efficiency is contingent on the number of iterations and does not ensure the convergence of $OR(G)$. However, BNF demonstrates proficiency, both in terms of efficiency and effectiveness, particularly on numerous real-world datasets. Meanwhile, BNS, although possessing the highest time complexity, guarantees that $OR(G)$ does not decrease with iterations. In our implementation, we have parallelized BNF and BNS to enhance their speed. All three algorithms notably improve $OR(G)$ in contrast to the original graph layout, with BNS exhibiting the most significant improvement, followed by BNF and then BNP. In our evaluation (\textbf{\S \ref{subsec: ablation_study}}), higher $OR(G)$ improves vertex utilization ratio and search performance. In practice, the choice of algorithm can be tailored to specific requirements. Generally, we recommend BNF due to its adept balance between efficiency and effectiveness.

\vspace{0.2em}
\noindent\textbf{Time cost.} {We analyze the extra time cost caused by block shuffling. A recent study of current graph algorithms \cite{graph_survey_vldb2021} shows that the graph index construction time complexity is always $O(|V|\log(|V|))$. The index construction involves high-dimensional vector distance calculation, which is very time-consuming. However, block shuffling only scans vertices and performs simple statistics, without any vector calculation. According to our evaluation, the block shuffling procedure introduces a relatively low additional time cost compared to the graph index construction process. For example, BNF only occupies 3\%$\sim$10\% of the graph index construction cost (see Fig. \ref{fig: index cost}(a)).}

\vspace{0.2em}
\noindent\textbf{Space cost.} {The disk-based graph index is stored as adjacency lists of vertices. Each vertex contains its vector data, neighbor count, and a list of neighbor IDs. Let $\Lambda$ and $D$ represent the maximum number of neighbor IDs and the vector dimensionality, respectively. The space complexity of the disk-based graph index is $O(|V| \cdot (D + \Lambda))$. In our experiments, we adjust $\Lambda$ to conform to the disk capacity constraint of a segment.
Note that the space cost of the disk-based graph index remains unchanged before and after block shuffling. This is because we only adjust the order of vertices and do not add any extra information.}

\vspace{0.2em}
\noindent\textbf{Remarks.} (1) Our block shuffling methods can work with any block size, not just the default 4KB for general disk. For example, we can extend to 8KB or 16KB blocks by modifying the block size. (2) {Block shuffling is similar to but not identical to graph partitioning. Graph partitioning has been extensively studied for real-world graphs (e.g., Social Networks \cite{pacaci2019experimental,WeiYLL16}) in graph engines \cite{Pregel,FlashGraph,Galois,Ligra}, where the out-degree follows a power-law distribution (neighbors tend to cluster together) \cite{abbas2018streaming}. However, our graph index is based on high-dimensional vectors, where neighbors exhibit similarity and navigation traits (with about 50\% long links \cite{graph_survey_vldb2021}) and the out-degree distribution is uniform (neighbors may scatter across clusters) \cite{graph_survey_vldb2021}, making locality more challenging. Moreover, current graph partitioning methods thrive on real-world graphs with clustering properties but may falter in graph index for vectors. We evaluated some advanced graph partitioning methods for our block shuffling task, but they only gave limited improvement. For example, BNF shows 40\% higher $OR(G)$ than an advanced graph partitioning method—KGGGP \cite{predari2016k} for graph index built on the SSNPP dataset. Our block shuffling strategies are tailored for graph index with long navigation links \cite{HNSW} and prove well-suited under our problem setting.} (3) Some recent works use some graph partitioning methods to reorder graph indexes \cite{coleman2021graph,jaiswal2022ood}, but they only achieve limited improvement. In contrast, we design the block shuffling algorithms based on block-level graph layout on the disk, which leads to better disk-based HVSS performance.

\subsection{In-Memory Navigation Graph}
\label{subsec: navigation_graph}
To reduce the search path length ($\ell$), many in-memory graph-based algorithms use an additional structure, such as multi-level Voronoi diagrams \cite{HVS}. The disk I/O cost depends directly on $\ell$ for searching on the disk-based graph index, so a shorter $\ell$ is even more crucial.

Our in-memory navigation graph reduces $\ell$ by finding better entry points for the disk-based graph search. To obtain such a graph, we employ two steps: (1) \textit{sample data points} and (2) \textit{build a graph index}. \underline{First}, we randomly sample some data points from all the data in a segment, based on the memory limit of a segment. \underline{Second}, we use the same algorithm as the disk-based graph (such as HNSW \cite{HNSW}, NSG \cite{NSG}) to build a navigation graph on the sampled data. The navigation graph is memory-resident and can quickly return the dynamic entry points that are close to a query.

\vspace{0.2em}
\noindent\textbf{Time cost.} {{\name} builds an in-memory graph only on a small data subset $V^{\prime}$, lowering the time complexity to $O(|V^{\prime}|\log(|V^{\prime}|))$, where $|V^{\prime}|$ is less than 10\% of the total number of vertices in a segment. This process is notably faster than constructing the graph on the entire dataset. In our evaluation, the in-memory graph construction demonstrates a low time cost (e.g., only 5.5\% of the total index processing time in Fig. \ref{fig: index cost}(a)).}

\vspace{0.2em}
\noindent\textbf{Space cost.} {The in-memory graph and the disk-based graph share the same storage format. Let $\Lambda^{\prime}$ and $D$ denote the maximum number of neighbor IDs and the vector dimensionality, respectively. The space complexity of the in-memory graph is $O(|V^{\prime}| \cdot (D + \Lambda^{\prime}))$. In our implementation, we adjust $|V^{\prime}|$ and $\Lambda^{\prime}$ to adhere to the memory limitation of a segment.}
\section{Search Strategy}
\label{sec: search}
The search strategy of {\name} consists of two components: (1) \textit{vertex search on the in-memory navigation graph} and (2) \textit{block search on the disk-resident graph}. The first component is designed to quickly navigate to the query's neighborhood without requiring disk access. It employs the same vertex search strategy as existing graph algorithms. The vertex search results serve as the entry points for the second component. To exploit the improved data locality, we utilize block search to explore not only the target vertex but also other vertices in the same block for each disk I/O. {Next, we will outline the fundamental block search strategy and its optimizations. Following that, we will show how we build the ANNS and RS algorithms based on the block search.}

\subsection{Block Search}
\label{subsec: basic_page_search}
\name{} {offline} assigns vertices and their neighbors to the same block to improve data locality. This way, one {online} block read from disk provides multiple relevant vertices. The block search strategy efficiently explores all the useful data in a block. It updates the current search results by calculating the distance from each vertex in the block to the query. It also checks the neighbor IDs of the closer vertices for new search candidates. \name{} further enhances this strategy with three computation-specific optimizations.

\vspace{0.3em}
\noindent\textbf{Block pruning.} We aim for a graph layout with $OR(G)=1$, meaning every vertex accessed by the block search is necessary. However, this ideal layout is very hard or impossible to achieve for a graph index built on high-dimensional vectors. Usually, our block shuffling results in $OR(G)$ ranging from 0.3 to 0.6 on many real-world datasets (Fig. \ref{fig: data_locality}(a)). This implies that some vertices in a block are irrelevant. To avoid exploring irrelevant data, {\name} prunes each loaded block. Specifically, it sorts the vertices in a block by their distance to the query vector in ascending order. Then, it only checks the neighbor IDs of the top-($(\varepsilon-1)\cdot \sigma$) vertices for new search candidates, where $(\varepsilon-1)$ is the number of vertices excluding the target vertex in a block and $\sigma$ is pruning ratio ($0<\sigma \leq 1$). This way, the neighbor IDs of distant vertices are discarded early. In our evaluation, $\sigma=0.3$ is always the optimal value for better performance.

\vspace{0.3em}
\noindent\textbf{I/O and computation pipeline.} In the block search stage, disk read (\textsf{DR}) and distance computation (\textsf{DC}) are two main operations. They occupy about 80\% of the block search time (see Fig. \ref{fig: block search optimization}(d)). A naive implementation is to execute \textsf{DR} and \textsf{DC} serially—that is—load a block by \textsf{DR} and then select the next hop by \textsf{DC} (as done in DiskANN \cite{DiskANN}). However, this schedule is inefficient because \textsf{DR} and \textsf{DC} are idle alternately. This wastes disk bandwidth and CPU. To avoid this, {\name} uses an I/O and computation pipeline that performs \textsf{DR} and \textsf{DC} concurrently (cf. \hyperref[alg: knn]{Algorithm 2}). Specifically, {\name} first executes \textsf{DC} for the target vertex $u$ in the current loaded block (lines 6–7). Then, it immediately conducts the next \textsf{DR}, while also performing \textsf{DC} of other vertices in the block that contains $u$ (line 11). Although each current \textsf{DR} decision does not consider the non-target vertices from the previous \textsf{DR}, it is acceptable as {\name} fully utilizes the disk bandwidth and CPU. We will show in {\S \ref{subsec: ablation_study}}, that I/O and computation pipeline reduces query latency significantly.

\vspace{0.3em}
\noindent\textbf{PQ-based approximate distance.} Recall that we need to load the full-precision vectors of all neighbors of the visited vertex to determine the next hop. Block shuffling mitigates such disk access by filling a block with many neighboring vertices. However, the number of neighbor IDs of each vertex is usually several times the number of vertices in a block, so many neighbors still require extra disk I/Os. To address this issue, we use approximate distance instead of exact distance with full-precision vectors. This is based on the observation that the next hop decision can be made by approximate computation with little accuracy damage \cite{chen2022finger}. Specifically, {\name} preprocesses the full dataset by PQ \cite{PQ} (like DiskANN \cite{DiskANN}), a popular compression method. It encodes the full-precision vectors into short codes that reside in the main memory. Indeed, {\name} efficiently obtains the approximate distances between the neighbors (whose full-precision vectors are not in memory) and the query by the short codes without disk I/Os to make the next disk read decision.

\vspace{0.3em}
\noindent\textbf{Time cost analysis.} {Let $\xi$ be the vertex utilization ratio and $\varepsilon$ be the number of vertices in a disk block. Then $\xi\cdot \varepsilon$ ($\geq 1$) represents the number of vertices accessed in each disk I/O. The vertex access complexity of the graph index is $O(o\cdot \ell)$ \cite{graph_survey_vldb2021}, where $\ell$ is the search path length and $o$ is the average out-degree of visited vertices. Therefore, the I/O time complexity of searching on the disk-based graph is $O((o\cdot \ell)/(\xi \cdot \varepsilon))$. In addition, the computation-specific block search optimizations further enhance search efficiency.}

\subsection{Approximate Nearest Neighbor Search}\label{subsec: knn}
The ANNS strategy in {\name} uses two ordered lists to store the candidates and search results: the candidate set and the result set. It starts the search by adding the entry points of the disk-based graph to the candidate set. These entry points are obtained on the in-memory navigation graph. Then it selects the unvisited vertex that is closest to the query from the candidate set and performs a block search. It updates both sets with the block search outcome. The procedure ends when all the vertices in the candidate set are visited. To improve efficiency, the strategy limits the size of the candidate set to prevent too many candidate vertices. The result set has no size limit because it is sorted only when the search terminates.

\setlength{\textfloatsep}{-5pt}
\begin{algorithm}[t]
\label{alg: knn}
  \caption{\textsc{ANNS}}
  \LinesNumbered
  \KwIn{in-memory navigation graph $G_m$, PQ short codes, disk-based graph $G_d$, query $q$, candidate set $C$ with fixed size, result set $R$, pruning ratio $\sigma$}
  \KwOut{top-$k$ results of $q$}
  
  $S$ $\gets$ entry points of $q$ by a vertex search strategy on $G_m$;

  $C$ $\gets$ $S$; \textcolor{blue}{\Comment{\textsf{sort by PQ distance to $q$}}}

  $R$ $\gets$ $S$; \textcolor{blue}{\Comment{\textsf{compute exact distance to $q$}}}

  $u$ $\gets$ top-1 unvisited vertex in $C$; \textcolor{blue}{\Comment{\textsf{target vertex}}}

  $B$ $\gets$ load the block including $u$ from $G_d$; \textcolor{blue}{\Comment{\textsf{DR}}}

  update $C$ according to $u$'s neighbor IDs; \textcolor{blue}{\Comment{\textsf{DC}}}

  add $u$ to $R$ based on $u$'s full-precision vector; \textcolor{blue}{\Comment{\textsf{DC}}}

  $B^{\prime}$ $\gets$ top-($(\varepsilon-1) \cdot \sigma$) vertices in $B \setminus \{u\}$; \textcolor{blue}{\Comment{\textsf{block pruning}}}
  
  \While{$C$ has unvisited vertex}{
    $v$ $\gets$ top-1 unvisited vertex in $C$; \textcolor{blue}{\Comment{\textsf{PQ-based routing}}}
    
    conduct line 5 for $v$, and lines 6–7 for the vertices in $B^{\prime}$ in parallel;
    
    \textcolor{blue}{\Comment{\textsf{I/O and computation pipeline}}}

    conduct lines 6–8 for $v$ and the block including it;
    
  }
  
  \textbf{return} top-$k$ vertices in $R$\textcolor{blue}{\Comment{\textsf{sort by exact distance to $q$}}}
\end{algorithm}
\setlength{\textfloatsep}{12pt plus 2pt minus 2pt}

{\name} performs ANNS for a query $q$ as follows (cf. \hyperref[alg: knn]{Algorithm 2}). (1) It finds the entry points $S$ of $q$ on the in-memory navigation graph $G_m$ (line 1). (2) It initializes a fixed-size candidate set $C$ and a result set $R$ with $S$ (lines 2–3). It sorts $C$ by approximate distance using PQ short codes. (3) It picks the top-1 unvisited vertex $u$ in $C$ as the target vertex and reads the block $B$ containing $u$ from the disk (lines 4–5). (4) It adds $u$'s neighbor IDs to $C$ and $u$ to $R$ (lines 6–7). (5) It prunes some vertices that are far from $q$ from $B$, adding the closer vertices into $B^{\prime}$ (line 8). (6) It executes (3) for the next top-1 unvisited vertex $v$ and (4) for the vertices in $B^{\prime}$ in parallel, then it executes (4)–(5) for $v$ (lines 10–12). (7) It terminates when $C$ has no unvisited vertices. Finally, it sorts $R$ by exact distance using full-precision vectors and returns top-$k$ results in $R$.

\subsection{Range Search}
\label{subsec: rs}
A range search (RS) query returns all the vectors within a search radius $r$. The result length depends on the vector distribution and can vary a lot among queries. For a dataset with tens of millions of vectors, the same $r$ may give zero to thousands of results. So, the RS algorithm should handle different result lengths. A simple RS strategy is to do ANNS repeatedly with different $k$ values to check the result length. However, this is inefficient because it will revisit the same vertices multiple times, causing extra computation and disk I/Os.

{\name} performs RS for a query $q$ based on block search. It uses a candidate set $C$, a result set $R$, and a kicked set $P$ to store candidate vertices, results, and vertices kicked out from $C$, respectively. {\name} changes the length limit of $C$ dynamically to handle different result lengths. The steps of RS are as follows. (1) It obtains the entry points from the in-memory navigation graph and initializes $C$ and $R$ with them. (2) It iteratively explores $C$ and updates $C$ and $R$ (like ANNS). It also adds the unvisited vertices kicked out from $C$ to $P$. (3) When all the vertices in $C$ are visited, it calculates the ratio of $R$'s length to $C$'s length. Given a ratio threshold $\varphi$, if
\begin{equation}
  \label{equ: rs ratio}
  \frac{|R|}{|C|}\geq \varphi \quad,
\end{equation}
it doubles the size of $C$ and restarts the search. This is because a high ratio means that most candidates are results. In this case, exploring more vertices may find new results. (4) In the next search, {\name} adds the closer vertices (w.r.t $q$) in $P$ to $C$ and repeats (2)–(3) for unvisited candidates in $C$. (5) The search stops when Eq. \ref{equ: rs ratio} is not met. After changing the length of $C$, it resumes the search with the previous results in $R$, candidates in $C$, and some closer vertices (w.r.t $q$) in $P$. This avoids extra computation and disk access. Our evaluation shows that $\varphi=0.5$ is optimal.

\section{Experiments}\label{sec: exp}
We evaluate the following aspects of {\name}: (1) search performance (\textbf{\S \ref{subsec: intra-seg}}), (2) I/O-efficiency (\textbf{\S \ref{subsec: i/o efficiency}}), (3) index cost (\textbf{\S \ref{subsec: index cost}}), (4) ablation study (\textbf{\S \ref{subsec: ablation_study}}), (5) parameter sensitivity (\textbf{\S \ref{subsec: param_sensi}}), (6) scalability (\textbf{\S \ref{subsec: scalability}}), {(7) query distribution (\textbf{\S \ref{subsec: differnt query type}}), (8) segment setup (\textbf{\S \ref{subsec: segment setup}}), (9) large-scale search results (\textbf{\S \ref{subsec: big top-k result}}), and (10) billion-scale data (\textbf{\S \ref{subsec: billion-scale evaluation}}).} Our source code and datasets are available at: \url{https://github.com/zilliztech/starling}. Kindly refer to Appendix for additional evaluations.

\subsection{Experimental Setting}

\setlength{\textfloatsep}{0cm}
\setlength{\floatsep}{0cm}
\begin{table}[!th]
  \centering
  \setlength{\abovecaptionskip}{0.05cm}
  \setlength{\belowcaptionskip}{-0.3cm}
  \setstretch{0.8}
  \fontsize{6.5pt}{3.3mm}\selectfont
  \caption{Statistics of experimental datasets on a segment.}
  \label{tab: Dataset}
  \setlength{\tabcolsep}{.006\linewidth}{
  \begin{tabular}{|l|l|l|l|l|l|l|}
    \hline
    \textbf{Dataset} & \textbf{Data type} & \textbf{Dimensions} & \textbf{Distance function} & \textbf{\# Base vector per segment} & \textbf{\# Query} & \textbf{Query type}\\
    \hline
    BIGANN & uint8 & 128 & L2 & 33M & $10^4$ & ANNS/RS \\
    \hline
    DEEP & float & 96 & L2 & 11M & $10^4$ & ANNS/RS \\
    \hline
    SSNPP & uint8 & 256 & L2 & 16M & $10^5$ & RS \\
    \hline
    Text2image & float & 200 & IP & 5M & $10^5$ & ANNS \\
    \hline
  \end{tabular}
  }\vspace{0.2cm}
\end{table}

\noindent\textbf{Datasets.} We use four public real-world datasets \cite{bigann} that vary in data type, dimensions, distance, and query type. Tab. \ref{tab: Dataset} shows their details. Unless specified, we limit the raw vectors per segment to under 4GB and adjust the data scale accordingly for each dataset. {We randomly select vectors from standard datasets for a segment. We perform a brute-force search on the selected vectors to get the ground truth. The query sets are the same for a given dataset.}

\vspace{0.2em}
\noindent\textbf{Query workloads.} {The query workloads' details are provided in Tab. \ref{tab: Dataset}. By default, all query sets are derived from real-world scenarios and are not-in-database, meaning they do not have any intersection with the base data. The queries are executed in a random order, and a batch of queries is served using a pool of threads. Each thread is assigned to handle one query at a time.}

\vspace{0.2em}
\noindent\textbf{Compared methods.} We compare {\name} with two state-of-the-art disk-based HVSS methods that are able to {process up to 4GB vector data within the 2GB memory constraint}. We do not include in-memory methods (e.g., HNSW \cite{HNSW} and IVFPQ \cite{PQ}) in the evaluation, because they either run out of memory or have very low accuracy, as reported in previous works \cite{DiskANN,SPANN,HM_ANN}.

\begin{itemize}[leftmargin=*]
  \item \textbf{DiskANN} \cite{DiskANN} is the state-of-the-art disk-based graph index method, which we use as the baseline framework.
  \item \textbf{SPANN} \cite{SPANN} is a disk-based inverted index method that achieves better performance than DiskANN but requires more disk space.
  \item \boldsymbol{\name} is our framework, which implements all the optimizations proposed in this paper. Unless otherwise specified, we use the Vamana algorithms as the default option in {\name}, denoted as {\name}-Vamana or simply {\name}.
\end{itemize}

\noindent\textbf{Evaluation protocol.} We use \textit{queries per second (QPS)}, \textit{mean latency}, and \textit{mean I/Os} to evaluate the search performance of all competitors. By default, we use eight threads on the server to serve queries and report the \textit{QPS} based on the wall clock time from the query input to the result output. For the latency and I/Os, we record the response time and number of disk I/Os for each query and compute the \textit{mean latency} and \textit{mean I/Os} over all queries. For ANNS, we measure its accuracy by $Recall$ (Eq. \ref{equ: recall}). Unless otherwise stated, we set $k=10$ in \textit{Recall}. For RS, $AP$ (Eq. \ref{equ: ap}) is a more suitable accuracy metric. We fix a search radius $r$ for each dataset following \cite{neurips21_competition_report}. We evaluate two popular similarity metrics, $L2$ and inner product ($IP$).

\vspace{0.2em}
\noindent\textbf{Segment configuration.} We follow the configuration of a mainstream open-source vector database, Milvus \cite{milvus_archi, Manu_zilliz}, with at most 2GB memory space and 10GB disk capacity for each segment by default.

\vspace{0.2em}
\noindent\textbf{Setup.} We execute the C++ codes of all methods using different instances for both index building and search to align with the protocols of current vector databases \cite{Manu_zilliz} and related research \cite{graph_survey_vldb2021,NSG}. For index building, we employ a n2-standard-64 instance (ubuntu-2004-focal-v20221121, 2 TB SSD Persistent Disk) with 64 vCPUs—this facilitates fast construction and segment sharing. In terms of search functionality, vector databases typically leverage multiple smaller instances to improve query performance and promote fine-grained load balancing and scaling. In our experiments, all segments share a n2-standard-8 instance (ubuntu-2004-focal-v20221121, 375 GB NVMe Local SSD Scratch Disk) with 8 vCPUs. {We use the \textsf{o\_direct} option to read data from the disk for all competitors, circumventing operating system caching.} We optimize all approaches' hyper-parameters for the segment's configuration. We report the average results of three trials on the optimum configuration.

\subsection{Search Performance} \label{subsec: intra-seg}

\setlength{\textfloatsep}{0cm}
\setlength{\floatsep}{0cm}
\begin{figure*}[!th]
\setlength{\abovecaptionskip}{0cm}
\setstretch{0.9}
\fontsize{8pt}{4mm}\selectfont
\begin{minipage}{0.575\textwidth}
  \setlength{\abovecaptionskip}{0cm}
  \setlength{\belowcaptionskip}{0.1cm}
  \centering
  \footnotesize
  \stackunder[0.8pt]{\includegraphics[scale=0.14]{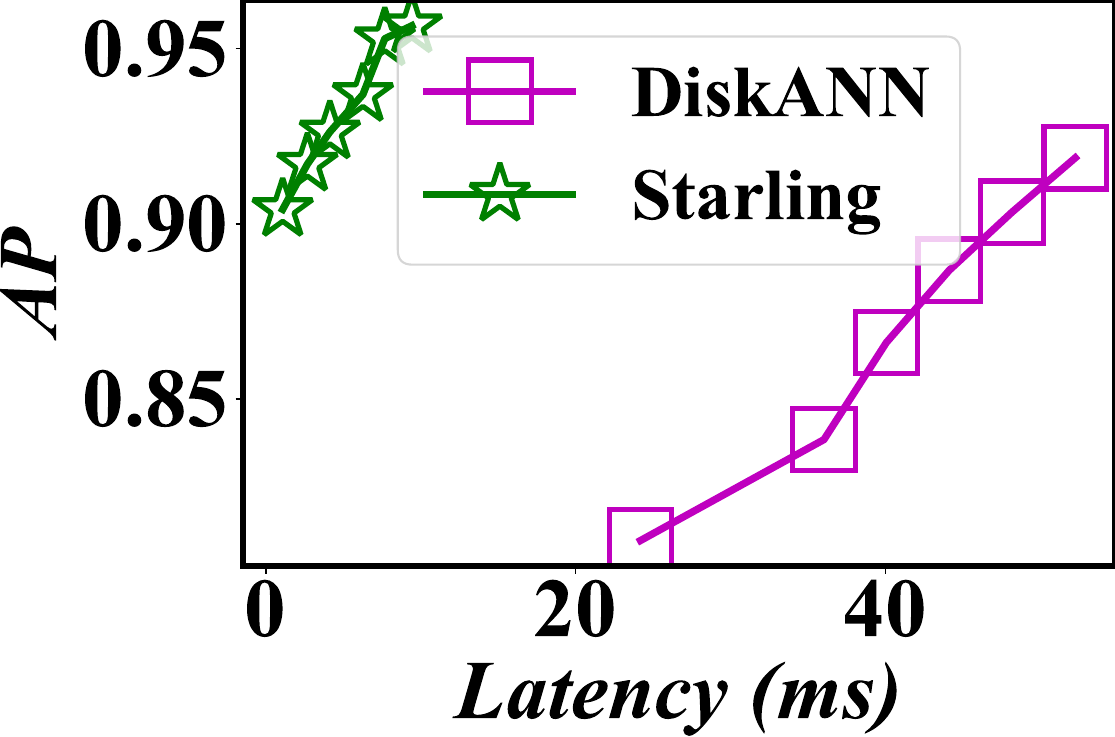}}{(a) BIGANN}
  \hspace{-0.1cm}
  \stackunder[0.75pt]{\includegraphics[scale=0.14]{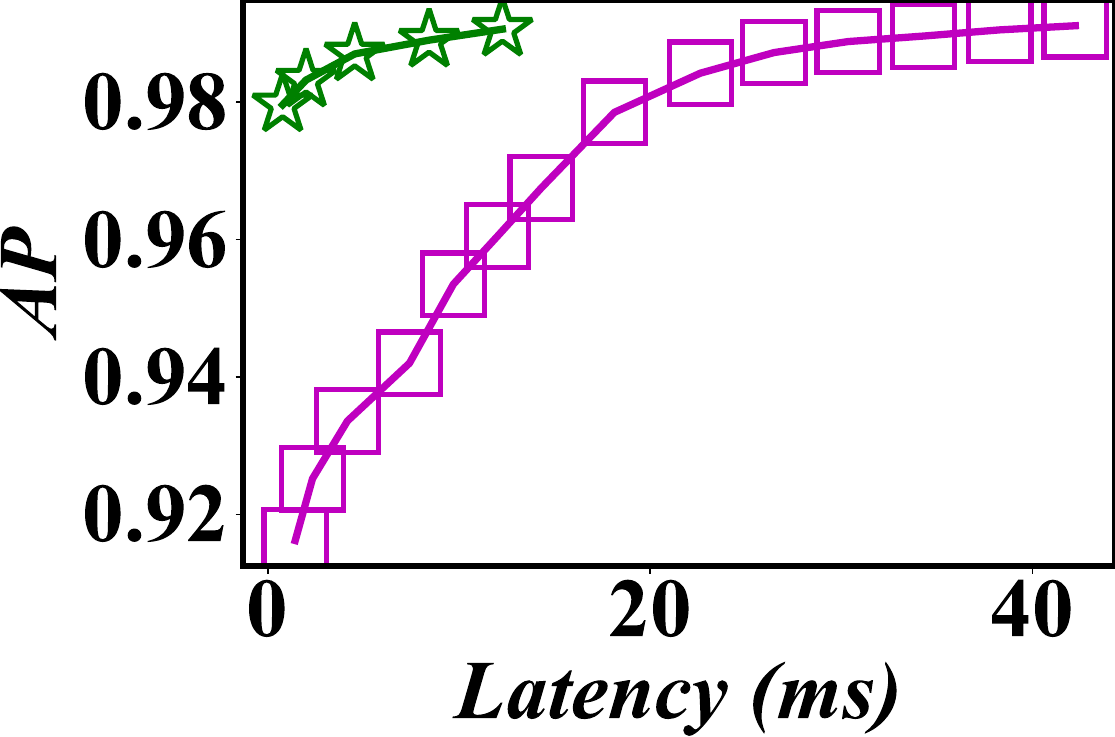}}{(b) DEEP}
  \hspace{-0.1cm}
  \stackunder[0.75pt]{\includegraphics[scale=0.14]{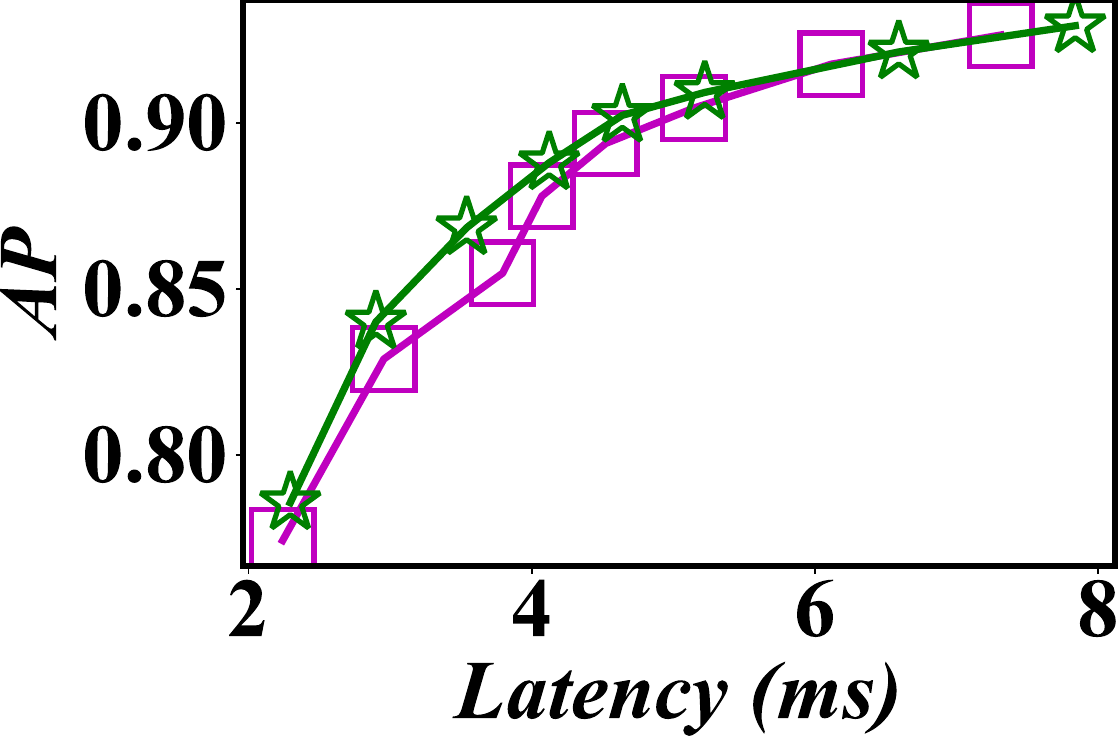}}{(a) SSNPP}
  \hspace{-0.1cm}
  \caption{RS performance (\textit{AP} vs \textit{Latency}).}
  \label{fig: latency_rs}
\end{minipage}
\begin{minipage}{0.42\textwidth}
  \setlength{\abovecaptionskip}{0cm}
  \setlength{\belowcaptionskip}{0.1cm}
  \centering
  \footnotesize
  \stackunder[0.75pt]{\includegraphics[scale=0.14]{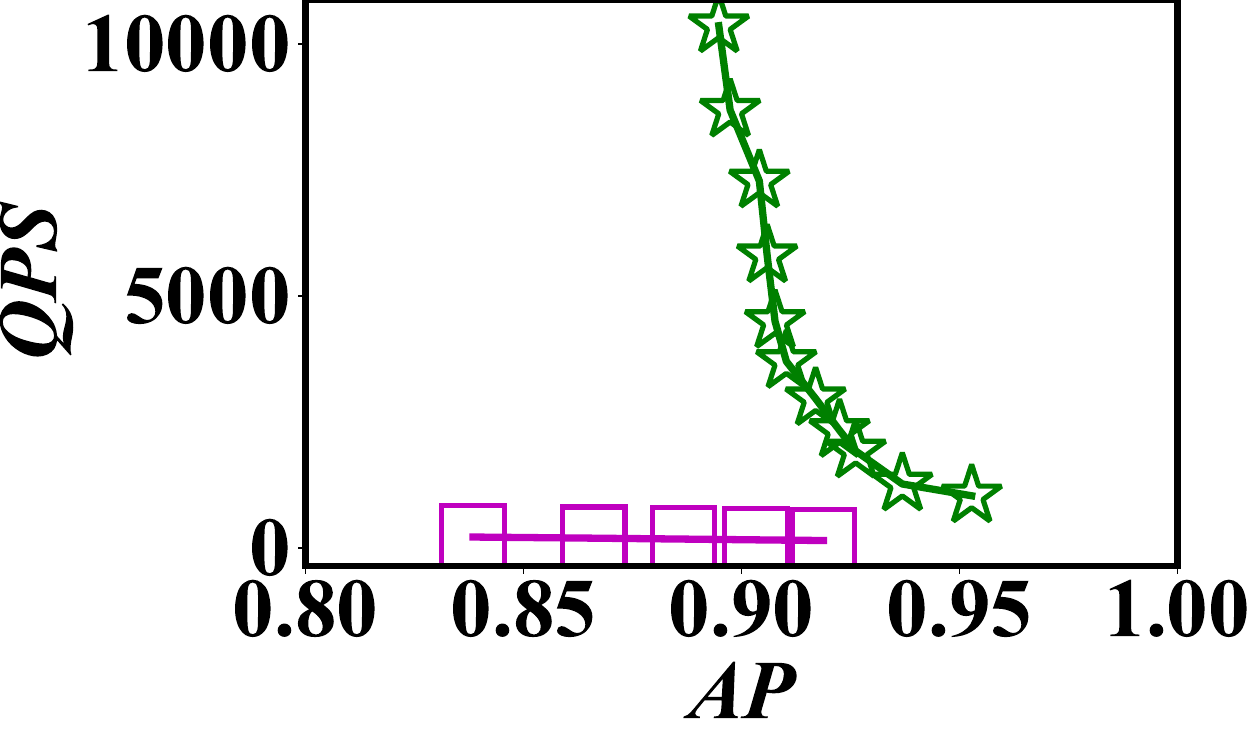}}{(a) BIGANN}
  \hspace{-0.2cm}
  \stackunder[0.75pt]{\includegraphics[scale=0.14]{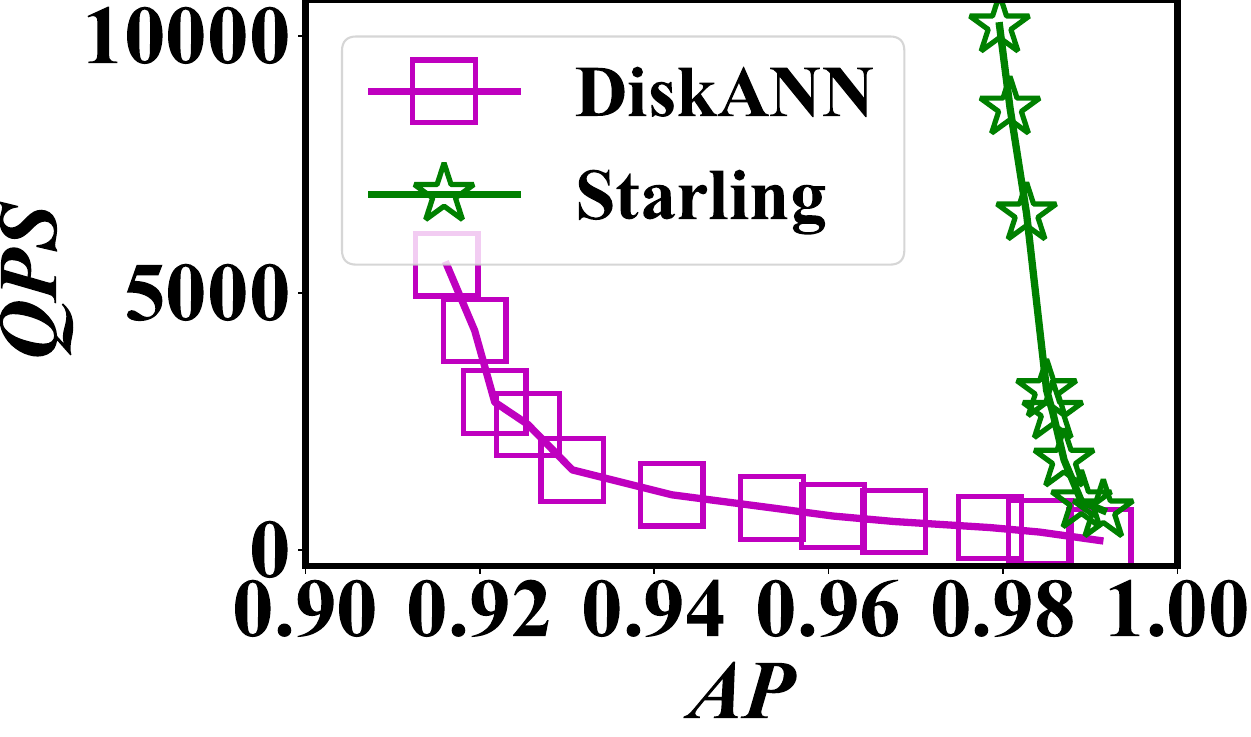}}{(b) DEEP}
  \hspace{-0.2cm}
  \caption{RS performance (\textit{QPS} vs \textit{AP}).}
  \label{fig: qps_rs}
\end{minipage}
\end{figure*}

\noindent\textbf{RS performance.} Current disk-based HVSS methods largely ignore RS, except for a baseline in a competition \cite{bigann}. RS support is provided by calling ANNS iteratively on DiskANN (referred to as DiskANN) \cite{DiskANN_code}. Fig. \ref{fig: latency_rs} and \ref{fig: qps_rs} compare the RS performance of DiskANN and {\name}. Under the same \textit{AP}, {\name} reduces query latency by up to 98\% compared to DiskANN. On SSNPP, {\name} exhibits a slight performance boost. We observed that most query results are near the centroid for that dataset. DiskANN's cache policy loads the data near the centroid, covering most query results. However, {\name} still achieves better RS performance. Fig. \ref{fig: qps_rs} shows that {\name} has a significantly higher throughput than DiskANN under the same \textit{AP}. For example, it is 43.9$\times$ faster than DiskANN when $AP=0.9$ on BIGANN. Further investigation reveals that DiskANN requires numerous disk I/Os for some queries with long RS results (e.g., 1,000). In that case, {\name} achieves greater gains as it avoids more redundant disk I/Os.

\setlength{\textfloatsep}{0cm}
\setlength{\floatsep}{0cm}
\begin{figure*}[!th]
\setlength{\abovecaptionskip}{0cm}
\setstretch{0.9}
\fontsize{8pt}{4mm}\selectfont
\begin{minipage}{0.575\textwidth}
  \setlength{\abovecaptionskip}{0cm}
  \setlength{\belowcaptionskip}{0.1cm}
  \centering
  \footnotesize

  \stackunder[0.8pt]{\includegraphics[scale=0.14]{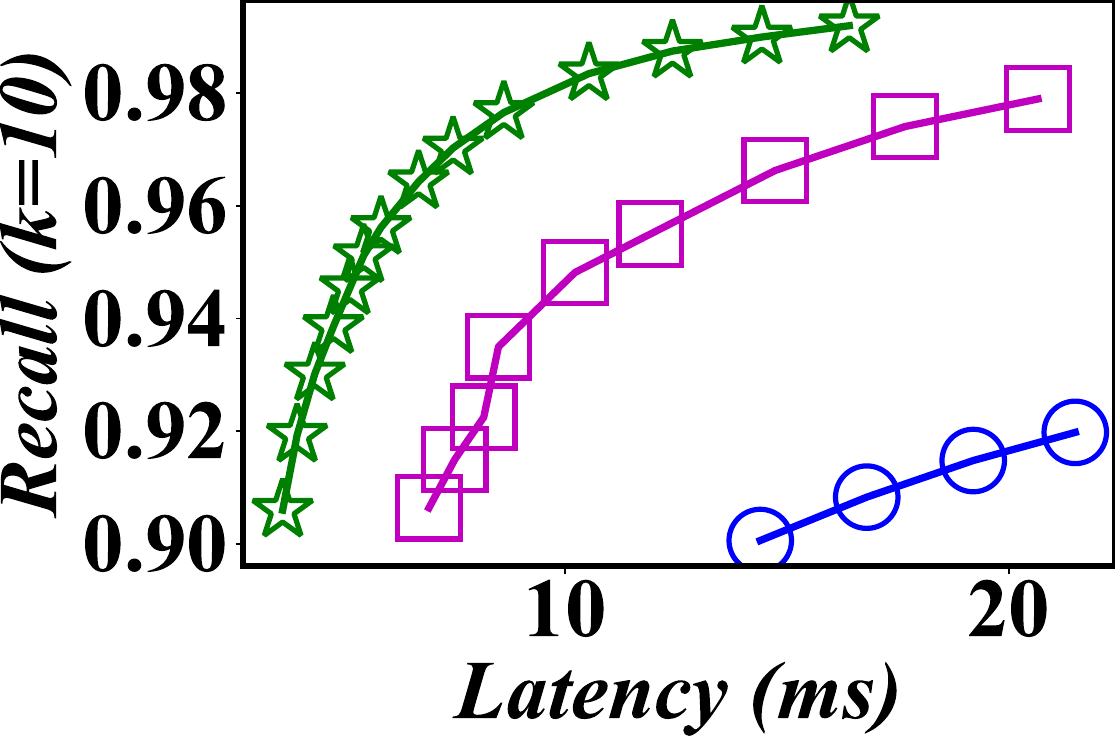}}{(a) BIGANN}
  \hspace{-0.1cm}
  \stackunder[0.75pt]{\includegraphics[scale=0.14]{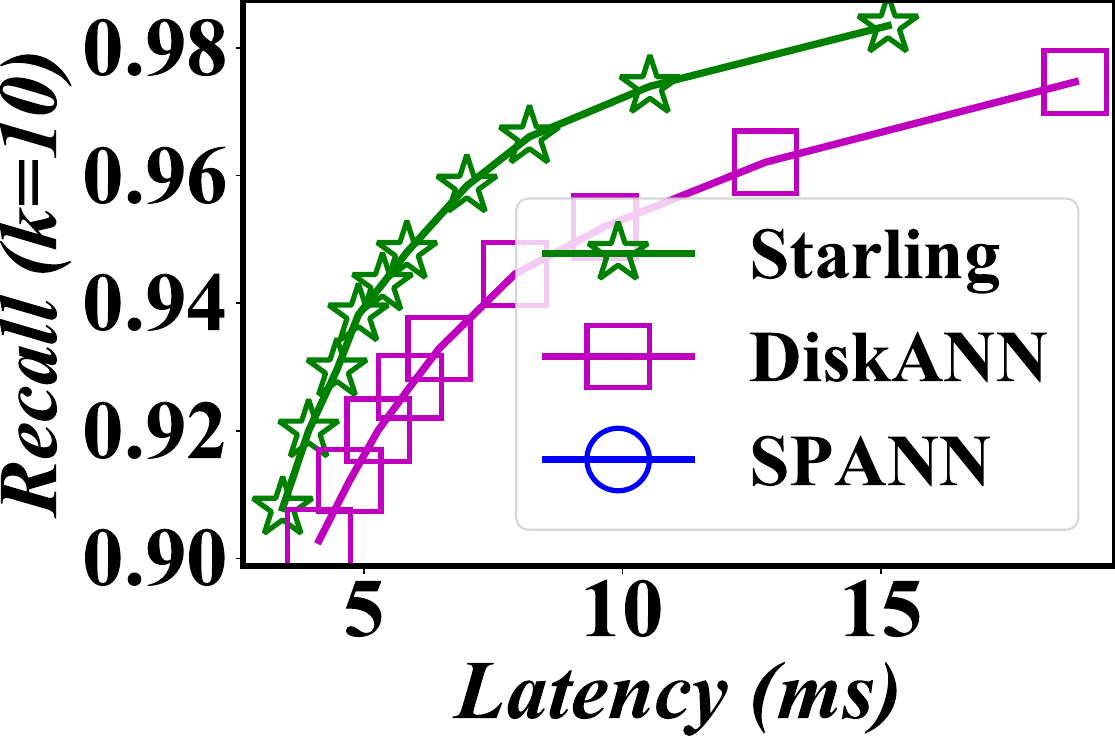}}{(b) Text2image}
  \hspace{-0.1cm}
  \stackunder[0.75pt]{\includegraphics[scale=0.14]{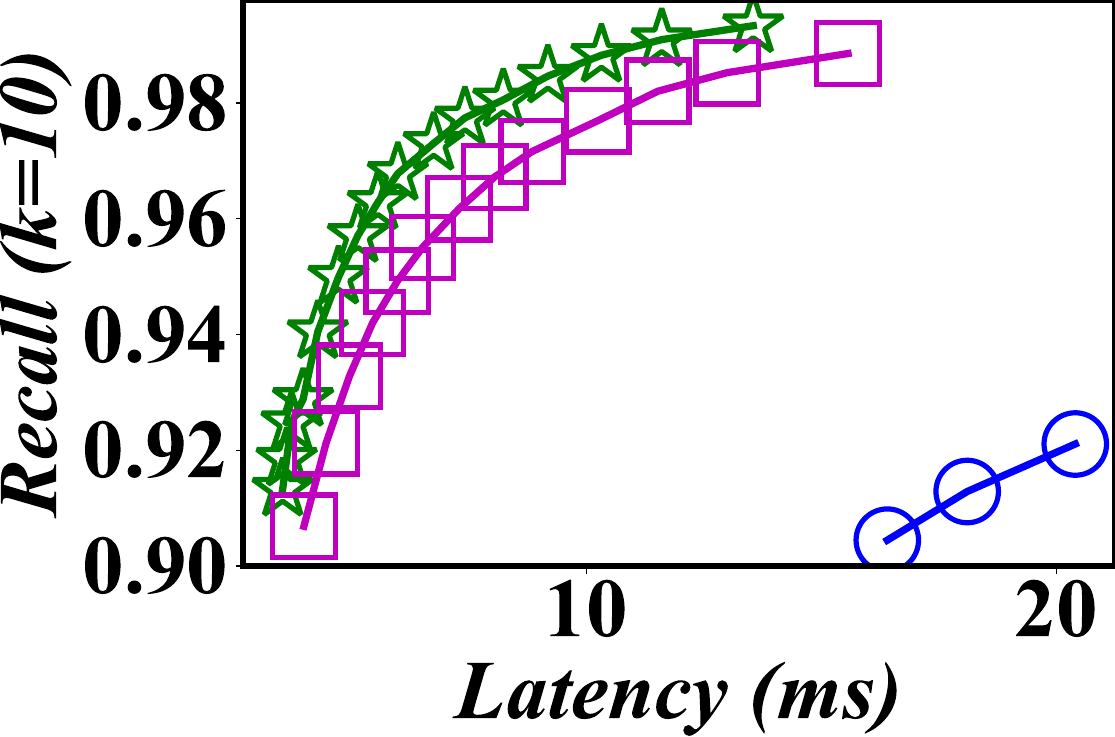}}{(a) DEEP}
  \hspace{-0.1cm}
  \caption{ANNS performance (\textit{Recall} vs \textit{Latency}).}
  \label{fig: intra-segment_knn}
\end{minipage}
\begin{minipage}{0.42\textwidth}
  \setlength{\abovecaptionskip}{0cm}
  \setlength{\belowcaptionskip}{0.1cm}
  \centering
  \footnotesize
  \stackunder[0.75pt]{\includegraphics[scale=0.14]{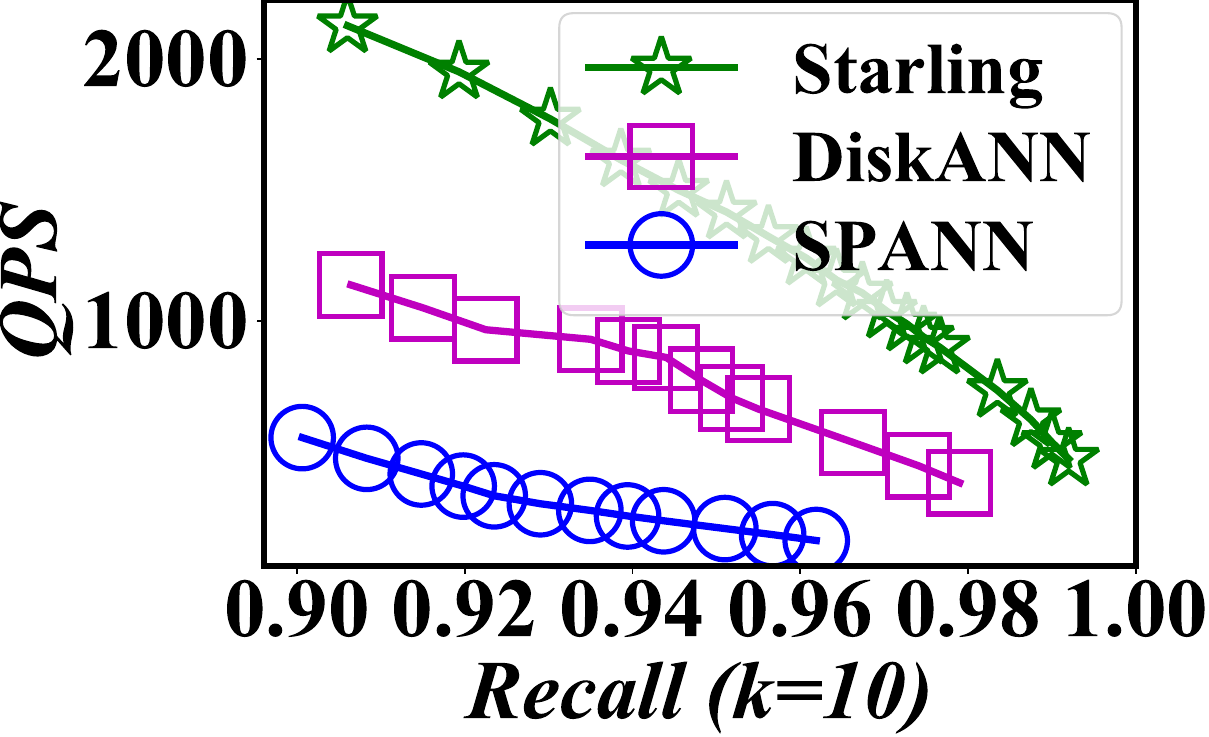}}{(a) BIGANN}
  \hspace{-0.2cm}
  \stackunder[0.75pt]{\includegraphics[scale=0.14]{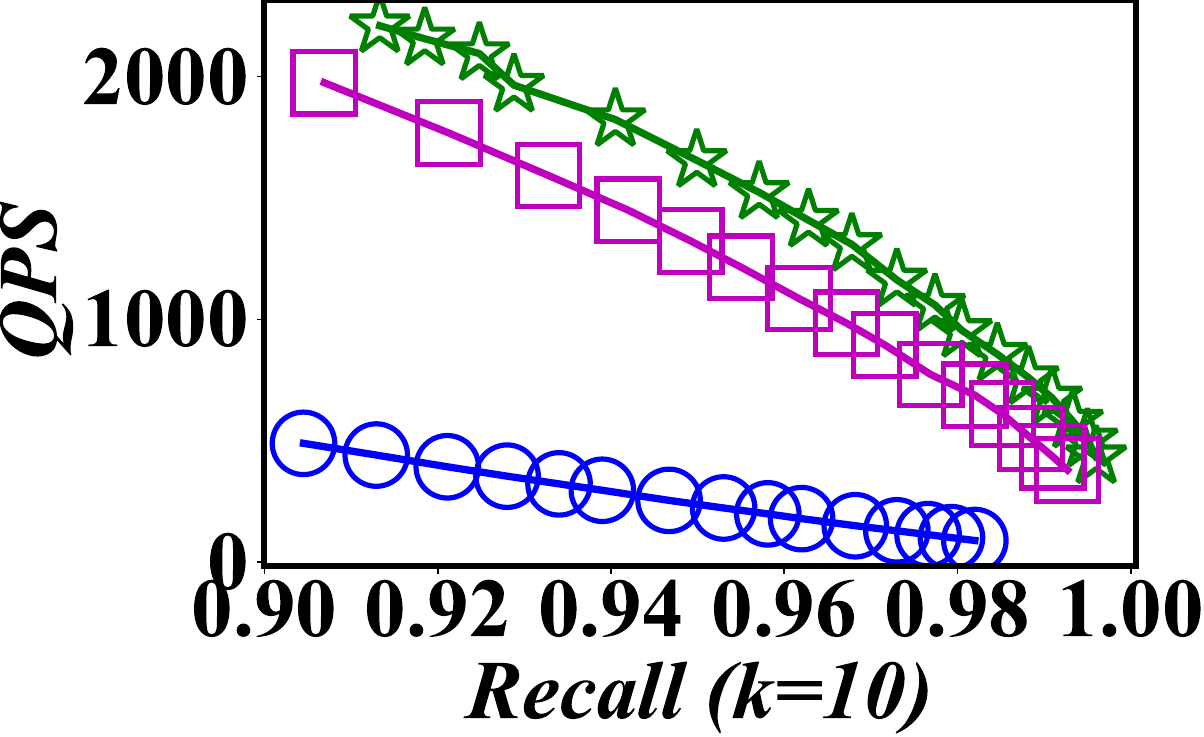}}{(b) DEEP}
  \hspace{-0.2cm}
  \caption{ANNS performance (\textit{QPS} vs \textit{Recall}).}
  \label{fig: qps_knn}
\end{minipage}
\vspace{-0.4cm}
\end{figure*}

\vspace{0.2em}
\noindent\textbf{ANNS performance.} Fig. \ref{fig: intra-segment_knn} contrasts the \textit{Recall} and \textit{Latency} of assorted methods. SPANN on Text2image is omitted, as its latency exceeds 20ms when $Recall > 0.9$. {\name} consistently delivers lower latency compared to two rivals under similar \textit{Recall} conditions. For example, at $Recall=0.95$ on BIGANN, {\name} (5ms) achieves over 2$\times$ more speed than DiskANN (10ms) and 10$\times$ more than SPANN (50ms). Fig. \ref{fig: qps_knn} delineates the \textit{QPS} and \textit{Recall} of various methods. {\name} continually exhibits superior performance over its competitors. SPANN presents lackluster results, worse than anticipated. The reason is that SPANN relies heavily on data duplication (up to eight-fold the base data size \cite{SPANN}). Given the segment configuration, the number of data replications is restrained, leading to a performance dip (for additional analysis, see \S \ref{subsec: segment setup}).

\subsection{I/O-efficiency} \label{subsec: i/o efficiency}

\setlength{\textfloatsep}{0cm}
\setlength{\floatsep}{0cm}
\begin{table}[!h]
  \centering
  \setlength{\abovecaptionskip}{0.05cm}
  \setlength{\belowcaptionskip}{0.1cm}
  \setstretch{0.8}
  \fontsize{6.5pt}{3.3mm}\selectfont
  \caption{Vertex utilization ratio ($\xi$) and search path length ($\ell$).}
  \label{tab: vertex utilization and search path}
  \setlength{\tabcolsep}{.02\linewidth}{
  \begin{tabular}{|l|c|l|l|l|l|}
    \hline
     \textbf{Framework}$\downarrow$ & \textbf{Metric}$\downarrow$ & \textbf{BIGANN} & \textbf{DEEP} & \textbf{SSNPP} & \textbf{Text2image} \\
    \hline
    DiskANN & \multirow{2}*{$\xi$} & 0.0625 & 0.1429 & 0.1111 & 0.2500 \\
    \cline{1-1}
    \cline{3-6}
    {\name} & ~ & \textbf{0.3438} & \textbf{0.4429} & \textbf{0.4111} & \textbf{0.8760} \\
    \hline
    DiskANN & \multirow{2}*{$\ell$} & 362 & 341 & 127 & 269 \\
    \cline{1-1}
    \cline{3-6}
    {\name} & ~ & \textbf{182} & \textbf{240} & \textbf{100} & \textbf{167} \\
    \hline
  \end{tabular}
  }\vspace{-0.2cm}
\end{table}

\noindent\textbf{Vertex utilization ratio ($\xi$).} We calculate $\xi$ for each block loaded in the queries and average these to obtain the overall $\xi$ for the loaded blocks. As displayed in Tab. \ref{tab: vertex utilization and search path}, {\name} consistently outperforms DiskANN across all datasets. This superior performance is credited to {\name}'s data locality enhancement through block shuffling, which permits a single block from the disk to encompass multiple relevant vertices. This approach heightens $\xi$, thereby mitigating disk bandwidth wastage. Hence, {\name} exhibits higher I/O-efficiency. {Nonetheless, $\xi$ is not the exclusive factor that influences query performance–other factors such as data distribution and optimizations like the in-memory navigation graph also impact overall performance, potentially causing a non-linear relationship between $\xi$ and query performance metrics such as latency.}

\vspace{0.2em}
\noindent\textbf{Search path length ($\ell$).} We undertake an evaluation of $\ell$ for all queries, subsequently reporting the average $\ell$ under specified search accuracy. As denoted in Tab. \ref{tab: vertex utilization and search path}, {\name} exhibits a shorter $\ell$ compared to DiskANN. Within {\name}, the in-memory navigation graph yields query-aware dynamic entry points that are in closer proximity to the query, thus reducing $\ell$. Consequently, the IO-efficiency of {\name} sees further improvement.

\subsection{Index Cost}
\label{subsec: index cost}

\begin{figure}[!th]
  \setlength{\abovecaptionskip}{0.1cm}
  \setlength{\belowcaptionskip}{0.2cm}
  \centering
  \footnotesize
  \stackunder[0.7pt]{\includegraphics[scale=0.65]{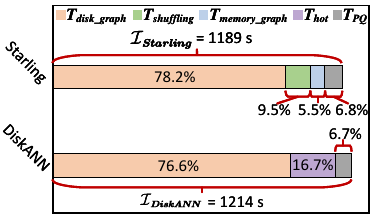}}{(a) Index processing time}
  \hspace{0cm}
  \stackunder[0.7pt]{\includegraphics[scale=0.65]{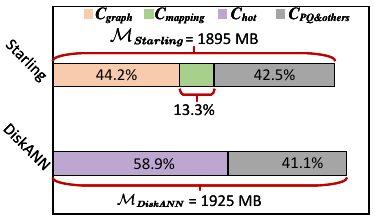}}{(b) Memory cost}
  \caption{Index cost of two different frameworks on BIGANN.}
  \label{fig: index cost}
  \vspace{-0.5cm}
\end{figure}

\noindent\textbf{Index processing time.} The offline index process for {\name} encompasses four components: disk-based graph construction time $T_{disk\_graph}$, block shuffling time $T_{shuffling}$, in-memory graph construction time $T_{memory\_graph}$, and preprocessing time for PQ short codes $T_{PQ}$ (cf. \S\ref{subsec: basic_page_search}) Consequently, the aggregate offline index procession time for {\name} can be calculated as:
\begin{equation}
  \label{equ: offline index cost}
  \mathcal{I}_{Starling} = T_{disk\_graph} + T_{shuffling} + T_{memory\_graph} + T_{PQ} \quad.
\end{equation}
The offline index process for DiskANN comprises three parts: disk-based graph construction time $T_{disk\_graph}$, hot vertices acquisition time $T_{hot}$, and preprocessing time for PQ short codes $T_{PQ}$. As per this structure, the total offline index processing for DiskANN is:
\begin{equation}
  \label{equ: offline index cost}
  \mathcal{I}_{DiskANN} = T_{disk\_graph} + T_{hot} + T_{PQ} \quad.
\end{equation}
Fig. \ref{fig: index cost}(a) exposes the breakdown of index processing time on BIGANN. Despite the additional shuffling and construction time, {\name} is substantially more efficient compared to DiskANN since the summation of $T_{shuffling}$ and $T_{memory\_graph}$ in {\name} is less than $T_{hot}$ in DiskANN. DiskANN necessitates the generation of hot vertices which involves the sampling of a large pool of queries and executing a slow disk-based graph search to tally vertex visit frequency. This procedure can be extremely time-consuming without hot vertices in memory. In contrast, {\name}'s approach of building an in-memory navigation graph proves to be faster and more effective. It is also worth noting that both {\name} and DiskANN have an equal $T_{disk\_graph}$ and $T_{PQ}$.

\vspace{0.3em}
\noindent\textbf{Memory cost.} {In the main memory, {\name} manages an in-memory navigation graph ($\mathcal{C}_{graph}$), the mapping of vertex IDs to block IDs ($\mathcal{C}_{mapping}$), and PQ short codes along with other data structures ($\mathcal{C}_{PQ\&others}$). Hence, the total memory cost of {\name} is
\begin{equation}
  \label{equ: starling memory cost}
  \mathcal{M}_{Starling} = \mathcal{C}_{graph} + \mathcal{C}_{mapping} + \mathcal{C}_{PQ\&others} \quad.
\end{equation}
DiskANN, on the other hand, houses a set of hot vertices ($\mathcal{C}_{hot}$) and the PQ short codes along with other data structures ($\mathcal{C}_{PQ\&others}$). The memory cost is hence:
\begin{equation}
  \label{equ: memory cost}
  \mathcal{M}_{DiskANN} = \mathcal{C}_{hot} + \mathcal{C}_{PQ\&others} \quad.
\end{equation}
DiskANN does not maintain $C_{mapping}$ as a group of ID-contiguous vertices are allocated into a block (it locates a block by vertex ID).}

Fig. \ref{fig: index cost}(b) offers a comparison of the memory costs of both frameworks when handling the same set of queries with the same accuracy on BIGANN. The results reveal {\name} has lower memory overhead compared to DiskANN. The reason for this is as follows. {\name} requires an extra mapping of vertex IDs to block IDs post block shuffling as vertex IDs within a block are non-consecutive. In DiskANN, each hot vertex comprises vector data and the neighbor IDs of the disk-based index. {\name}'s in-memory graph also includes vector data and neighbor IDs but it contains 20\% fewer neighbor IDs than the disk-based graph since it possesses 10\% less vector data. Resultingly, $\mathcal{C}_{graph}+\mathcal{C}_{mapping}$ in {\name} is less than $\mathcal{C}_{hot}$ in DiskANN.

\vspace{0.2em}
\noindent\textbf{Disk cost.} {\name} and DiskANN undergo the same disk cost as they utilize the same disk-based graph, albeit in different layouts.

\subsection{Ablation Study} \label{subsec: ablation_study}

\setlength{\textfloatsep}{0cm}
\setlength{\floatsep}{0cm}
\begin{figure*}[!th]
\setlength{\abovecaptionskip}{0cm}
\setlength{\belowcaptionskip}{-0.3cm}
\setstretch{0.9}
\fontsize{8pt}{4mm}\selectfont
\begin{minipage}{0.525\textwidth}
  \setlength{\abovecaptionskip}{0.1cm}
  \setlength{\belowcaptionskip}{0cm}
  \centering
  \footnotesize
  \stackunder[0.9pt]{\includegraphics[scale=0.161]{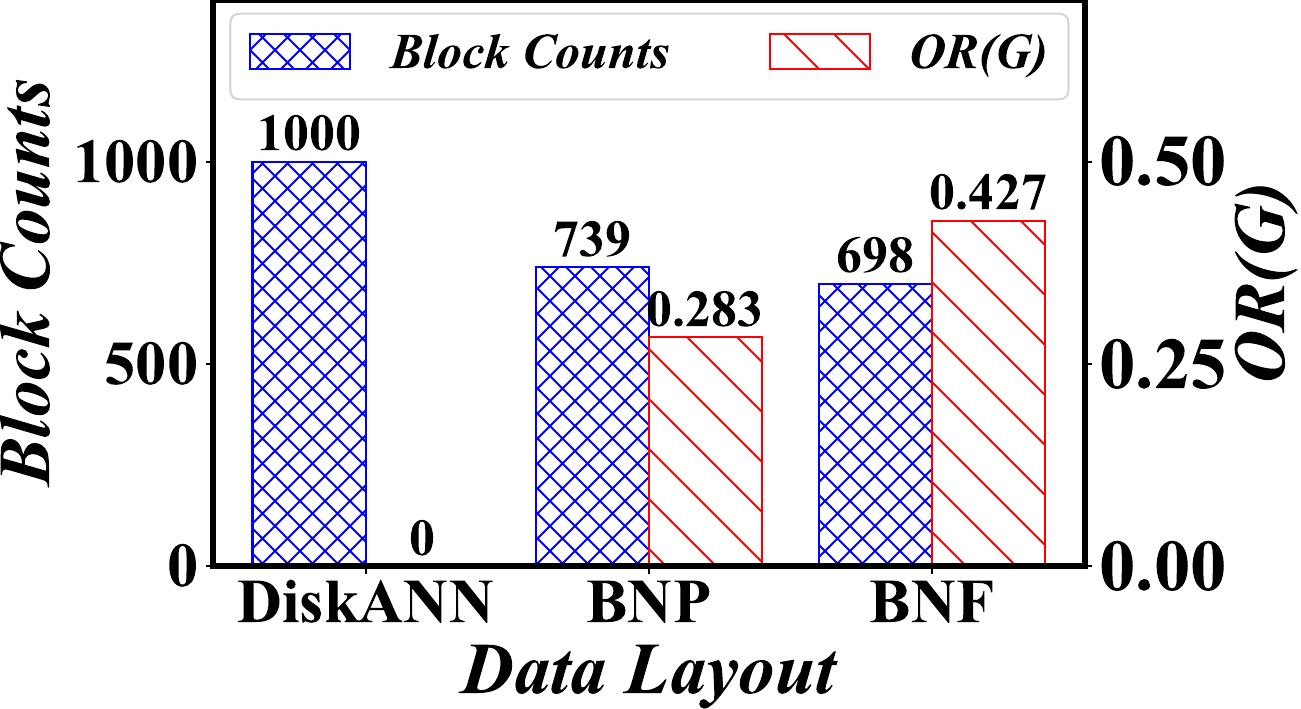}}{(a) Block shuffling}
  \hspace{0.1cm}
  \stackunder[0.6pt]{\includegraphics[scale=0.161]{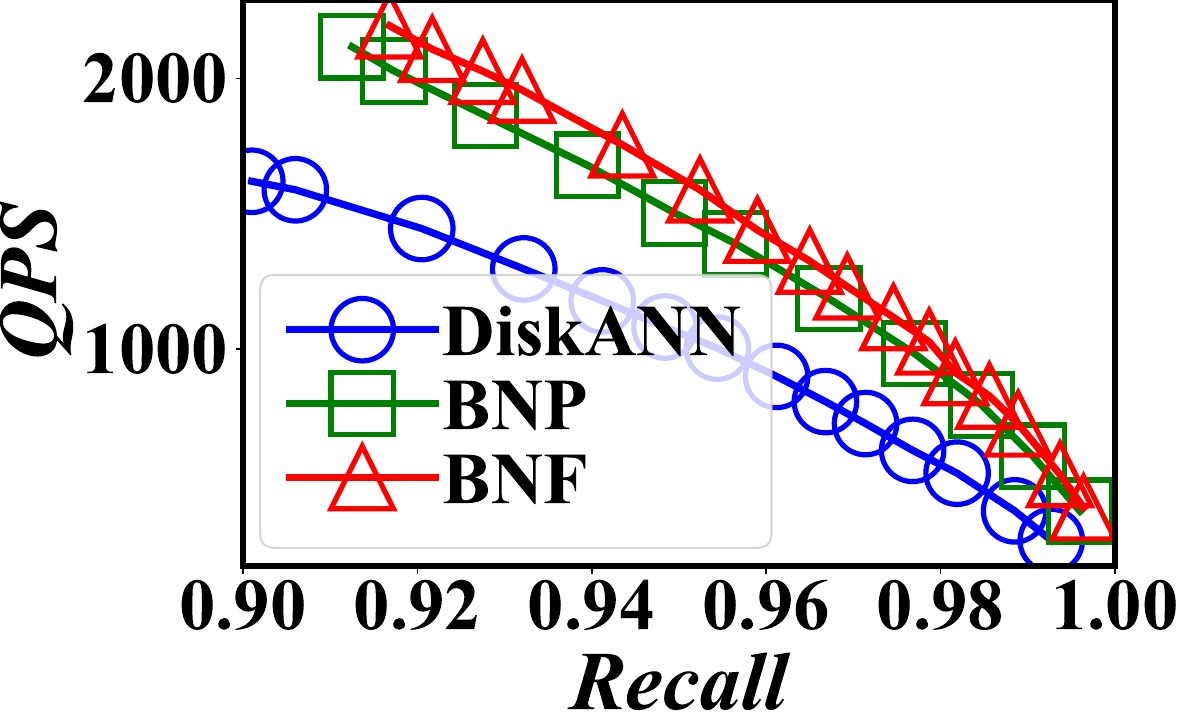}}{(b) Search performance}
  \caption{Effect of block shuffling.}
  \label{fig: data_locality}
\end{minipage}
\begin{minipage}{0.47\textwidth}
  \setlength{\abovecaptionskip}{0.1cm}
  \setlength{\belowcaptionskip}{0cm}
  \centering
  \footnotesize
  \stackunder[0.75pt]{\includegraphics[scale=0.161]{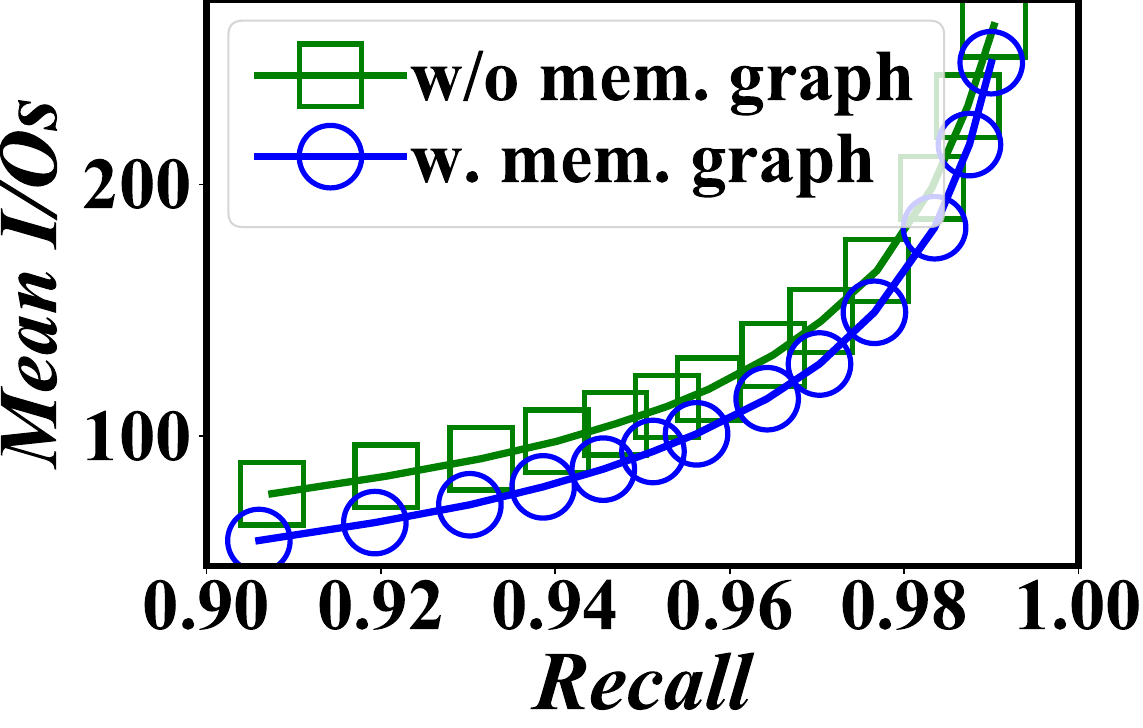}}{(a) Disk I/Os}
  \hspace{-0.15cm}
  \stackunder[0.75pt]{\includegraphics[scale=0.161]{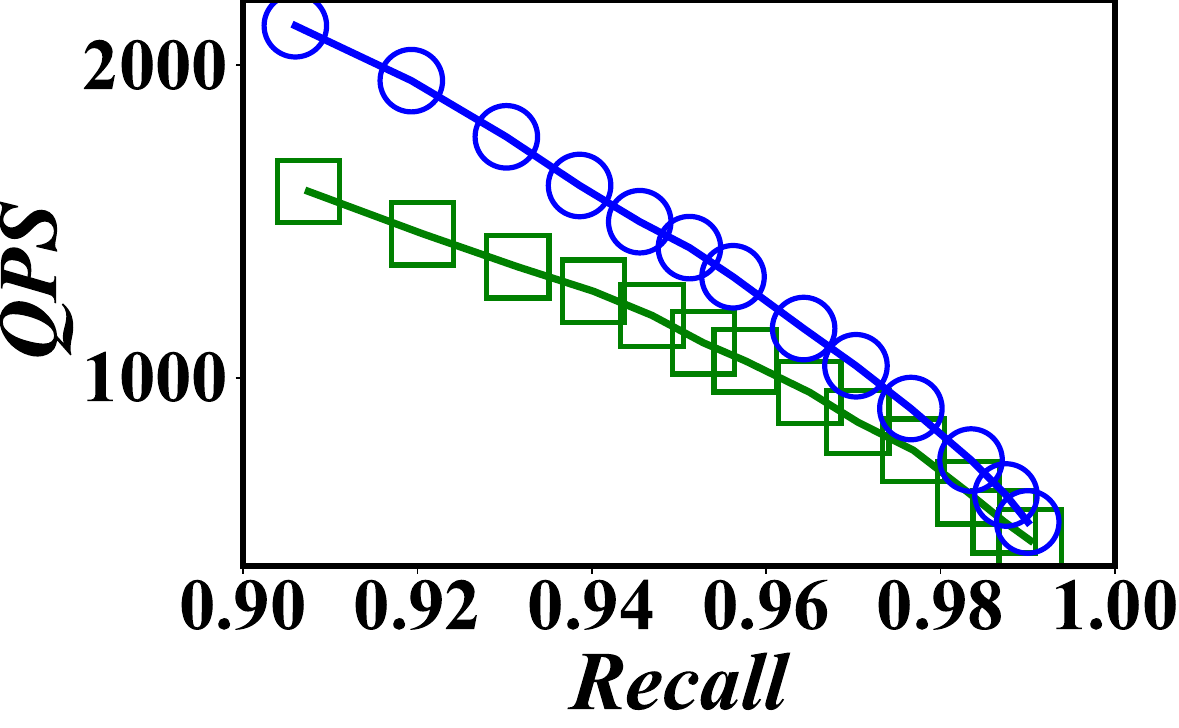}}{(b) Throughput}
  \caption{Effect of in-memory graph.}
  \label{fig: navigation_graph}
\end{minipage}
\vspace{-0.3cm}
\end{figure*}

\noindent\textbf{Block shuffling.} Fig. \ref{fig: data_locality}(a) reports the average number of blocks containing the top-1,000 nearest neighbors for each query within a given set (blue bar), as well as the overlap ratio $OR(G)$ (red bar) on DEEP. We limit our discussion to BNP and BNF, as BNS has slow processing times on a full-size segment. For BNF, we set the iteration parameter $\beta$ to eight for an optimal balance between efficiency and efficacy. DiskANN's $OR(G)$ is found to be nearly zero, as the majority of vertices within a block are irrelevant. Both BNP and BNF consistently register higher $OR(G)$ than DiskANN, indicating the positive impact of our shuffling algorithms on data locality. BNF further amplifies $OR(G)$ on BNP, incurring a minor additional time cost, which is negligible about the total index processing time ($\sim$9.5\%). Due to DiskANN's poor locality, almost all of the top-1,000 nearest neighbors are stored across disparate blocks, necessitating the checking of a minimum of 1,000 blocks for the retrieval of top-1,000 results. In comparison, BNP and BNF cut down the number of blocks by over 30\%, enabling {\name} to search through fewer blocks. Fig. \ref{fig: data_locality}(b) evidences BNP and BNF's exceptional performance in search tasks over DiskANN, with BNF outdoing BNP owing to its superior locality. By default, BNF is utilized for block shuffling in {\name}.

\vspace{0.2em}
\noindent\textbf{In-memory navigation graph.}
To ascertain the impact of the in-memory navigation graph, we turn it on/off within {\name}, ensuring all other settings remain identical. As depicted in Fig. \ref{fig: navigation_graph}, this analysis highlights the changes in disk I/Os and throughput on BIGANN. The activation of the in-memory graph leads to a reduction in disk I/Os by approximately 20\% for the same \textit{Recall}. This outcome stems from initiating our search on the disk-based graph from entry points positioned closer to the query, which effectively shortens the search path length (refer to \textbf{\S \ref{subsec: problem_analysis}} for further details). Fig. \ref{fig: navigation_graph}(b) reveals that the in-memory graph notably elevates throughput. Importantly, the in-memory graph implementation does not alter the vertex utilization ratio.

\begin{figure}[!th]
  \setlength{\abovecaptionskip}{0.1cm}
  \setlength{\belowcaptionskip}{0.2cm}
  \centering
  \footnotesize
  \stackunder[0.75pt]{\includegraphics[scale=0.161]{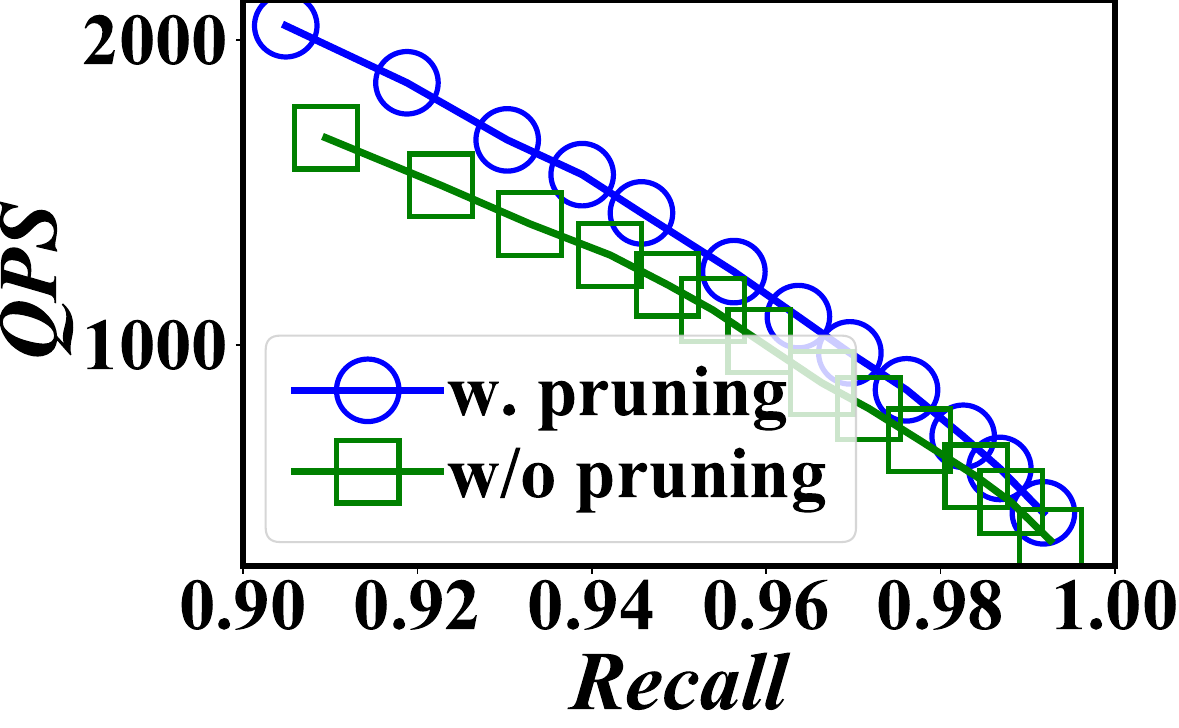}}{(a) Block pruning}
  \hspace{-0.15cm}
  \stackunder[0.75pt]{\includegraphics[scale=0.161]{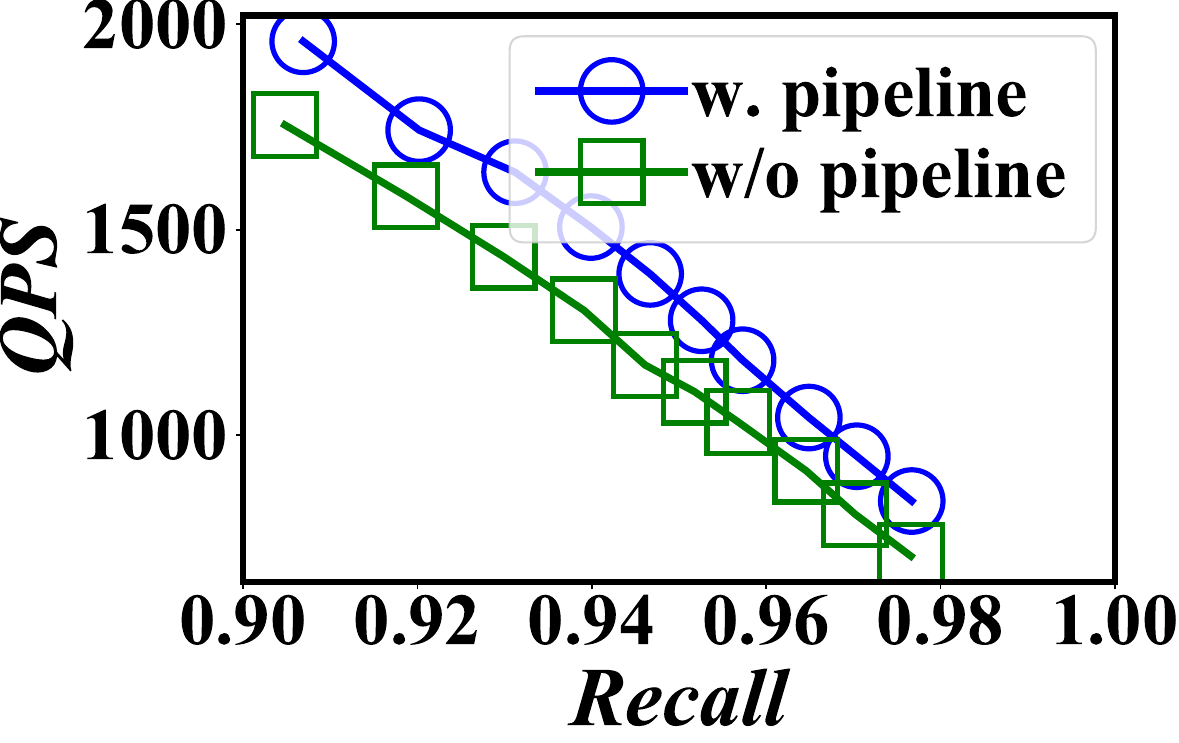}}{(b) I/O-computation pipeline}
  \hspace{-0.15cm}
  \stackunder[0.75pt]{\includegraphics[scale=0.161]{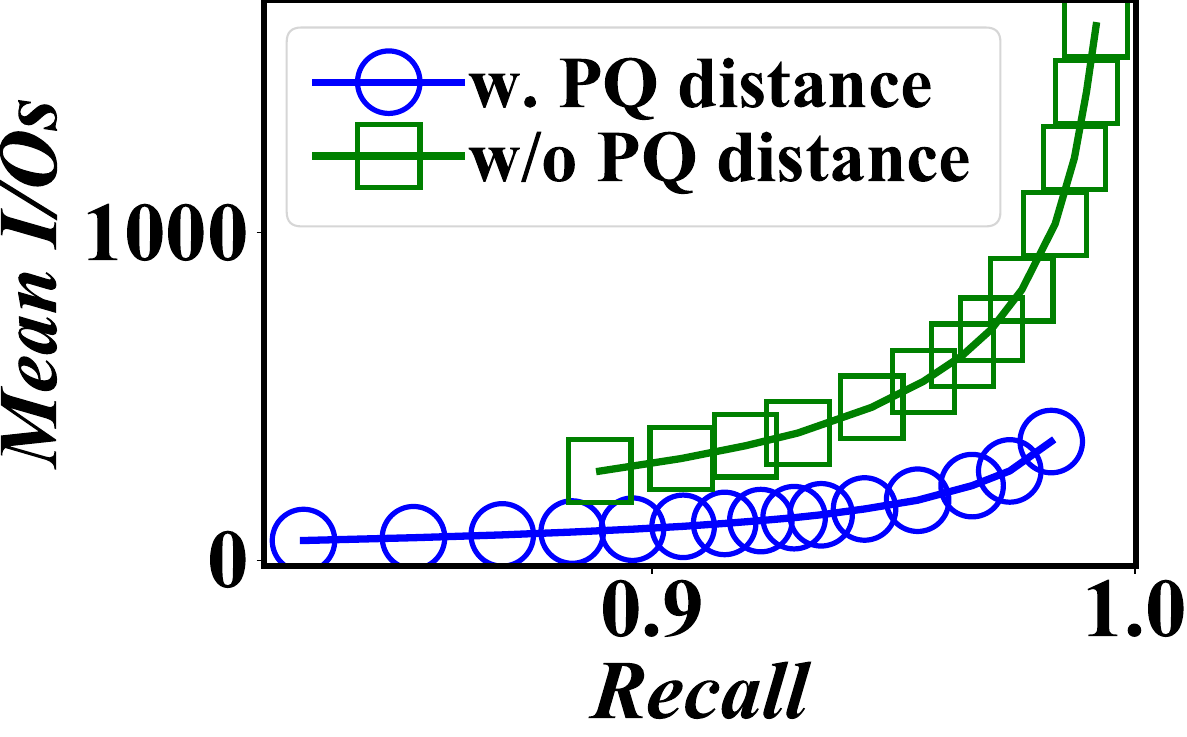}}{(c) PQ distance}
  \hspace{-0.15cm}
  \stackunder[0.75pt]{\includegraphics[scale=0.6]{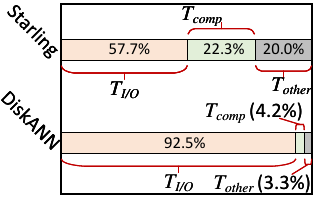}}{(d) Search profile}
  \caption{Effect of the block search optimizations.}
  \label{fig: block search optimization}
  \vspace{-0.5cm}
\end{figure}

\vspace{0.2em}
\noindent\textbf{Block pruning.}
We assess the influence of block pruning on BIGANN by comparing {\name} both with and without the implementation of pruning. The trade-off between \textit{QPS} and \textit{Recall} is presented for both scenarios. Fig. \ref{fig: block search optimization}(a) demonstrates that {\name}, when equipped with block pruning, exhibits superior performance. This can be attributed to the negation of unnecessary computations by pruning vertices that are distant from the query. Note that achieving an ideal graph layout with a perfect overlap ratio $OR(G)=1$ is a challenging task (refer to \textbf{\S \ref{subsec: graph_reorder}} for details). Some vertices located within a loaded block are not neighbors of the target vertex. Block pruning aids in filtering out these non-neighbor vertices, effectively circumventing unnecessary visits.

\vspace{0.2em}
\noindent\textbf{I/O and computation pipeline.}
Fig. \ref{fig: block search optimization}(d) delineates the breakdown of search time for two frameworks with an identical \textit{Recall} on BIGANN. For DiskANN, disk I/O is accountable for as much as 92.5\% of the total search time, thus establishing itself as the bottleneck for HVSS on the disk-based graph. On the contrary, {\name} enhances the vertex utilization ratio by leveraging superior data locality, thereby diminishing the I/O time ($T_{I/O}$) ratio to 57.7\%. However, {\name} also escalates the distance computation time $T_{comp}$ since more vertices are examined in each loaded block. As a result, both I/O and computation are dominant and need to be executed in parallel for optimal performance. Fig. \ref{fig: block search optimization}(b) outlines the impact of the I/O and computation pipeline, illustrating that the pipeline boosts \textit{QPS} for the same \textit{Recall}. This implementation may result in additional computations, as some vertex visitations within the current loaded block are deferred (see \textbf{\S \ref{subsec: basic_page_search}} for more information). Despite this, the trade-off proves to be beneficial, enabling more efficient utilization of both the disk bandwidth and CPU, thereby improving the overall search performance.

\vspace{0.2em}
\noindent\textbf{PQ-based approximate distance.}
Fig. \ref{fig: block search optimization}(c) depicts the average disk I/Os before and after the implementation of the PQ-based approximate distance optimization on BIGANN for a given batch of queries. It is clear that the number of disk I/Os is significantly reduced by applying this form of optimization, while maintaining the same level of accuracy. This optimization ranks the candidate set by approximate distance, which is calculated using PQ short codes in the main memory, thereby circumventing the need for expensive disk I/O operations. Although employing the exact distance might theoretically result in enhanced upper bound accuracy, the increased disk I/O cost can be prohibitively high. With the employment of this optimization, it is possible to maintain high levels of accuracy by adjusting other parameters (such as the size of the candidate set) while limiting the amount of disk I/Os, thus making it more suitable for practical applications.

\subsection{Parameter Sensitivity} \label{subsec: param_sensi}

\setlength{\textfloatsep}{0cm}
\setlength{\floatsep}{0cm}
\begin{figure*}[!th]
\setlength{\abovecaptionskip}{0cm}
\setstretch{0.9}
\fontsize{8pt}{4mm}\selectfont
\begin{minipage}{0.487\textwidth}
  \setlength{\abovecaptionskip}{0.1cm}
  \setlength{\belowcaptionskip}{0cm}
  \centering
  \footnotesize
  \stackunder[0.75pt]{\includegraphics[scale=0.161]{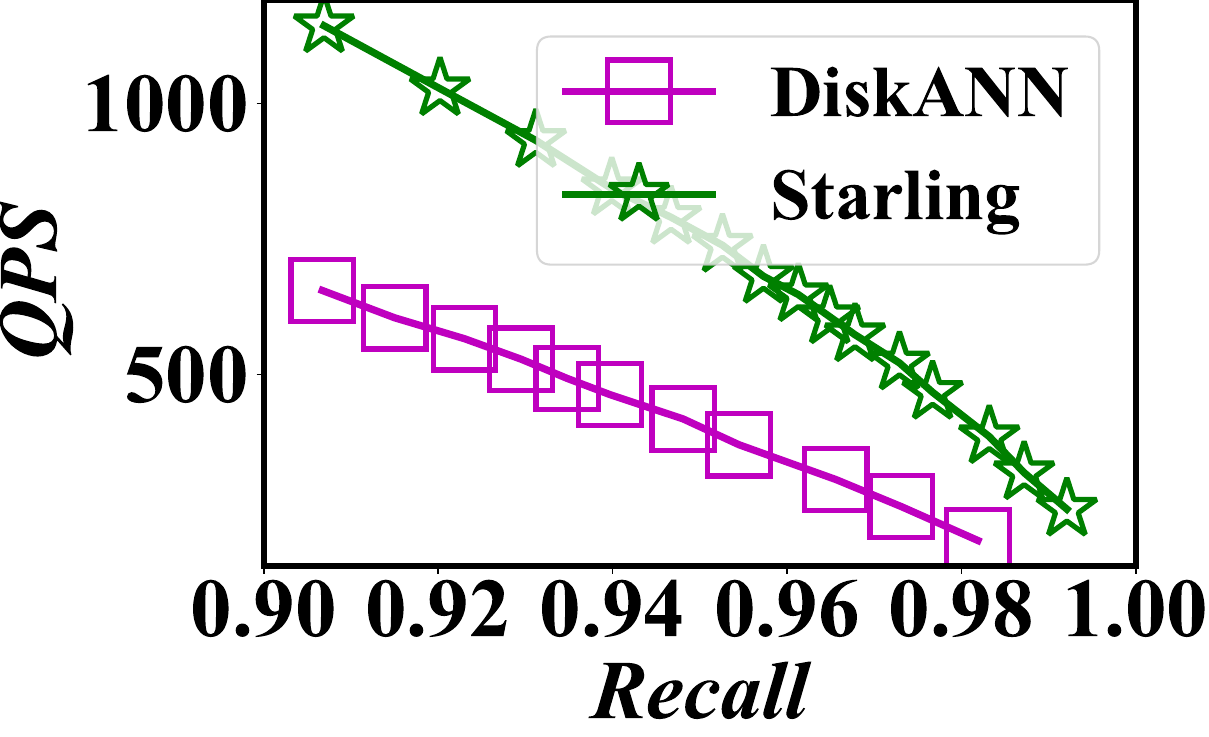}}{(a) Threads = 4}
  \hspace{-0.16cm}
  \stackunder[0.75pt]{\includegraphics[scale=0.161]{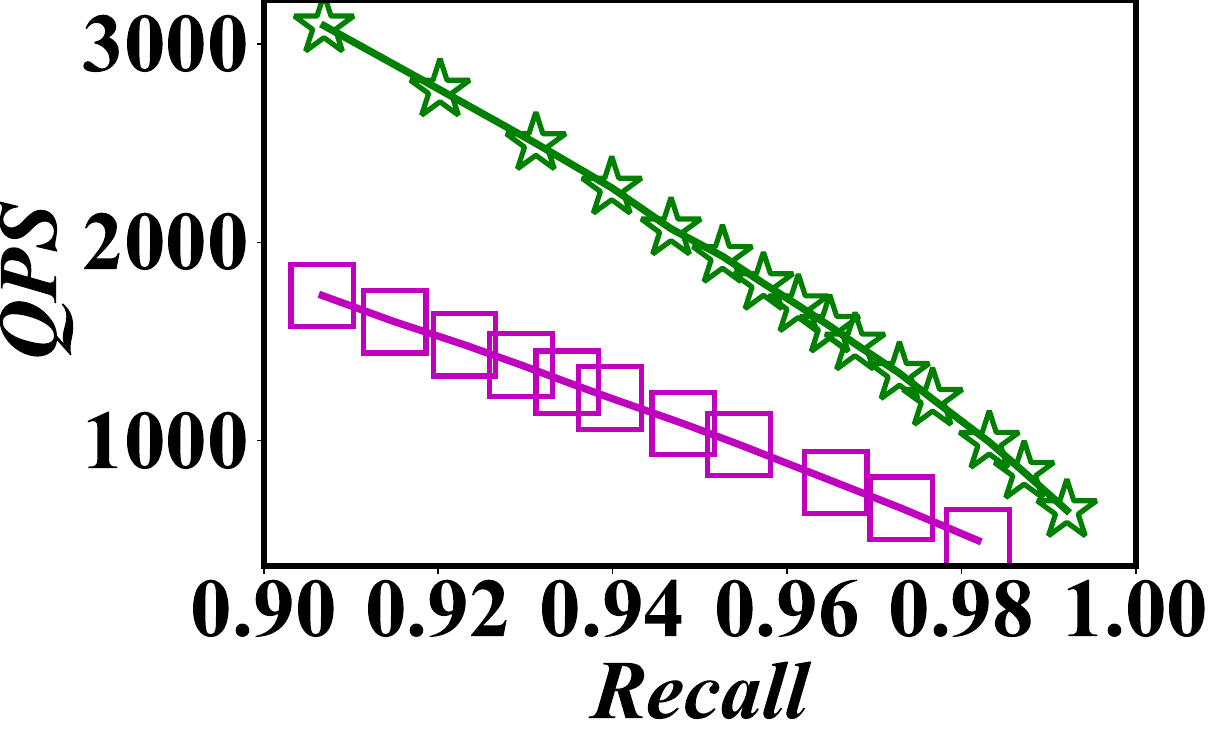}}{(b) Threads = 16}
  \caption{Effect of different threads.}
  \label{fig: threads}
\end{minipage}
\begin{minipage}{0.487\textwidth}
  \setlength{\abovecaptionskip}{0.1cm}
  \setlength{\belowcaptionskip}{0cm}
  \centering
  \footnotesize
  \stackunder[0.75pt]{\includegraphics[scale=0.161]{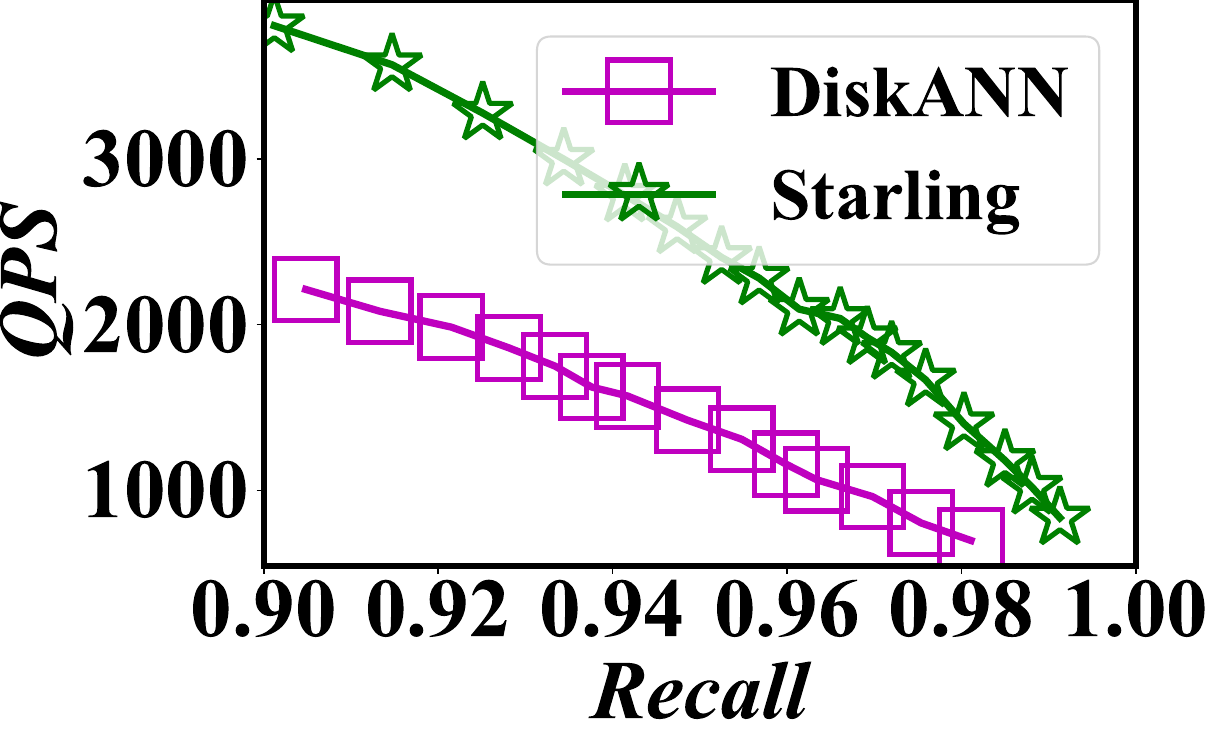}}{(a) $k$ = 1}
  \hspace{-0.16cm}
  \stackunder[0.75pt]{\includegraphics[scale=0.161]{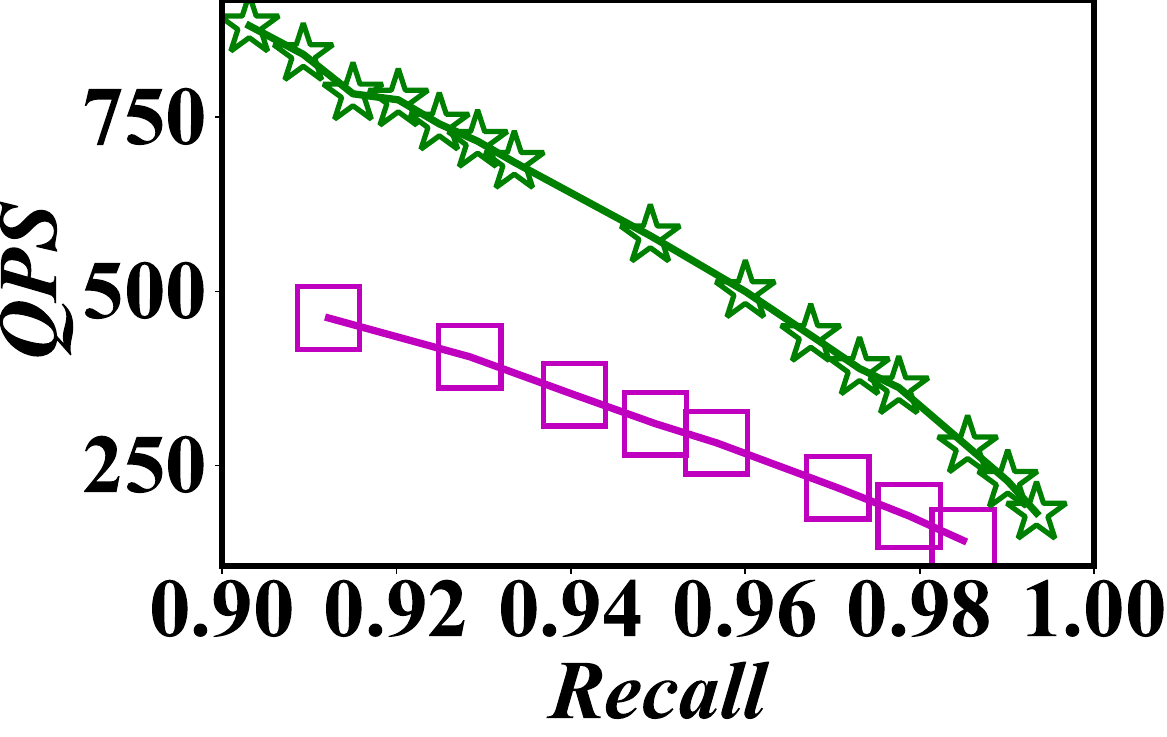}}{(b) $k$ = 50}
  \caption{Effect of different $k$ in ANNS.}
  \label{fig: recall_k}
\end{minipage}
\vspace{-0.4cm}
\end{figure*}

\noindent\textbf{Number of threads.}
We carry out an evaluation of the parallelism of both {\name} and DiskANN, employing various thread settings ranging from 4 to 16 on BIGANN. Fig.\ref{fig: threads} highlights the \textit{QPS} vs \textit{Recall} relationship. In both frameworks, the \textit{Recall} remains constant with thread variations as the number of threads does not influence the search path. Notably, irrespective of the number of threads used, the \textit{QPS} of {\name} consistently proves to be twice as fast as DiskANN.

\vspace{0.2em}
\noindent\textbf{Number of search results.}
We conduct an examination on the effect of varying $k$ from 1 to 50 on BIGANN. In Fig.\ref{fig: recall_k}, it is clear that {\name} exhibits a higher \textit{QPS} than DiskANN for different values of $k$. This observation showcases the robustness of {\name} in managing ANNS problems.

\begin{figure}[!th]
  \setlength{\abovecaptionskip}{0.1cm}
  \setlength{\belowcaptionskip}{0.2cm}
  \centering
  \footnotesize
  \stackunder[0.75pt]{\includegraphics[scale=0.161]{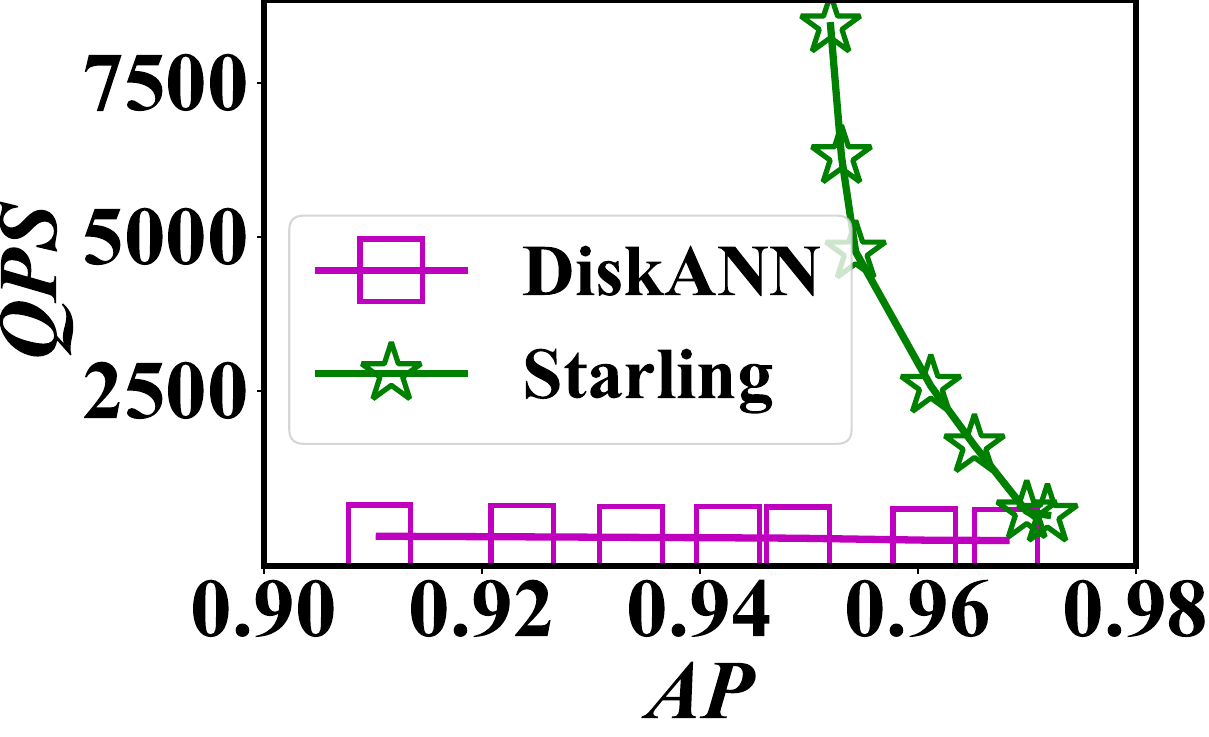}}{(a) $r$ = 3,400}
  \hspace{-0.16cm}
  \stackunder[0.75pt]{\includegraphics[scale=0.161]{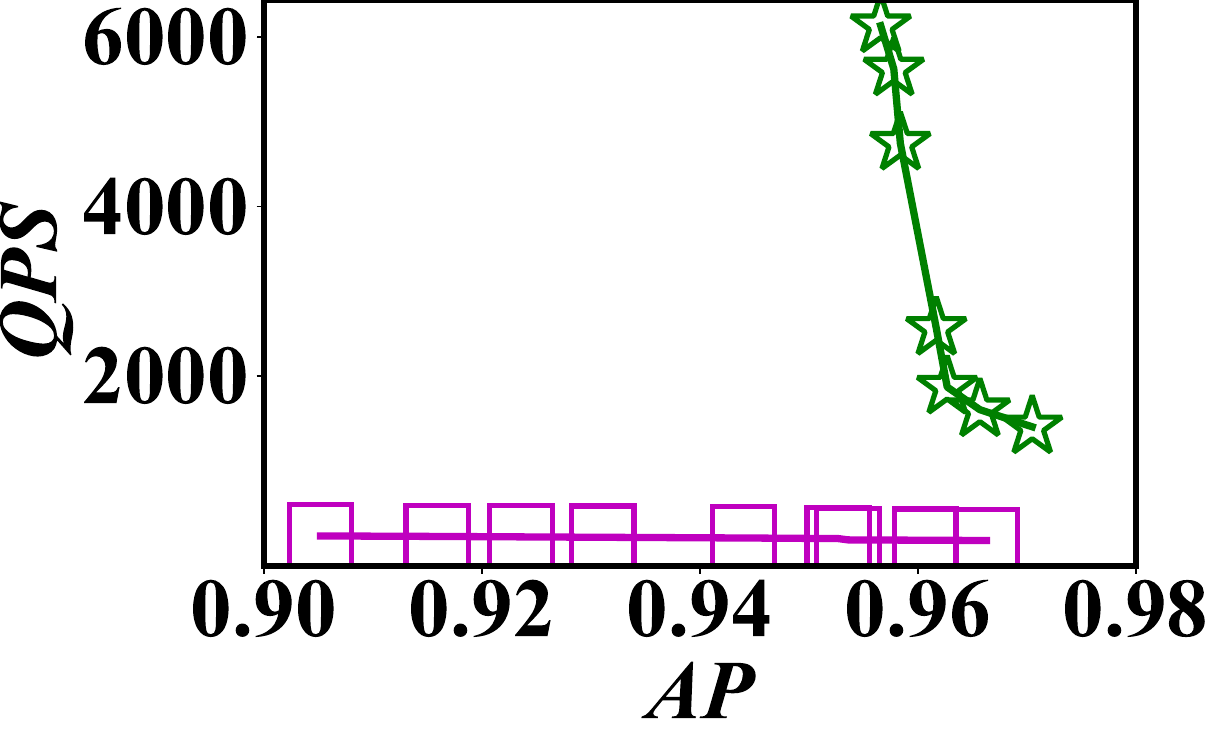}}{(b) $r$ = 3,920}
  \caption{Effect of different $r$ in RS.}
  \label{fig: ap_r}
  \vspace{-0.2cm}
\end{figure}

\vspace{0.2em}
\noindent\textbf{Search radius.}
Fig. \ref{fig: ap_r} presents a comparison of the RS performance with different radii $r$ on BIGANN. The results unequivocally demonstrate that {\name} surpasses DiskANN across different $r$. This stark contrast highlights the superior performance of {\name} in handling various $r$.

\subsection{Scalability} \label{subsec: scalability}

\setlength{\textfloatsep}{0cm}
\setlength{\floatsep}{0cm}
\begin{table}[!th]
  \centering
  \setlength{\abovecaptionskip}{0.05cm}
  \setlength{\belowcaptionskip}{-0.3cm}
  \setstretch{0.8}
  \fontsize{6.5pt}{3.3mm}\selectfont
  \tabcaption{\textit{QPS} of different number of segments on BIGANN.}
  \label{tab: multi-segment}
  \setlength{\tabcolsep}{.011\linewidth}{
  \begin{tabular}{|l|l|l|l|l|}
    \hline
    \textbf{Query $\rightarrow$} & \multicolumn{2}{c|}{\textbf{RS ($AP=0.90$)}} & \multicolumn{2}{c|}{\textbf{ANNS ($Recall=0.99$)}}\\
    \hline
    \textbf{\# Seg. $\downarrow$} & \textbf{DiskANN} & \boldsymbol{\name} & \textbf{DiskANN} & \boldsymbol{\name} \\
    \hline
    1 & 181 & \textbf{\textbf{8,690} (48.0$\times$)} & 239 & \textbf{538 (2.3$\times$)} \\
    \hline
    2 & 42 & \textbf{1,040 (24.8$\times$)} & 161 & \textbf{321 (2.0$\times$)} \\
    \hline
    3 & 31 & \textbf{687 (22.2$\times$)} & 131 & \textbf{257 (2.0$\times$)} \\
    \hline
    4 & 23 & \textbf{472 (20.5$\times$)} & 98 & \textbf{193 (2.0$\times$)} \\
    \hline
    5 & 19 & \textbf{204 (10.7 $\times$)} & 79 & \textbf{163 (2.1$\times$)} \\
    \hline
  \end{tabular}
  }\vspace{-0.3cm}
\end{table}

\noindent\textbf{Number of segments.}
We set a fixed segment size and manipulate the number of segments needed to process a batch of queries. Tab. \ref{tab: multi-segment} displays the scalability of {\name} on a single machine. We report the \textit{QPS} of both frameworks maintaining equivalent accuracy for ANNS and RS. The findings illustrate that {\name} persistently surpasses DiskANN in performance for both ANNS and RS, testifying to its superior scalability in relation to the number of segments.

\setlength{\textfloatsep}{0cm}
\setlength{\floatsep}{0cm}
\begin{figure*}[!th]
\setlength{\abovecaptionskip}{0cm}
\setstretch{0.9}
\fontsize{8pt}{4mm}\selectfont
\begin{minipage}{0.447\textwidth}
  \setlength{\abovecaptionskip}{0.1cm}
  \setlength{\belowcaptionskip}{0cm}
  \centering
  \footnotesize
  \stackunder[0.75pt]{\includegraphics[scale=0.161]{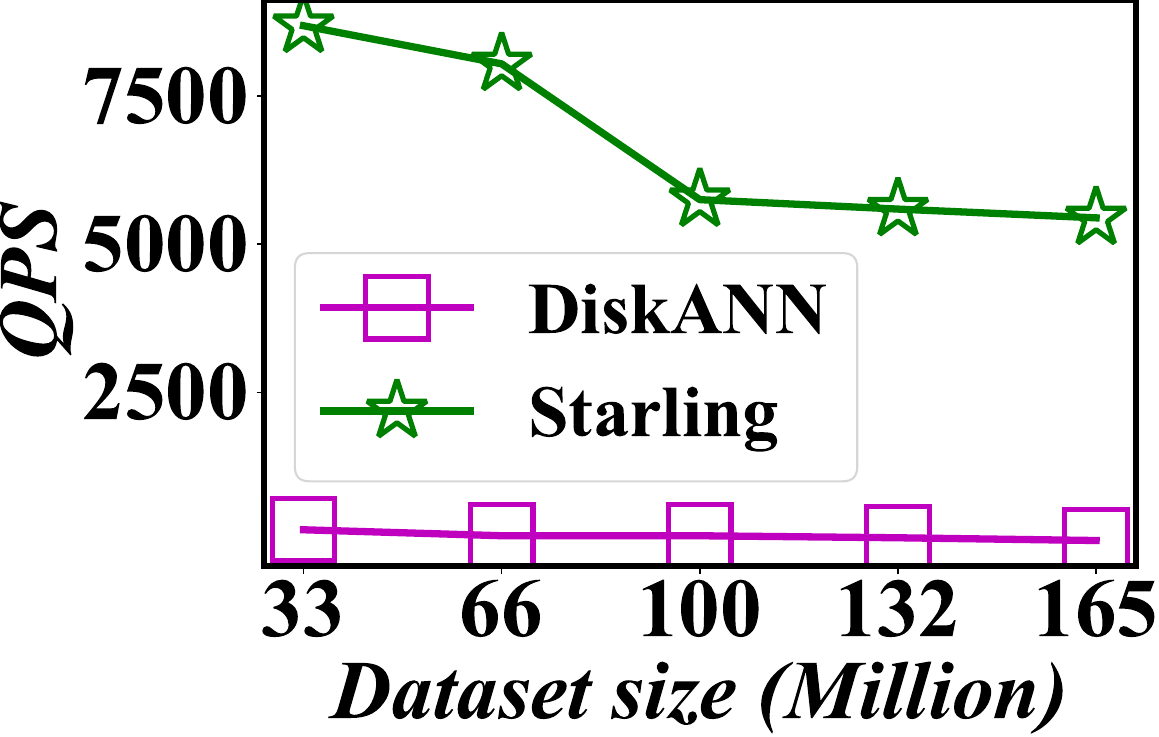}}{(a) RS ($AP$ = 0.90)}
  \hspace{-0.15cm}
  \stackunder[0.75pt]{\includegraphics[scale=0.161]{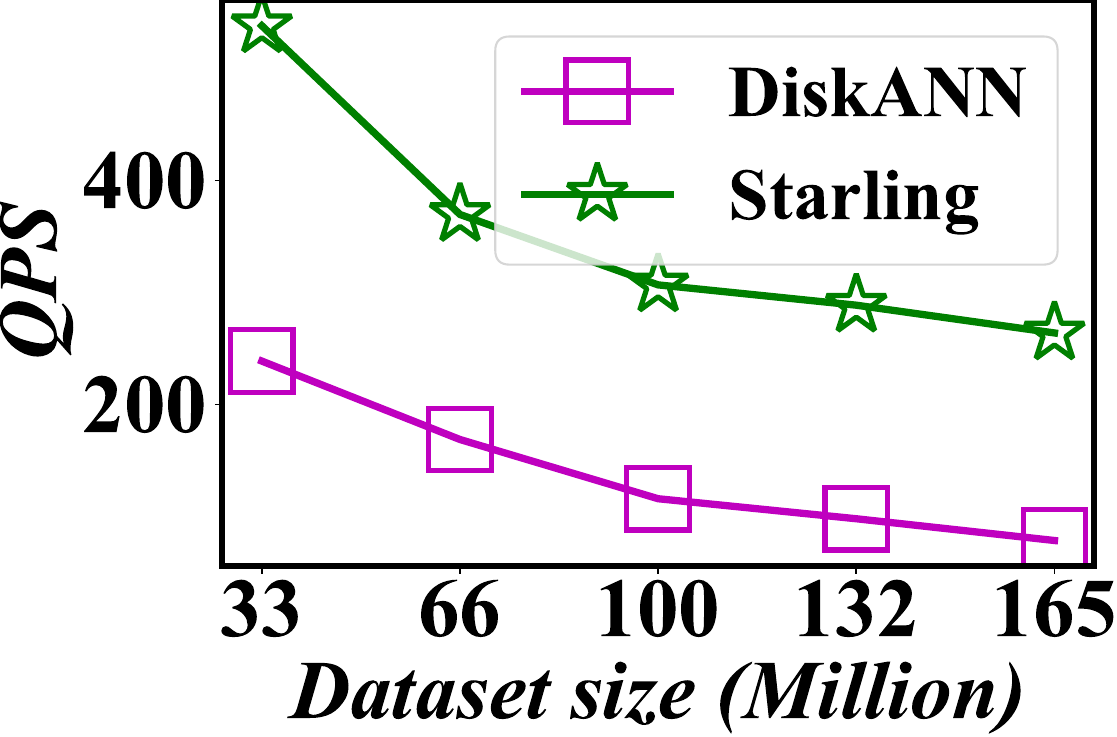}}{(b) ANNS ($Recall$ = 0.99)}
  \caption{Different data segment sizes.}
  \label{fig: larger_segment}
\end{minipage}
\begin{minipage}{0.482\textwidth}
  \setlength{\abovecaptionskip}{0.1cm}
  \setlength{\belowcaptionskip}{0cm}
  \centering
  \footnotesize
  \stackunder[0.75pt]{\includegraphics[scale=0.161]{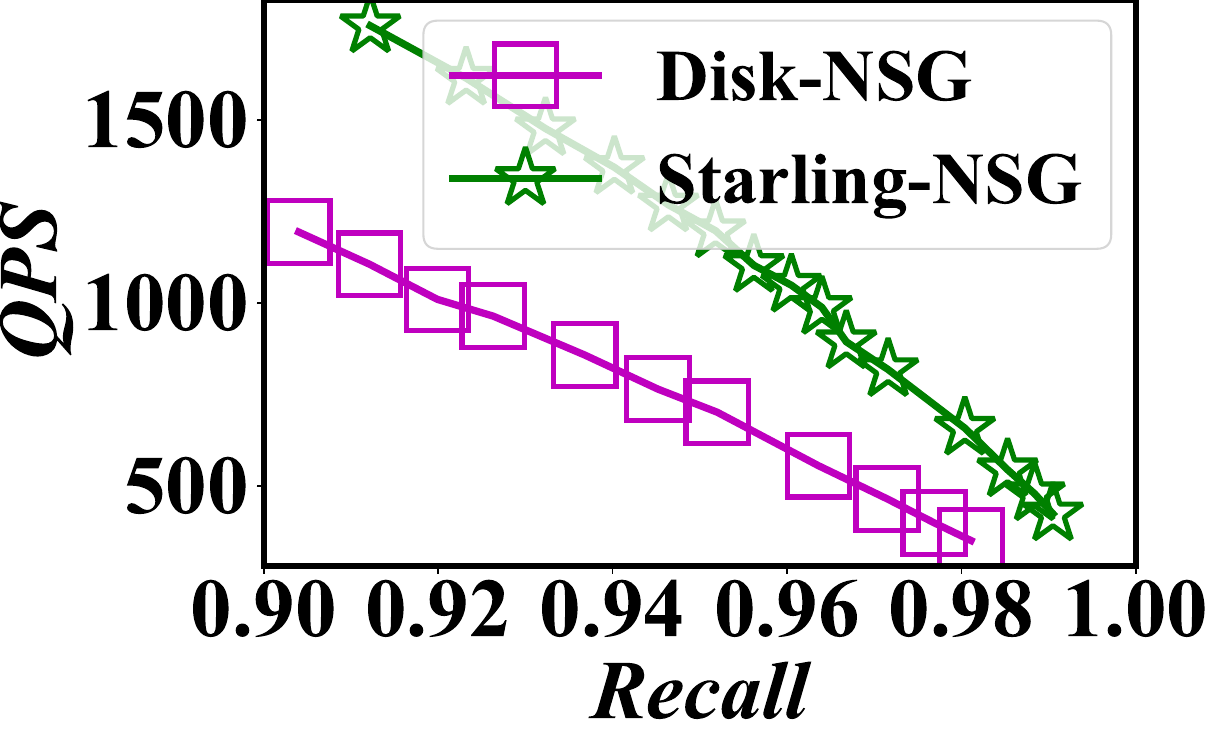}}{(a) NSG}
  \hspace{-0.15cm}
  \stackunder[0.75pt]{\includegraphics[scale=0.161]{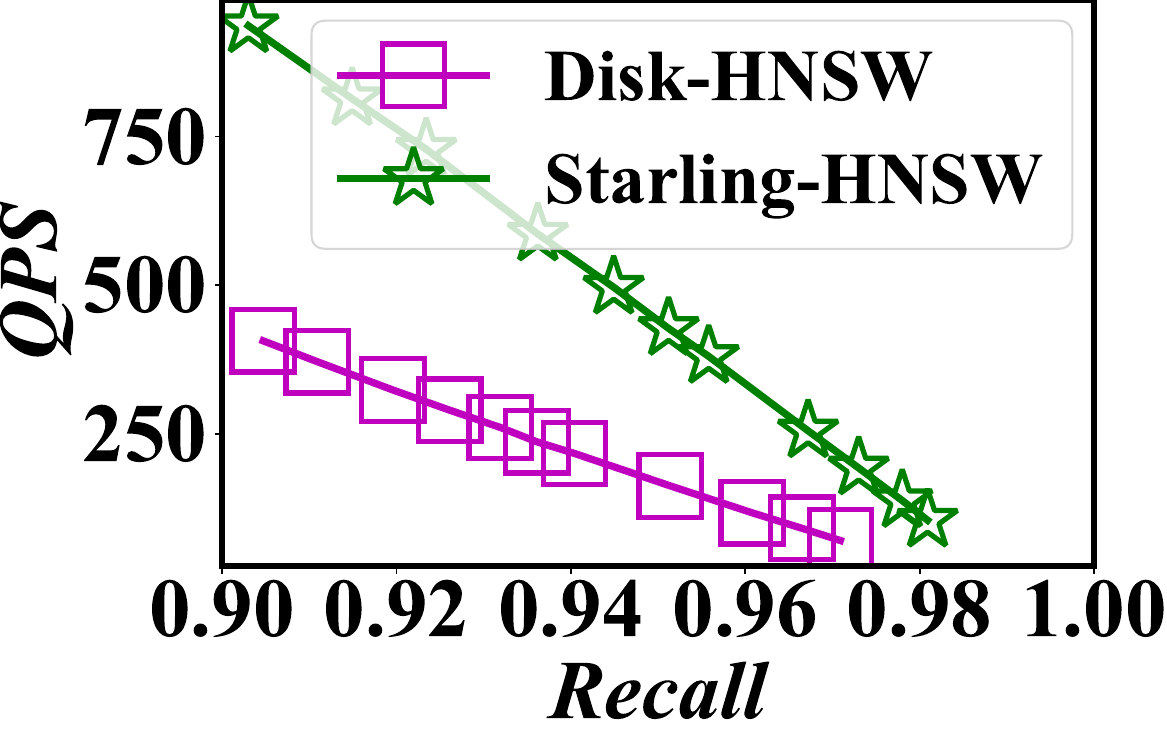}}{(b) HNSW}
  \caption{Different graph algorithms.}
  \label{fig: framework_graph}
\end{minipage}
\vspace{-0.4cm}
\end{figure*}

\vspace{0.2em}
\noindent\textbf{Segment size.}
For other experiments, we establish a fixed dataset size of 4GB within a segment. We test {\name}'s performance on varying segment sizes via distinct dataset volumes. Remember, each segment's memory and disk space increase proportionately. Fig. \ref{fig: larger_segment} demonstrates the impact of data segment sizes. {\name} sustains a higher \textit{QPS} than DiskANN for both RS and ANNS across varying data volumes. This emphasizes that {\name} scales efficiently with segment size.

\vspace{0.2em}
\noindent\textbf{Graph algorithms.} By default, {\name} employs the Vamana algorithm \cite{DiskANN} to construct the disk-based graph index. Here, we assess search performance on disk with two other graph indexes, NSG \cite{NSG} and HNSW \cite{HNSW}. We follow DiskANN to implement NSG and HNSW, but substitute NSG and HNSW ({layer-0}) for Vamana, resulting in \texttt{Disk}-NSG and \texttt{Disk}-HNSW. We also adapt {\name} to use NSG and HNSW, obtaining {\name}-NSG and {\name}-HNSW. {Notably, {\name}-HNSW has a multi-layered in-memory navigation graph with HNSW's upper-layered graphs.} Fig. \ref{fig: framework_graph} shows the \textit{QPS} vs \textit{Recall} comparison. We find that the two graph indexes on {\name} perform better than the baseline framework. For example, {\name}-HNSW's QPS is over 2$\times$ higher than \texttt{Disk}-HNSW. This shows {\name}'s generality to support other graph algorithms.

\setlength{\textfloatsep}{0cm}
\setlength{\floatsep}{0cm}
\begin{figure*}[!th]
\setlength{\abovecaptionskip}{0cm}
\setstretch{0.9}
\fontsize{8pt}{4mm}\selectfont
\begin{minipage}{0.497\textwidth}
  \setlength{\abovecaptionskip}{0.1cm}
  \setlength{\belowcaptionskip}{0cm}
  \centering
  \footnotesize
  \stackunder[0.75pt]{\includegraphics[scale=0.176]{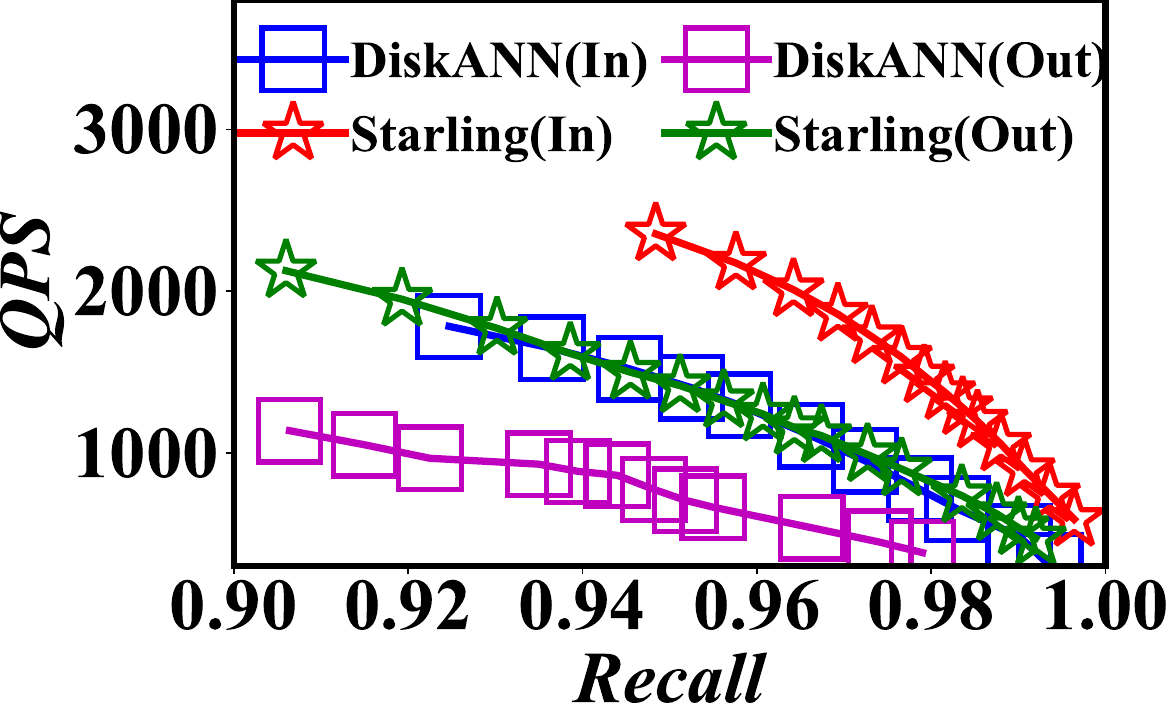}}{(a) Query distribution}
  \hspace{-0.25cm}
  \stackunder[0.75pt]{\includegraphics[scale=0.176]{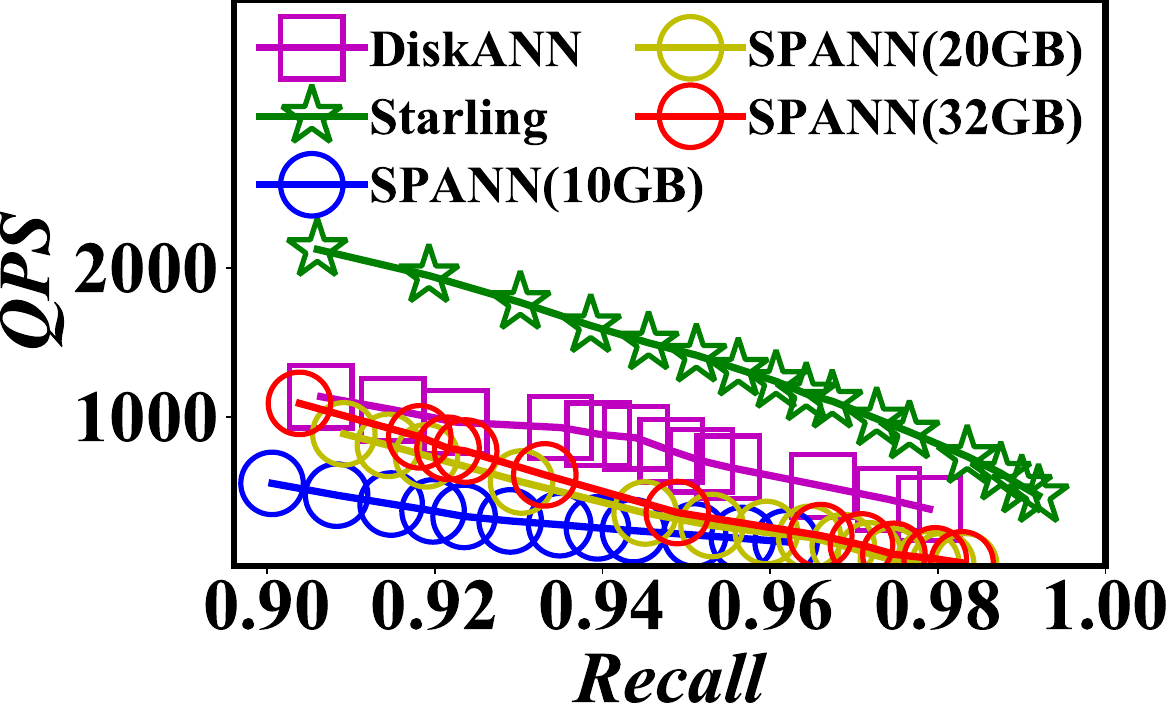}}{(b) Segment setup}
  \caption{Different queries and segments.}
  \label{fig: query distribution and segment setup}
\end{minipage}
\begin{minipage}{0.497\textwidth}
  \setlength{\abovecaptionskip}{0.1cm}
  \setlength{\belowcaptionskip}{0cm}
  \centering
  \footnotesize
  \stackunder[0.75pt]{\includegraphics[scale=0.18]{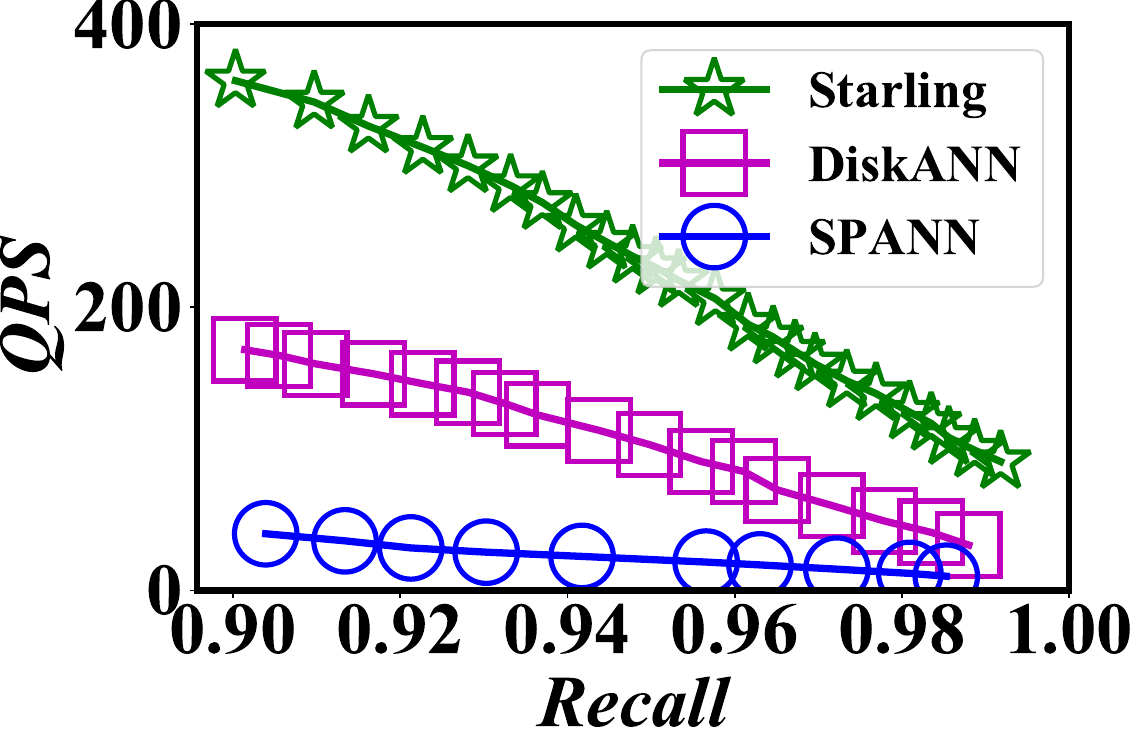}}{(a) 8GB}
  \hspace{-0.16cm}
  \stackunder[0.75pt]{\includegraphics[scale=0.18]{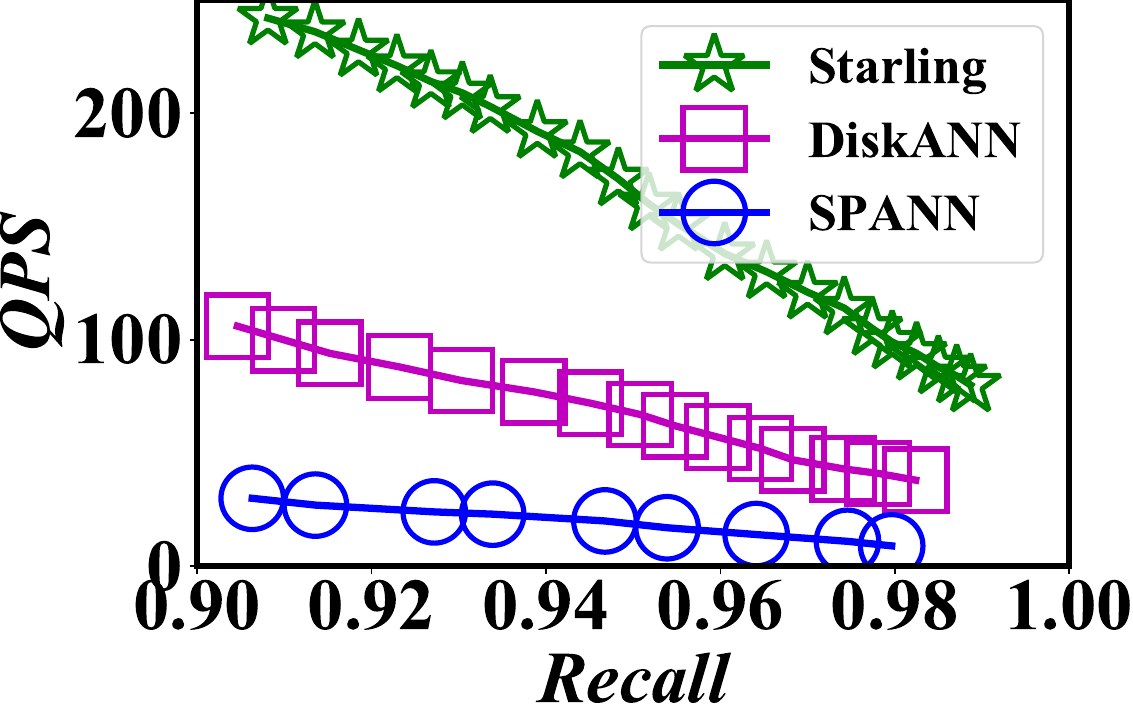}}{(b) 16GB}
  \caption{Different data sizes.}
  \label{fig: different data sizes}
\end{minipage}
\end{figure*}

\subsection{In- and Not-in-Database Queries}
\label{subsec: differnt query type}
We compare different methods on in-database and not-in-database queries on BIGANN. In-database queries are vectors that exist in the graph index, while not-in-database queries are not. We randomly sample 10,000 vectors from the base data as in-database queries and execute the search on two frameworks. Fig. \ref{fig: query distribution and segment setup}(a) shows that {\name} consistently outperforms DiskANN for both types of queries. Moreover, in-database queries achieve higher throughput than not-in-database queries for both frameworks. This is because in-database queries can leverage some better query-aware entry points, which may be the query points themselves. However, the in-memory navigation graph can also find some entry points close to the not-in-database queries to shorten their search path. Notably, both types of query workloads benefit from data locality through offline block shuffling.

\subsection{Evaluation on Other Segment Setups}
\label{subsec: segment setup}
\noindent\textbf{Effect of disk capacity.} {We fix the dataset size at 4GB and memory size at 2GB, and test different segment setups with the disk capacity from 10GB to 32GB. Fig. \ref{fig: query distribution and segment setup}(b) shows the search performance on BIGANN. DiskANN and {\name} exhibit the best \textit{QPS}-\textit{Recall} trade-off with an index size of less than 10GB, so we only plot one curve for each of them. SPANN improves its performance as the disk space increases, because it can duplicate more boundary data points, thereby reducing disk I/Os. However, {\name} still performs much better than SPANN, with a smaller disk setup.}

\noindent\textbf{Effect of dataset size.} We employ a fixed segment space with a memory of 2GB and a disk capacity of 32GB to test diverse dataset sizes of 4GB, 8GB, and 16GB. Fig. \ref{fig: different data sizes} shows the search performance of different methods on BIGANN (see Fig. \ref{fig: query distribution and segment setup}(b) for the 4GB dataset). We tune the parameters of all methods to get the best \textit{QPS}-\textit{Recall} trade-off under the segment space constraint. The results conclusively depict the superior performance of {\name} over rival methods across all dataset sizes. Importantly, the performance gap increases as the dataset size magnifies. SPANN suffers from a larger dataset size because it cannot replicate enough data to minimize disk I/Os.

\subsection{Large-scale Search Results}
\label{subsec: big top-k result}

{We set the number of search results ($k$) to less than 50 for other experiments. In some cases, we may need much more results, such as thousands. For example, in recommendation systems, a large number of candidates are first recalled and then filtered to get the final recommendations \cite{ChenLZWLMHJXDZ22,NHQ}. Fig. \ref{fig: big results and data}(a) shows the search performance with $k=5,000$ on BIGANN. We can see that {\name} has a much lower \textit{Mean I/Os} than DiskANN. For instance, with a \textit{Recall} of 0.99, {\name} saves more than 20,000 disk I/Os per query than DiskANN. This shows that {\name} is more efficient and effective in scenarios with large-scale search results.}

\begin{figure}[!th]
  \setlength{\abovecaptionskip}{0.1cm}
  \setlength{\belowcaptionskip}{0.2cm}
  \centering
  \footnotesize
  \stackunder[0.75pt]{\includegraphics[scale=0.2]{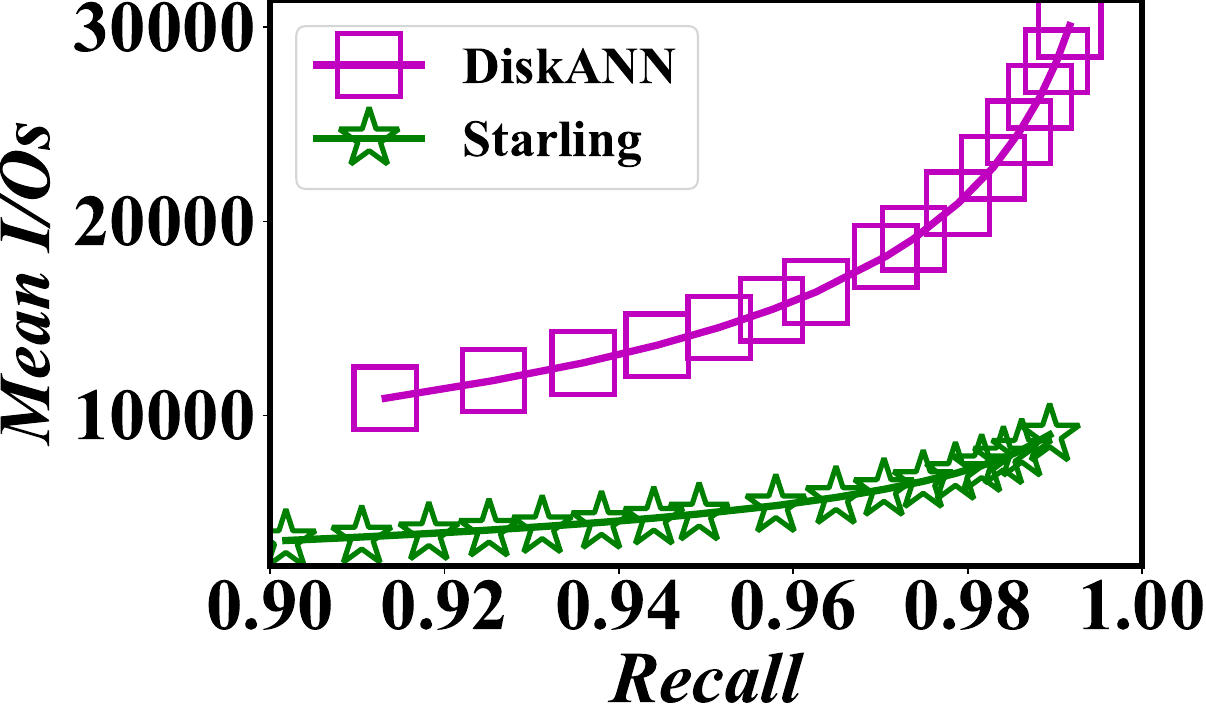}}{(a) $k=5000$}
  \hspace{-0.16cm}
  \stackunder[0.75pt]{\includegraphics[scale=0.2]{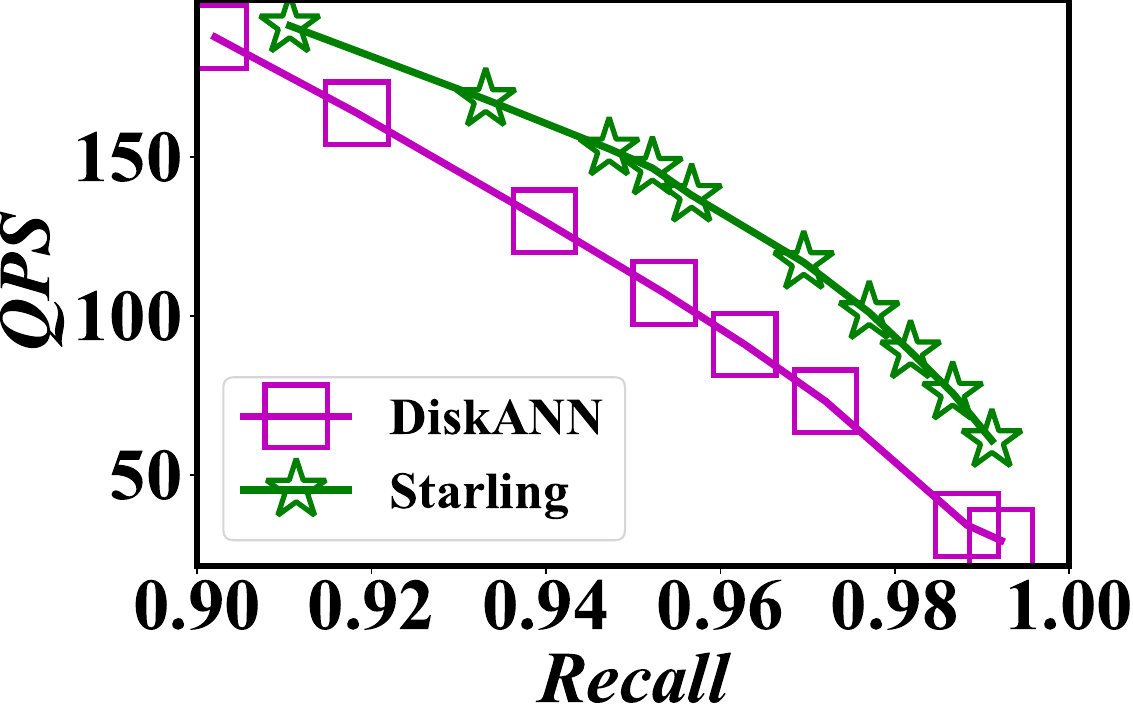}}{(b) Billion-scale data}
  \caption{Large-scale search results and Billion-scale data.}
  \label{fig: big results and data}
  \vspace{-0.2cm}
\end{figure}

\subsection{Evaluation on Billion-scale Data}
\label{subsec: billion-scale evaluation}
{We evaluate our method on the one billion BIGANN dataset. We split the dataset into 31 segments, each with 2GB memory and 10GB disk. One query node has only 32GB of memory in our experiments, so we assigned 31 segments to two query nodes. We merge candidates from each segment to get the final results. We use the same setting for {\name} and DiskANN to ensure fairness. Fig. \ref{fig: big results and data}(b) shows that {\name} is more than 2$\times$ faster than DiskANN in the high recall regime (e.g., \textit{Recall} $>0.96$ ). Note that current vector databases use some data segmentation strategies and a query coordinator, which can avoid scanning all segments for a query \cite{Manu_zilliz}. Our method can work well with these strategies. For more details on data segmentation strategies, please refer to \cite{lanns,zhang2022leqat}.}
\section{Discussion}
\label{sec: discussion}
\noindent\textbf{Memory-based related work.}
{Memory-based HVSS often involves preprocessing data to strike a balance between efficiency and accuracy. Existing algorithms fall into four categories: tree- \cite{DasguptaF08, LuWWK20, MujaL14}; quantization- \cite{PQ, ScaNN, AndreKS15}; hashing- \cite{HuangFZFN15, GongWOX20, LiZSWT020}; and graph-based \cite{HNSW,NSG,NSSG}. Tree-based and hashing-based methods are not prevalent in HVSS due to the ``curse of dimensionality'' and low accuracy \cite{DPG}. Quantization-based methods (e.g., IVFPQ) prove efficient and memory-saving but tend to suffer from a poor recall rate\cite{HM_ANN,DiskANN}. Graph-based methods (e.g., HNSW) exhibit leading efficiency-accuracy trade-offs. However, they necessitate both raw vector data and the graph index to be in the main memory, elevating memory consumption and impeding scalability for large-scale vectors \cite{SPANN}. {\name} is a universal and I/O-efficient framework capable of integrating different graph algorithms (e.g., HNSW) into the disk-index component. For example, {\name} accommodates HNSW seamlessly without sacrificing functionality (cf. Fig. \ref{fig: framework_graph}(b)).}

\vspace{0.2em}
\noindent\textbf{Comparison analysis with SPANN.}
While both {\name} and SPANN leverage in-memory graphs and locality-centric disk indexes, they address varying challenges and utilize unique strategies.
SPANN, a clustering-based disk index, partitions data utilizing k-means, thereby achieving an inherent locality allowing clustered data to be stored and accessed synchronously. Furthermore, it builds an in-memory graph for fast cluster retrieval. In contrast, {\name} addresses disk-based graph search by optimizing data layout and search strategy. When residing in memory, graph-based methods excel in terms of accuracy and efficiency. However, once placed on disk, they incur many I/Os due to long search path and poor data locality.
{\name} mitigates these issues by building an in-memory graph to shorten the search path and using block shuffling to enhance locality. Note that the graph index neighborhood exhibits both similarity and navigation traits \cite{graph_survey_vldb2021}. This implies that a vertex may have neighbors from different clusters \cite{HNSW}, posing a major challenge for the locality of the graph index. We compare block shuffling with a naive strategy that assigns vertices to blocks by k-means on SSNPP. The results show that block shuffling achieves a 12$\times$ higher overlap ratio.

\vspace{0.2em}
\noindent\textbf{In-memory graph.}
Our in-memory graph serves as an index directing query routes, instead of acting as a cache for frequently accessed data. We initially used memory mapping (\textsf{mmap}) instead of direct I/O (\textsf{o\_direct}) but found more disk I/Os linked to a memory-mapped graph. This lowers efficiency compared to the hot points strategy in the baseline framework. Our evaluation showed that the in-memory graph outperforms the hot points strategy in search performance and memory overhead. In {\name}, we have the flexibility to utilize any graph algorithm like HNSW \cite{HNSW} or NSG \cite{NSG} for the in-memory graph. We typically use the same algorithm for both in-memory and disk-based graphs. In HNSW, the upper-layer graphs are a subset of the layer-0 graph. Thus, we can keep the higher layers in memory as a multi-layered in-memory graph and store the layer-0 graph on disk. This makes HNSW easy to implement in {\name} (see Fig. \ref{fig: framework_graph}(b)).

\vspace{0.2em}
\noindent\textbf{Central assumption.} With modern SSD (high IOPS), recent methods like DiskANN can fetch multiple blocks simultaneously in each disk round-trip. This is because fetching a small number of random blocks from a disk takes almost the same time as one block \cite{DiskANN}. We keep this feature in {\name} while increasing the vertex utilization ratio in each loaded block by enhancing locality. This reduces the total I/Os for a query by minimizing round-trips.

\vspace{0.2em}
\noindent\textbf{Data update.} {\name}, primarily emphasizing query optimization for static indexes, can handle incremental updates at the database level \cite{Manu_zilliz}. Some vector databases (e.g., ADBV \cite{ADBV}) segregate dynamic and static indexes to facilitate updates. The dynamic index, residing in memory, is built incrementally and uses a bitset to monitor deleted data. As incoming data continues to grow, the dynamic index expands correspondingly. Consequently, an asynchronous merging process transfers the burgeoning dynamic index to the disk-based static index, necessitating a comprehensive index reconstruction. Then, the block shuffling and in-memory graph techniques come into play.

\vspace{0.2em}
\noindent\textbf{Applications of range search (RS).} {RS is widely used in vector analytics, such as face recognition \cite{schroff2015facenet}, near-duplicate detection \cite{facebook_ssnpp}, and various applications on general embeddings \cite{qin2018gph,WangMW22,song2021promips}. In some retrieval systems \cite{park2015reversed,barthel2019real}, ANNS is followed by RS with seed vertices and similarity thresholds \cite{hezel2023fast} to obtain a similar cluster and offer more comprehensive results \cite{li2020novel}. For example, users can choose a seed vertex to get all related results. Several vector databases (e.g., PASE \cite{PASE}, VBase \cite{zhang2023vbase}, Milvus \cite{Milvus_sigmod2021}) and memory-based works \cite{HVS,aoyama2011fast} support both ANNS and RS on the same dataset.}
\section{Conclusion}\label{sec: conclusion}
We conduct a study on HVSS for the data segment, which is an essential component in vector databases. HVSS for the data segment needs to meet strict requirements in terms of accuracy, efficiency, memory usage, and disk capacity. However, existing methods only address some of these aspects. Our framework, called {\name}, adopts a disk-based graph index approach and considers all these requirements simultaneously. It optimizes the data layout and search strategy to minimize costly disk I/O operations. Experimental results demonstrate that {\name} achieves significant performance improvements compared to state-of-the-art methods while maintaining a small space overhead. {In the future, we plan to apply our methods to cache and GPU optimizations.} We will also integrate {\name} into Milvus for distributed optimization.

\begin{acks}
This work was supported in part by the NSFC under Grants No. (62025206, U23A20296), the Primary R\&D Plan of Zhejiang (2023C03198 and 2021C03156). Xiaomeng Yi and Xiaoliang Xu are the corresponding authors of the work.
\end{acks}

\bibliographystyle{ACM-Reference-Format}
\bibliography{myref}

\clearpage
\appendix
\section*{Appendix}
\section{Proof of Theorem 4.1}
\label{appendix: proof theorem 4.1}
We recall the triple shuffling problem. Given $t=3\cdot \rho$ integers $\alpha_0$, $\alpha_1$, $\cdots$, $\alpha_{t-1}$ and a threshold $\Omega$ such that $\Omega/4 < \alpha_i < \Omega/2$ and 
\begin{equation}
  \label{equ: block reconstruction proof}
  \sum_{i=0}^{t-1} \alpha_i = \rho \cdot \Omega \quad,
\end{equation}
the task is to partition the numbers into $\rho$ triples and decide if these triples can be shuffled by swapping numbers between triples so that each triple sums up to $\Omega$. This problem is strongly NP-complete \cite{book1980michael,andreev2004balanced}.

We construct a graph $G$ with cliques of size $\alpha_i$ for each integer $\alpha_i$. This graph has polynomial size $\rho \cdot \Omega$ since the integers are bounded by $\Omega/4 < \alpha_i < \Omega/2$.

\vspace{0.2cm}
\noindent\textit{\textbf{1. The block shuffling problem is NP-hard.}}

In the following, we prove that the block shuffling problem on $G$ has a solution iff the triple shuffling problem can be solved, which implies that the block shuffling problem is NP-hard. Note that $\Omega$ is the number of vertices in a block and $\rho$ is the number of blocks in the graph layout of $G$.

\noindent($\Rightarrow$): If there is a block shuffling of $G$ with the largest $OR(G)$ (let $c$ be the largest $OR(G)$), then each block contains three cliques, and the sum of their sizes is $\Omega$. We can map each clique size in a block to an integer and get a solution to the triple shuffling problem.

\noindent($\Leftarrow$): If there is a solution to the triple shuffling problem, then the integers can be grouped into triples that sum up to $\Omega$. We can map each triple to a block and each integer to a clique size in that block, resulting in a block shuffling of $G$ with $OR(G)=c$.

\vspace{0.2cm}
\noindent\textit{\textbf{2. The block shuffling problem does not have a polynomial time approximation algorithm with a finite approximation factor unless P=NP.}}

Suppose that we have a polynomial time approximation algorithm with a finite approximation factor for the block shuffling problem. We can use it to solve the triple shuffling problem. \underline{Case 1:} If there is a block shuffling of $G$ with the largest $OR(G)$, then the triple shuffling problem has a solution. \underline{Case 2:} If there is no block shuffling of $G$ with the largest $OR(G)$, then $OR(G)$ is less than $c$ and the triple shuffling problem has no solution. Any approximation algorithm with a finite approximation factor for the block shuffling problem must be able to tell these two cases apart to solve the triple shuffling problem. Since we know that checking whether it returns the largest $OR(G)$ or not is NP-hard, which contradicts the assumption.

\section{Disk-based Graph Index Method}
\label{appendix: disk graph method}
Graph-based vector search methods are more efficient and accurate than other solutions \cite{DiskANN,NSG}. Fig. \ref{fig: disk_graph} shows how they build a graph index $G=(V,E)$ for the vector dataset. Each data point (i.e., vector) is a vertex in $V$ and each edge in $E$ connects close vertices \cite{HVS,graph_survey_vldb2021}. A block is the smallest unit for disk I/O operations. The disk stores the adjacency list of each vertex with its vector data, and packs data of multiple vertices into one block. This way, one disk access can read both vector data and neighbor IDs for a vertex. In the example, each block stores data of four vertices. The entire block is loaded into the main memory when accessing data of any vertex in it.

Most graph index methods~\cite{HNSW,HVS,NSSG} use a similar search strategy, called the greedy search strategy (we call it as the vertex search strategy in this paper) \cite{NSG}. Fig.~\ref{fig: disk_graph} shows an example of finding the nearest vertex to a query vector. The search starts from a random/fixed vertex and traverses the graph. In the example, vertex \ID{12} is the entry point. Its neighbors are visited to measure their distance to the query vector. Then vertex \ID{8} is chosen as the closest visited vertex to the query vector. This process repeats iteratively. In each iteration, it finds a neighbor of the current vertex that is closer to the query vector. In the example, the search visits vertices \ID{8}, \ID{5}, \ID{2}, \ID{15} in order, until it reaches vertex \ID{3} who has no closer neighbors to the query vector than itself. In practice, a priority queue maintains candidate vertices. In each iteration, we pop one vertex from the queue, visit its neighbors, and add the neighbors that may update the current result to the queue. In the example, we set the queue length as 1 for simplicity.

The vertex search strategy has a computational cost of $O(o \cdot \ell)$, where $\ell$ is the search path length (i.e., the number of hops) and $o$ is the average out-degree of visited vertices. This is because we need to calculate the distance to the query vector for all the vertices in the search path and their neighbors. Let $\xi$ be the vertex utilization ratio and $\varepsilon$ be the number of vertices in a disk block. Then $\xi\cdot \varepsilon$ represents the number of vertices accessed in each disk I/O. The I/O complexity is then $O((o\cdot \ell)/(\xi \cdot \varepsilon))$. The current disk-based graph index method only checks the target vertex and discards the rest for a loaded block, so $\xi\cdot \varepsilon$ is 1. The average disk I/O complexity is then $O(o \cdot \ell)$, since each vertex data access requires one disk I/O operation. A recent study~\cite{DiskANN} reduced the I/O complexity by keeping the compressed vector data of all vertices in memory. It uses compressed data to calculate the approximate distance to the query before accessing a vertex. It skips vertices that are unlikely to be closer to the query without reading data from disk. However, this strategy trades accuracy for efficiency. Many disk I/Os are still needed for high accuracy. Modern storage media have relatively high performance on latency and peak bandwidth, but they are much slower than processors. Therefore, disk I/O operation is the main cost of the disk-based vector search procedure.

\begin{figure}[!tb]
  \centering
  \setlength{\abovecaptionskip}{0cm}
  \includegraphics[width=.6\linewidth]{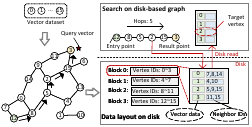}
  \caption{{Illustration of a disk-based graph index: logical topology (left), physical data layout (right).}}
  \label{fig: disk_graph}
\end{figure}

\section{Parameters of BNF}
\label{appendix: parameters of bnf}
BNF has two hyper-parameters: maximum iterations $\beta$ and $OR(G)$ gain threshold $\tau$. They affect the $OR(G)$ of the graph layout. We evaluate their effect below. We set the maximum number of neighbors $\Lambda$ to 48 and block size $\eta$ to 4 KB. The experimental dataset is BIGANN, where each vector is 128-dimensional and one byte per value. So, each vertex takes $\gamma=(128+4+48\times4)/1,024$ KB (ID is unsigned integer type), and a block can hold at most $\varepsilon=12$ vertices. Tab. \ref{tab: blocks and edges of bnf} shows the number of blocks ($\rho$) and the number of edges of $G$ ($|E|$) on different datasets. We use 64 threads for block shuffling.

\noindent\textbf{Results.} Tab. \ref{tab: overlap ratio of bnf} and \ref{tab: time of bnf} show the $OR(G)$ and execution time results on BIGANN with different data volumes. We can see that BNF reaches a stable $OR(G)$ after a few iterations and $\tau$ is always 0.01. A smaller $\tau$ gives a higher $OR(G)$ but takes more time. A larger $\tau$ gives a lower $OR(G)$. So, we use $\tau=0.01$ by default in other experiments. Also, we observe that the $OR(G)$ increases slowly and the time increases sharply as $\beta$ increases. $\beta=8$ is enough to get a high $OR(G)$. So, we use $\beta=8$ by default. We note that the $OR(G)$ is lower and the time is higher for a larger dataset. In vector databases, each segment has about tens of millions of vectors and BNF is efficient on a data segment. We evaluate the $OR(G)$ of graph layout on a data segment based on BNF in {Appendix \ref{appendix: bnf evaluation}}.

\begin{table}[!tb]
  \centering
  \setlength{\abovecaptionskip}{0.1cm}
  \setlength{\belowcaptionskip}{0.3cm}
  \setstretch{0.8}
  \fontsize{6.5pt}{3.3mm}\selectfont
  \caption{$\rho$ and $|E|$ on BIGANN with different volumes.}
  \label{tab: blocks and edges of bnf}
  \setlength{\tabcolsep}{.015\linewidth}{
  \begin{tabular}{|l|l|l|l|l|}
    \hline
     & \textbf{BIGANN10K} & \textbf{BIGANN1M} & \textbf{BIGANN10M} & \textbf{BIGANN100M} \\
    \hline
    $\rho$ & 834 & 83,334 & 833,334 & 8,333,334 \\
    \hline
    $|E|$ & 48$\times10^4$ & 48$\times10^6$ & 48$\times10^7$ & 48$\times10^8$ \\
    \hline
  \end{tabular}
  }
\end{table}

\begin{table}[!tb]
  \centering
  \setlength{\abovecaptionskip}{0.1cm}
  \setlength{\belowcaptionskip}{0.3cm}
  \setstretch{0.8}
  \fontsize{6.5pt}{3.3mm}\selectfont
  \caption{$OR(G)$ of graph layout generated by BNF.}
  \label{tab: overlap ratio of bnf}
  \setlength{\tabcolsep}{.015\linewidth}{
  \begin{tabular}{|l|l|l|l|l|l|}
    \hline
    \textbf{$\beta\rightarrow$} & \textbf{4} & \textbf{8} & \textbf{16} & \textbf{32} & \textbf{64} \\
    \hline
    BIGANN10K & 0.4763 & 0.4979 & 0.5015 & 0.5048 & 0.5056 \\
    \hline
    BIGANN1M & 0.3312 & 0.3462 & 0.3542 & 0.3587 & 0.3612 \\
    \hline
    BIGANN10M & 0.2951 & 0.3103 & 0.3183 & 0.3230 & 0.3256 \\
    \hline
    BIGANN100M & 0.2640 & 0.2792 & 0.2876 & - & - \\
    \hline
  \end{tabular}
  }
\end{table}

\begin{table}[!tb]
  \centering
  \setlength{\abovecaptionskip}{0.1cm}
  \setlength{\belowcaptionskip}{0.3cm}
  \setstretch{0.8}
  \fontsize{6.5pt}{3.3mm}\selectfont
  \caption{Execution time (seconds) of BNF under different $\beta$.}
  \label{tab: time of bnf}
  \setlength{\tabcolsep}{.015\linewidth}{
  \begin{tabular}{|l|l|l|l|l|l|}
    \hline
    \textbf{$\beta\rightarrow$} & \textbf{4} & \textbf{8} & \textbf{16} & \textbf{32} & \textbf{64} \\
    \hline
    BIGANN10K & 0.0158 & 0.0303 & 0.0637 & 0.1393 & 0.2819 \\
    \hline
    BIGANN1M & 2.508 & 4.850 & 9.759 & 20.15 & 39.31 \\
    \hline
    BIGANN10M & 35.68 & 70.64 & 140.7 & 280.2 & 554.8 \\
    \hline
    BIGANN100M & 451.0 & 887.2 & 1760.5 & - & - \\
    \hline
  \end{tabular}
  }
\end{table}

\begin{figure}
  \vspace{0.2cm}
  \setlength{\abovecaptionskip}{0cm}
  \setlength{\belowcaptionskip}{0cm}
  \centering
  \footnotesize
  \stackunder[0.5pt]{\includegraphics[scale=0.18]{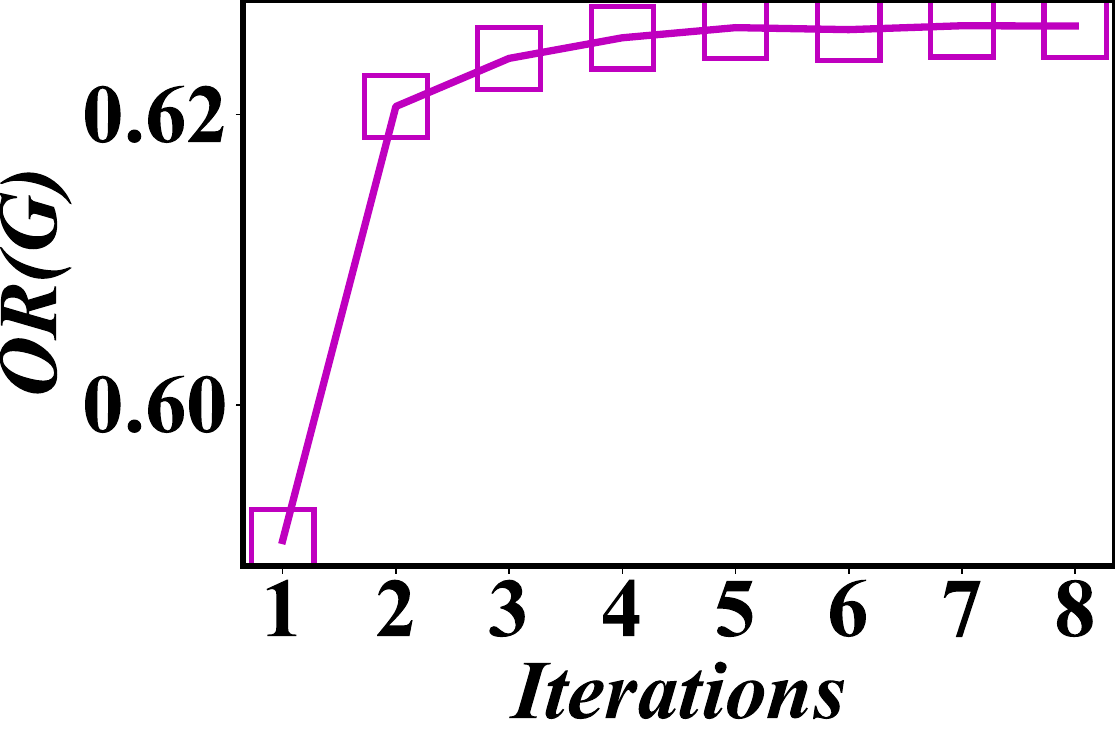}}{(a) Text2image}
  \stackunder[0.5pt]{\includegraphics[scale=0.18]{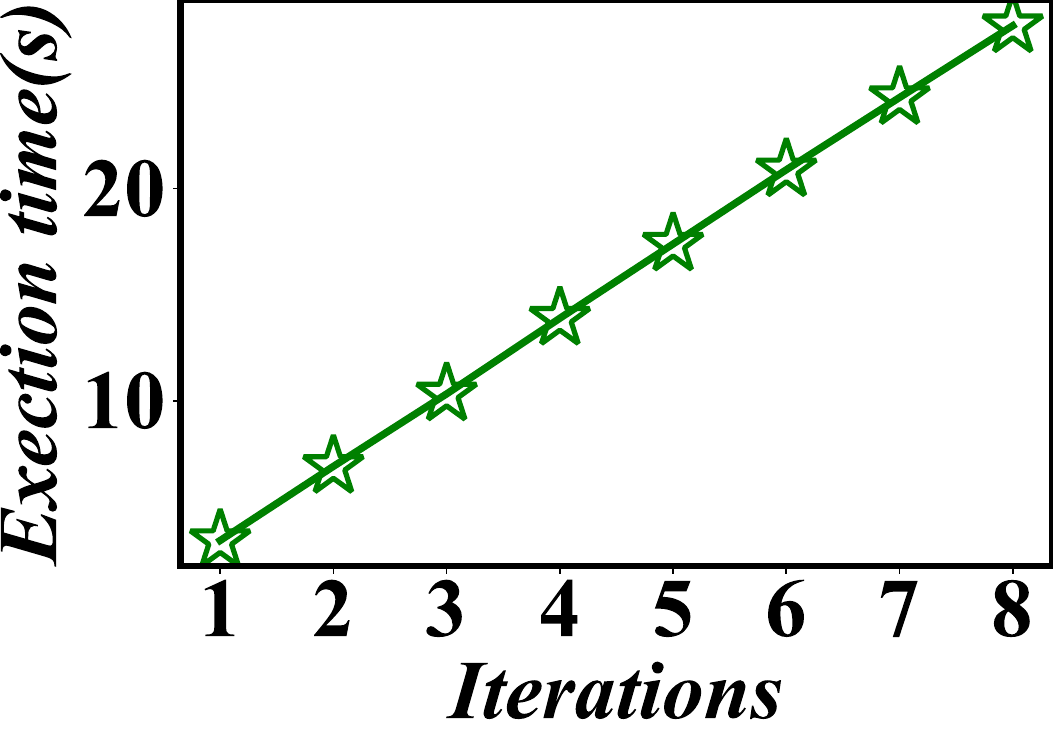}}{(b) Text2image}
  \stackunder[0.5pt]{\includegraphics[scale=0.18]{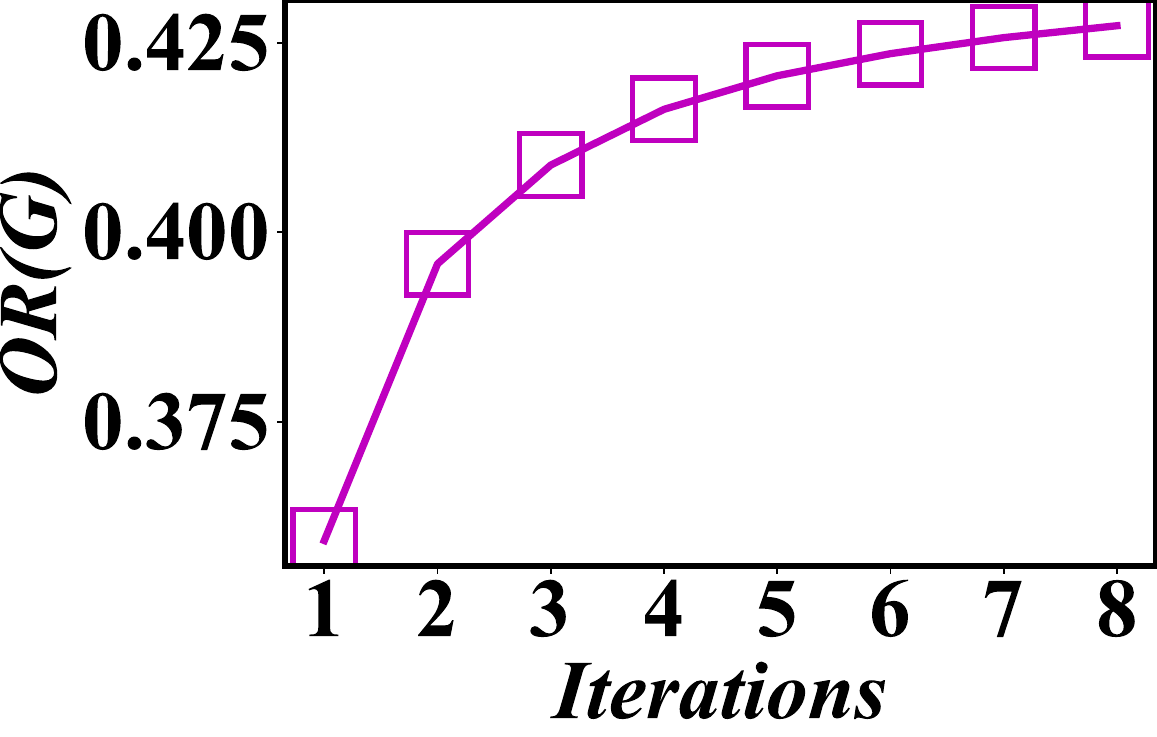}}{(c) DEEP}
  \stackunder[0.5pt]{\includegraphics[scale=0.18]{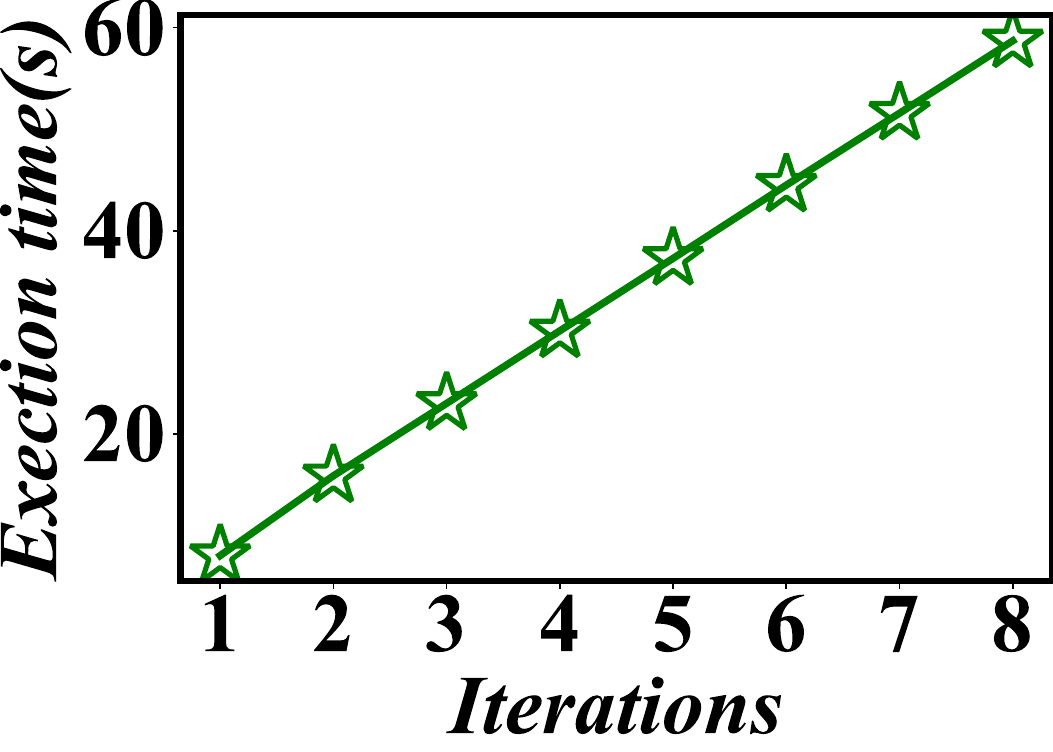}}{(d) DEEP}
  \newline
  \stackunder[0.5pt]{\includegraphics[scale=0.18]{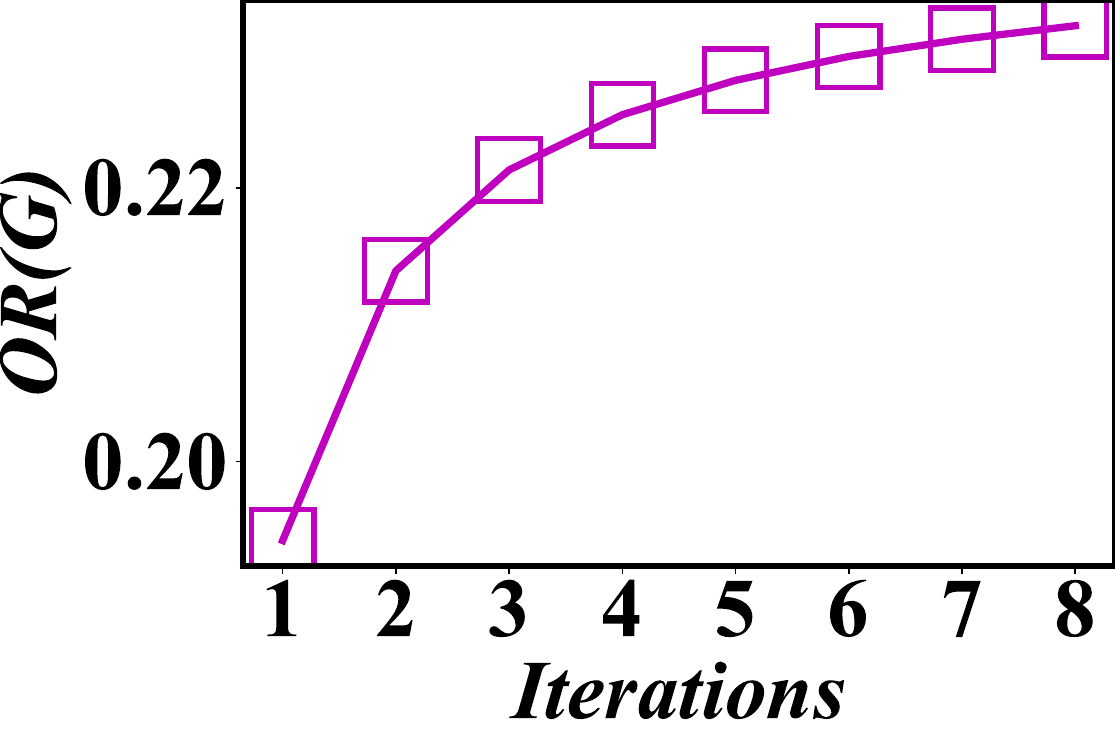}}{(e) SSNPP}
  \stackunder[0.5pt]{\includegraphics[scale=0.18]{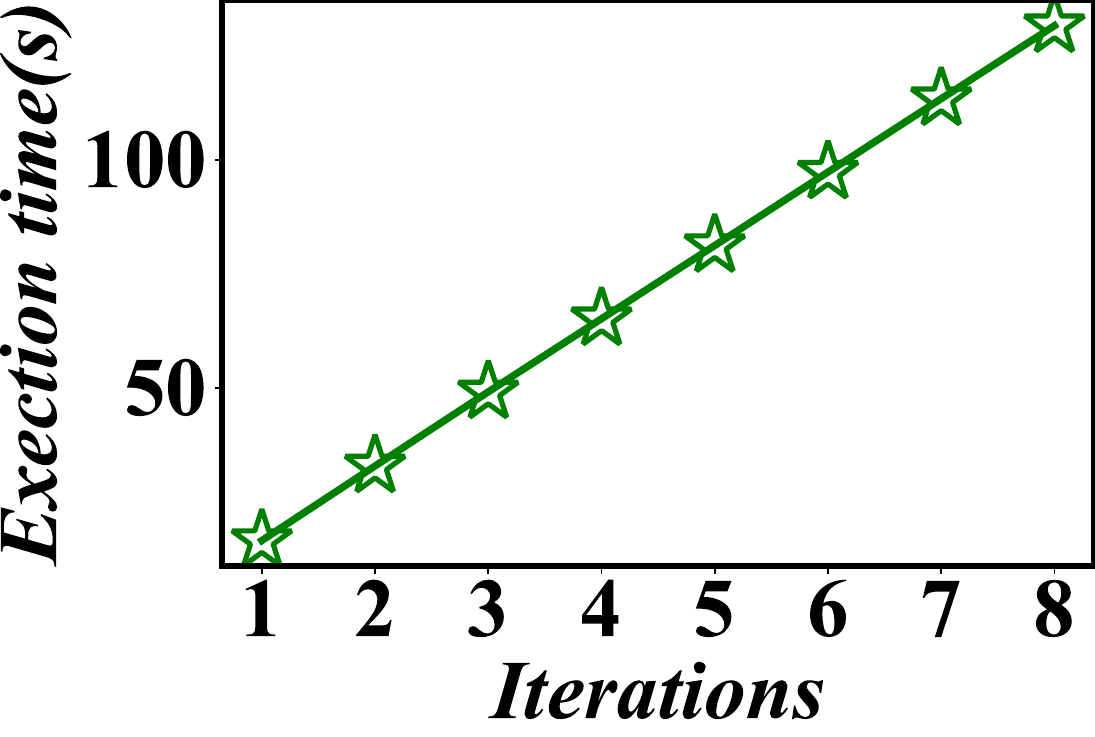}}{(f) SSNPP}
  \stackunder[0.5pt]{\includegraphics[scale=0.18]{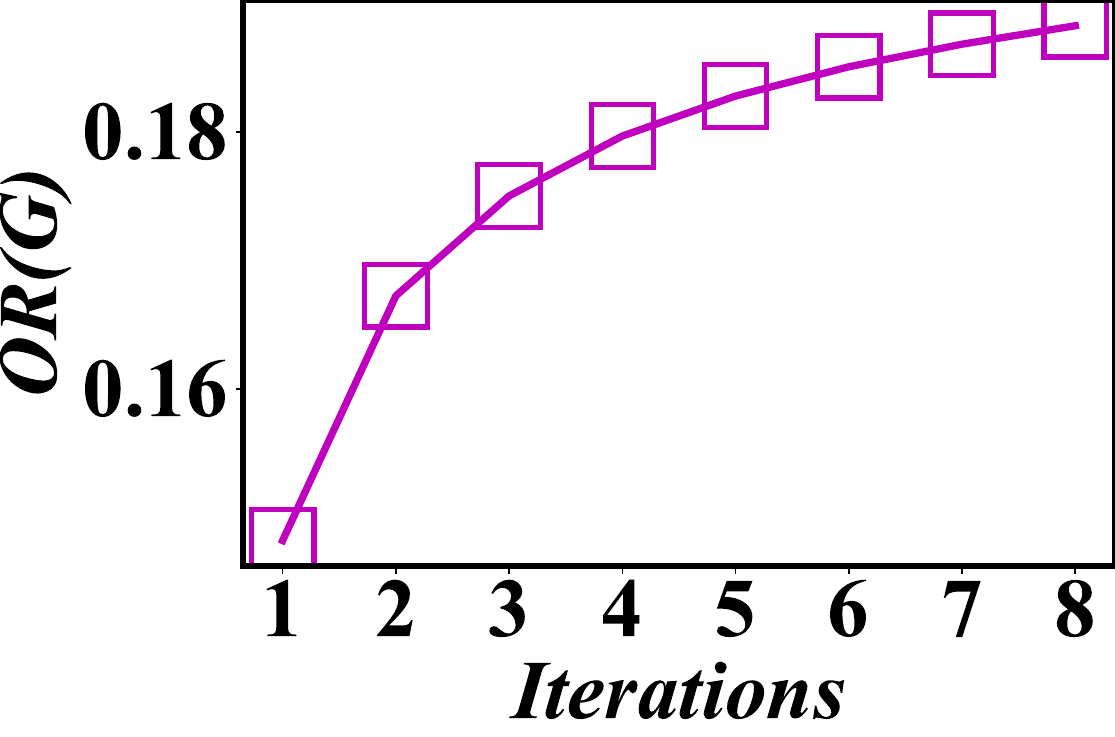}}{(g) BIGANN}
  \stackunder[0.5pt]{\includegraphics[scale=0.18]{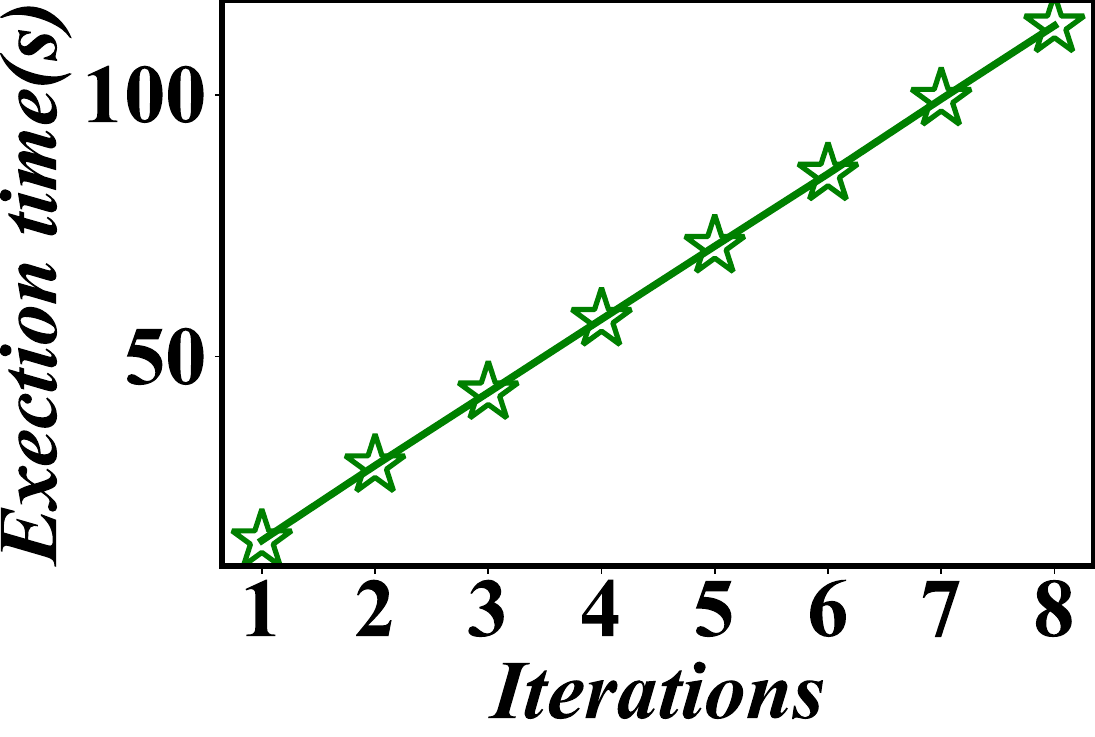}}{(h) BIGANN}
  \newline
  \caption{The effect of different iterations ($\beta$) on $OR(G)$ and execution time for a data segment on different datasets.}
  \label{fig: bnf on data segment}
\end{figure}

\section{BNF Evaluation on Data Segment}
\label{appendix: bnf evaluation}
In our experiment study, we use BNF for block shuffling by default. We show the $OR(G)$ and execution time of BNF with different iterations $\beta$ on a data segment in Fig. \ref{fig: bnf on data segment}. We use $\beta=8$ by default. See {Appendix \ref{appendix: parameters of bnf}} for parameters evaluation.

\section{BNS Algorithm}
\label{appendix: bns algorithm}
\hyperref[alg: npd]{Algorithm 2} shows the detailed procedure of BNS. First, we obtain the input graph layout based on BNP or BNF. Then, for each $u$ in $V$, we optimize the blocks where $u$'s neighbors are located by swapping the vertices with the lowest $OR$ in these blocks. This refinement can also be repeated iteratively like BNF (\hyperref[alg: npf]{Algorithm 1}), to get a better graph layout. Let $o$ be the average out-degree. For each vertex $u$ in $V$, the number of swaps is $o^2$. In each swap, the time complexity of computing one block's $OR$ is $O(o\cdot \varepsilon)$, where $\varepsilon$ is the number of vertices in a block. Therefore, the time complexity of BNS is $O(\beta \cdot o^3\cdot \varepsilon \cdot |V|)$ for $\beta$ iterations, or $O(\beta \cdot o^2\cdot \varepsilon \cdot |E|)$ ($|E|=o\cdot|V|$).

\begin{algorithm}[t]
\label{alg: npd}
  \caption{\textsc{Block Shuffling by BNS}}
  \LinesNumbered
  \KwIn{block-level graph layout of $G=(V,E)$ returned by BNP or BNF, number of iterations $\beta$, $OR(G)$ gain threshold $\tau$}
  \KwOut{new block-level graph layout of $G$}
  
  \While{iterations $\leq \beta$}{ 
    \ForAll{$u \in V$}{
      \ForAll{$a,e \in N(u)$ and $a\neq e$ }{
        \textcolor{blue}{\tcc*[f]{\textsf{vertex that has minimal $OR$ in $B(a)$}}}
      
        $x \gets \arg \min_{x\in B(a)}OR(x)$;

        \textcolor{blue}{\tcc*[f]{\textsf{vertex that has minimal $OR$ in $B(e)$}}}
        
        $y \gets \arg \min_{y\in B(e)}OR(y)$;

        \textcolor{blue}{\tcc*[f]{\textsf{$OR$ before swapping}}}
        
        $OR_{old} \gets OR(B(a))+OR(B(e))$;

        \textcolor{blue}{\tcc*[f]{\textsf{$OR$ after swapping}}}
        
        $OR_{new}\gets OR(B(a)) \setminus \{x\} \cup \{y\})+OR(B(e)) \setminus \{y\} \cup \{x\})$;
        
        \If{$OR_{old} < OR_{new}$ }{
          \textcolor{blue}{\tcc*[f]{\textsf{update $B(a)$ and $B(e)$}}}
        
          $B(a) \gets B(a) \setminus \{x\} \cup \{y\}$;
          
          $B(e) \gets B(e) \setminus \{y\} \cup \{x\}$;
        }
        
      }
    }
    
    \If{$OR(G)$ gain $< \tau$}{
      break;
    }
  }
  \textbf{return} new layout of $G$
\end{algorithm}

\section{BNF vs. BNS}
\label{appendix: bnf vs bns}

\begin{table}[!tb]
  \centering
  \setlength{\abovecaptionskip}{0.1cm}
  \setlength{\belowcaptionskip}{0.3cm}
  \setstretch{0.8}
  \fontsize{6.5pt}{3.3mm}\selectfont
  \caption{BNF vs. BNS on BIGANN1M.}
  \label{tab: bnf vs bns}
  \setlength{\tabcolsep}{.015\linewidth}{
  \begin{tabular}{|l|l|l|l|}
    \hline
    \textbf{Algorithm$\downarrow$} & $\beta$ & Execution time (s) for each iteration & $OR(G)$ \\
    \hline
    BNF & 64 & 0.627 & 0.361 \\
    \hline
    BNS & 6 & 1,877 & 0.401 \\
    \hline
  \end{tabular}
  }\vspace{0.2cm}
\end{table}

Tab. \ref{tab: bnf vs bns} shows the block shuffling performance of BNF and BNS on BIGANN1M (which contains one million vectors in 128 dimensions). We use 64 threads for block shuffling. The results show that BNS achieves better $OR(G)$ with fewer iterations, but it is much slower than BNF in each iteration. This makes BNS impractical on larger datasets, unless we can speed up BNS. We leave this as future work. In our experiments, we use BNF for block shuffling by default.

\section{Graph Partitioning Evaluation}
\label{appendix: graph partitioning evaluation}
We evaluate how three existing graph partitioning methods perform on our block shuffling task for disk-based graph index. We summarize the three methods below.
\begin{itemize}[leftmargin=*]
  \item \textbf{GP1 \cite{SPANN}.} This method partitions $V$ by hierarchical balanced clustering. It iteratively splits the vertices in a large cluster into smaller clusters until each cluster has a limit number of vertices.
  \item \textbf{GP2 \cite{predari2016k}.} This method is based on K-way greedy graph growing algorithm (KGGGP). The main idea is a standard greedy approach for bipartitioning. It starts with two random ``seed'' vertices in the two parts. Then, it adds vertices alternately to the parts, choosing the vertex that minimizes a selected criterion.
  \item \textbf{GP3 \cite{awadelkarim2020prioritized}.} This method is from prioritized restreaming algorithms for balanced graph partitioning. It can partition graphs into balanced subsets, which is important for many distributed computing applications. It also prioritizes the gain order of nodes.
\end{itemize}

\begin{table}[!tb]
  \centering
  \setlength{\abovecaptionskip}{0.1cm}
  \setlength{\belowcaptionskip}{0.3cm}
  \setstretch{0.8}
  \fontsize{6.5pt}{3.3mm}\selectfont
  \caption{GP1 vs. BNF on BIGANN10K.}
  \label{tab: gp1 vs bns bigann10k}
  \setlength{\tabcolsep}{.015\linewidth}{
  \begin{tabular}{|l|l|l|l|}
    \hline
    \textbf{Algorithm$\downarrow$} & $\beta$ & Total execution time (s) & $OR(G)$ \\
    \hline
    GP1 & - & 6.8 & 0.3130 \\
    \hline
    BNF & 64 & 0.2819 & 0.5056 \\
    \hline
  \end{tabular}
  }
\end{table}

\begin{table}[!tb]
  \centering
  \setlength{\abovecaptionskip}{0.1cm}
  \setlength{\belowcaptionskip}{0.3cm}
  \setstretch{0.8}
  \fontsize{6.5pt}{3.3mm}\selectfont
  \caption{GP1 vs. BNF on BIGANN1M.}
  \label{tab: gp1 vs bns bigann1m}
  \setlength{\tabcolsep}{.015\linewidth}{
  \begin{tabular}{|l|l|l|l|}
    \hline
    \textbf{Algorithm$\downarrow$} & $\beta$ & Total execution time (s) & $OR(G)$ \\
    \hline
    GP1 & - & 192.24 & 0.2060 \\
    \hline
    BNF & 64 & 39.31 & 0.3612 \\
    \hline
  \end{tabular}
  }
\end{table}

\begin{table}[!tb]
  \centering
  \setlength{\abovecaptionskip}{0.1cm}
  \setlength{\belowcaptionskip}{0.3cm}
  \setstretch{0.8}
  \fontsize{6.5pt}{3.3mm}\selectfont
  \caption{GP1 vs. BNF on SSNPP10M.}
  \label{tab: gp1 vs bns ssnpp10m}
  \setlength{\tabcolsep}{.015\linewidth}{
  \begin{tabular}{|l|l|l|l|}
    \hline
    \textbf{Algorithm$\downarrow$} & $\beta$ & Total execution time (s) & $OR(G)$ \\
    \hline
    GP1 & - & 234.0 & 0.0200 \\
    \hline
    BNF & 16 & 200.3 & 0.2374 \\
    \hline
  \end{tabular}
  }
\end{table}

\begin{table}[!tb]
  \centering
  \setlength{\abovecaptionskip}{0.1cm}
  \setlength{\belowcaptionskip}{0.3cm}
  \setstretch{0.8}
  \fontsize{6.5pt}{3.3mm}\selectfont
  \caption{GP2 vs. BNF on DEEP10M.}
  \label{tab: gp2 vs bns deep10m}
  \setlength{\tabcolsep}{.015\linewidth}{
  \begin{tabular}{|l|l|l|l|}
    \hline
    \textbf{Algorithm$\downarrow$} & $\beta$ & Total execution time (s) & $OR(G)$ \\
    \hline
    GP2 & - & 300.0 & 0.3100 \\
    \hline
    BNF & 64 & 306.8 & 0.4290 \\
    \hline
  \end{tabular}
  }
\end{table}

\begin{table}[!tb]
  \centering
  \setlength{\abovecaptionskip}{0.1cm}
  \setlength{\belowcaptionskip}{0.3cm}
  \setstretch{0.8}
  \fontsize{6.5pt}{3.3mm}\selectfont
  \caption{GP3 vs. BNF on BIGANN10M.}
  \label{tab: gp3 vs bns bigann10m}
  \setlength{\tabcolsep}{.015\linewidth}{
  \begin{tabular}{|l|l|l|}
    \hline
    \textbf{Algorithm$\downarrow$} & $\beta$ & $OR(G)$ \\
    \hline
    GP3 & 4 & 0.2666 \\
    \hline
    GP3 & 8 & 0.2801 \\
    \hline
    BNF & 4 & 0.2951 \\
    \hline
    BNF & 8 & 0.3103 \\
    \hline
  \end{tabular}
  }
\end{table}

\noindent\textbf{Results.} We compare the three graph partitioning methods with BNF using 64 threads to perform all methods. Tab. \ref{tab: gp1 vs bns bigann10k}–\ref{tab: gp3 vs bns bigann10m} show the efficiency and effectiveness results. GP1 and GP2 are not iterative algorithms, so they do not have $\beta$ data. We implement GP3 by adding the gain order \cite{awadelkarim2020prioritized} to BNF, so the time of GP3 is slower than BNF (it is at least equal to BNF). The results show that BNF outperforms the three graph partitioning methods in time and $OR(G)$ over different datasets. Therefore, existing graph partitioning methods are not suitable for our block shuffling task. We analyze the reasons below. Graph partitioning usually targets real-world social networks \cite{pacaci2019experimental,WeiYLL16}, which have natural clustering and power-law degree distribution \cite{abbas2018streaming}. But block shuffling handles proximity graph index built on high-dimensional vectors, where the edge needs both navigation and similarity \cite{DPG,NSSG} and follows a uniform degree distribution \cite{graph_survey_vldb2021}.

\setlength{\textfloatsep}{0cm}
\setlength{\floatsep}{0cm}
\begin{table}[!tb]
  \centering
  \setlength{\abovecaptionskip}{0.05cm}
  \setlength{\belowcaptionskip}{0.3cm}
  \setstretch{0.8}
  \fontsize{6.5pt}{3.3mm}\selectfont
  \caption{$T_{disk\_graph}$ and $T_{shuffling}$ on four datasets.}
  \label{tab: index construction}
  \setlength{\tabcolsep}{.02\linewidth}{
  \begin{tabular}{|l|l|l|}
    \hline
    Dataset$\downarrow$ & $T_{disk\_graph}$ (s) & $T_{shuffling}$ (s) \\
    \hline
    \textbf{BIGANN} & 930 & 113 \\
    \hline
    \textbf{DEEP} & 668 & 59 \\
    \hline
    \textbf{SSNPP} & 1,055 & 130 \\
    \hline
    \textbf{Text2image} & 519 & 28 \\
    \hline
  \end{tabular}
  }\vspace{0.1cm}
\end{table}

\section{Block Shuffling Cost}
\label{appendix: block shuffling cost}
We report the disk-based graph index construction time ($T_{disk\_graph}$ in \textbf{\S \ref{subsec: index cost}}) and block shuffling time ($T_{shuffling}$ in \textbf{\S \ref{subsec: index cost}}) in Tab. \ref{tab: index construction}. From the results, we can see that $T_{shuffling}$ is only 3\%$\sim$12\% of $T_{disk\_graph}$.

\section{Parameters of In-Memory Navigation Graph}
\label{appendix: parameters of memory graph}
In in-memory navigation graph, we have parameter of sample ratio $\mu$ and parameters of graph index. For graph index parameters, we follow the current proximity graph algorithms \cite{graph_survey_vldb2021} and consider the memory limit of a segment. For example, given a sample ratio $\mu$, we adjust the maximum number of neighbors in in-memory navigation graph to keep the memory overhead less than 2 GB on a segment. We analyze the effect of $\mu$ on search performance with the same graph index parameters below.

\begin{table}[!tb]
  \centering
  \setlength{\abovecaptionskip}{0.1cm}
  \setlength{\belowcaptionskip}{0.3cm}
  \setstretch{0.8}
  \fontsize{6.5pt}{3.3mm}\selectfont
  \caption{Search performance and memory overhead under different $\mu$ values.}
  \label{tab: sample ratio memory index}
  \setlength{\tabcolsep}{.015\linewidth}{
  \begin{tabular}{|l|l|l|l|}
    \hline
    \textbf{Sample ratio} & \textbf{\textit{Recall}($k=10$)} & \textbf{\textit{QPS}} & \textbf{Memory overhead (MB)} \\
    \hline
    $\mu=0.001$ & 0.9916 & 3,914 & 521 \\
    \hline
    $\mu=0.01$ & 0.9917 & 3,991 & 558 \\
    \hline
    $\mu=0.1$ & 0.9919 & 4,303 & 918 \\
    \hline
  \end{tabular}
  }\vspace{0.2cm}
\end{table}

\noindent\textbf{Results.} Tab. \ref{tab: sample ratio memory index} shows the search performance and memory overhead with different $\mu$ values. We can see that the \textit{Recall} and \textit{QPS} increase with $\mu$. This is because more data in memory may provide better entry points that are closer to query vector. These entry points can reduce disk I/Os and increase the chance of finding true nearest vectors. But a larger sample ratio also increases the memory overhead. We need to balance search performance and memory overhead on a segment due to the limited memory capacity.

\section{In-Memory Navigation Graph vs. Caching Hot Vertices}
\label{appendix: memory graph vs cache}

\begin{figure}
  \setlength{\abovecaptionskip}{0cm}
  \setlength{\belowcaptionskip}{0cm}
  \centering
  \footnotesize
  \stackunder[0.5pt]{\includegraphics[scale=0.21]{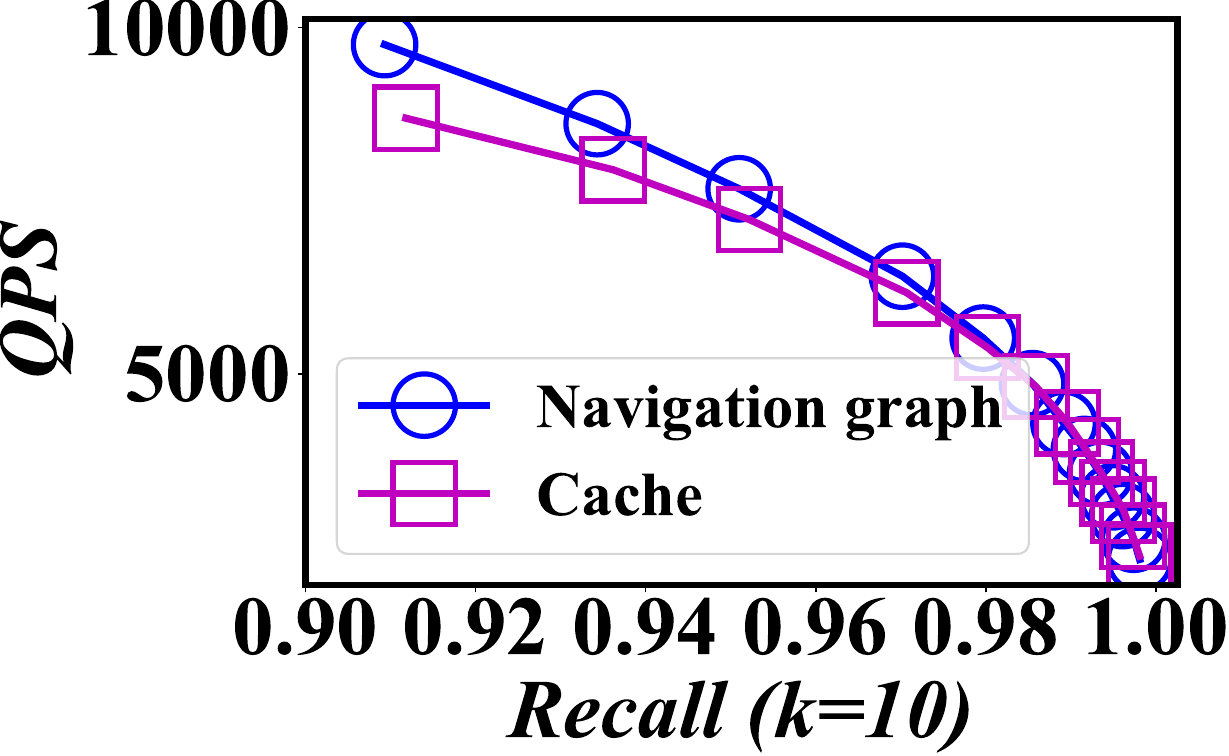}}{(a) $\mu=0.001$}
  \stackunder[0.5pt]{\includegraphics[scale=0.21]{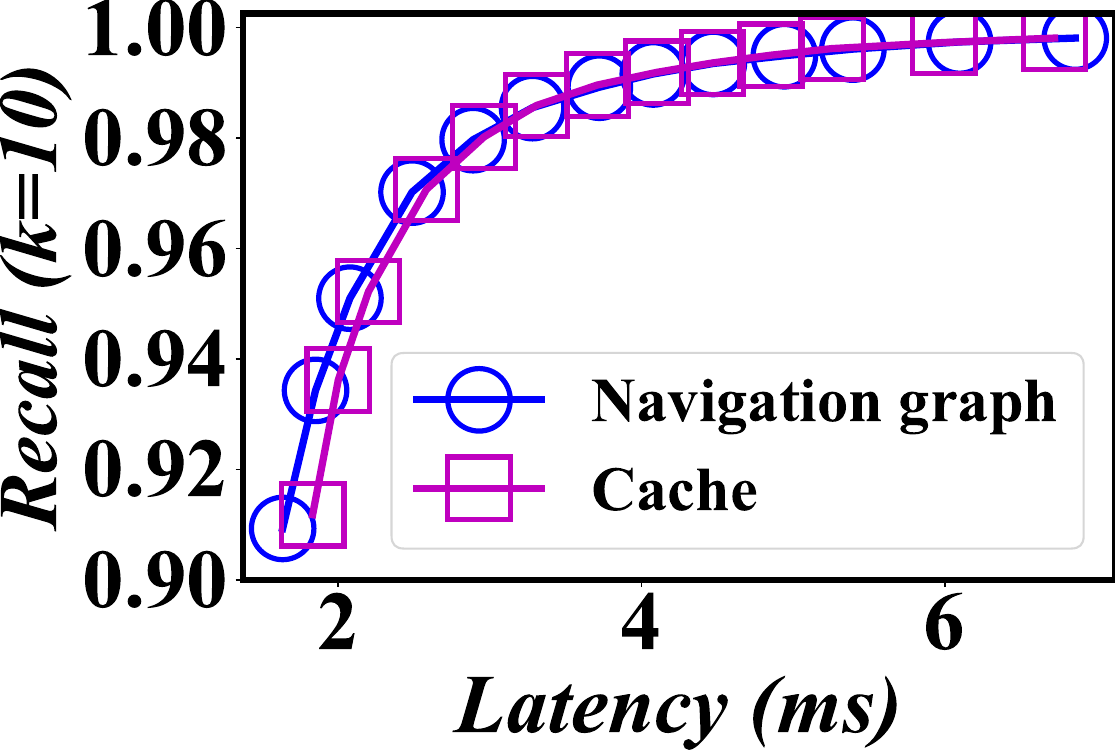}}{(b) $\mu=0.001$}
  \newline
  \stackunder[0.5pt]{\includegraphics[scale=0.21]{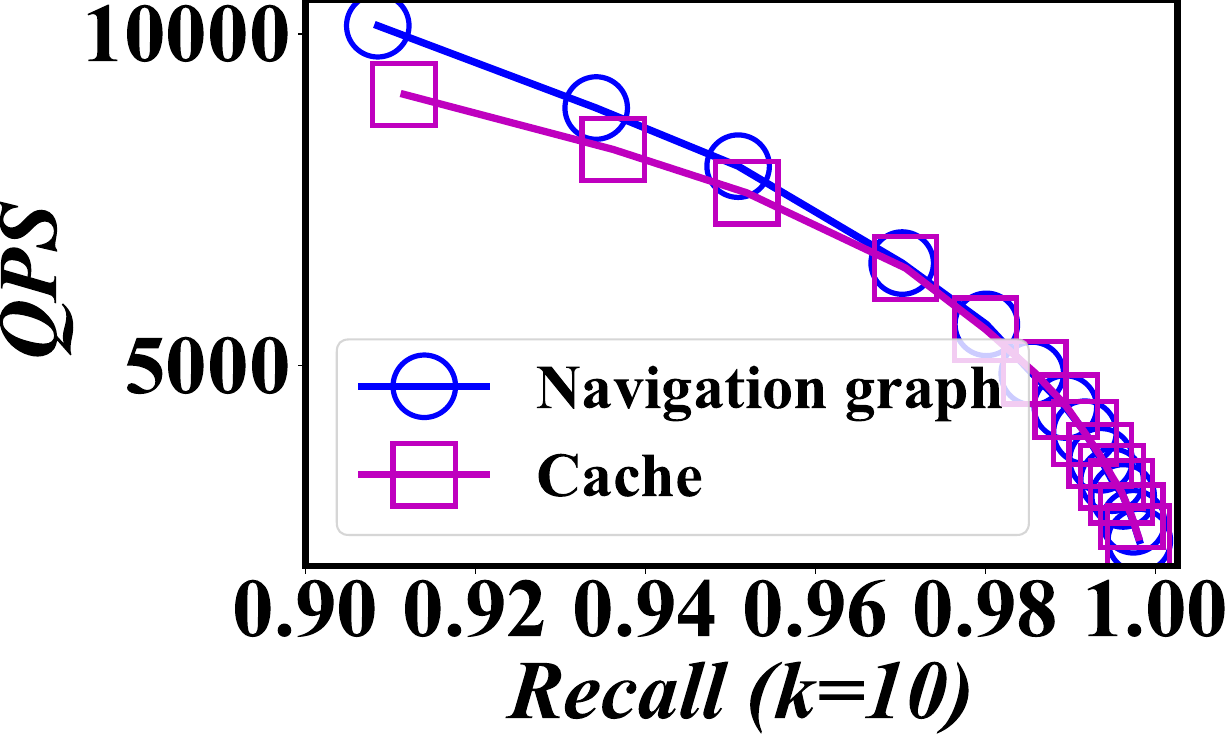}}{(a) $\mu=0.01$}
  \stackunder[0.5pt]{\includegraphics[scale=0.21]{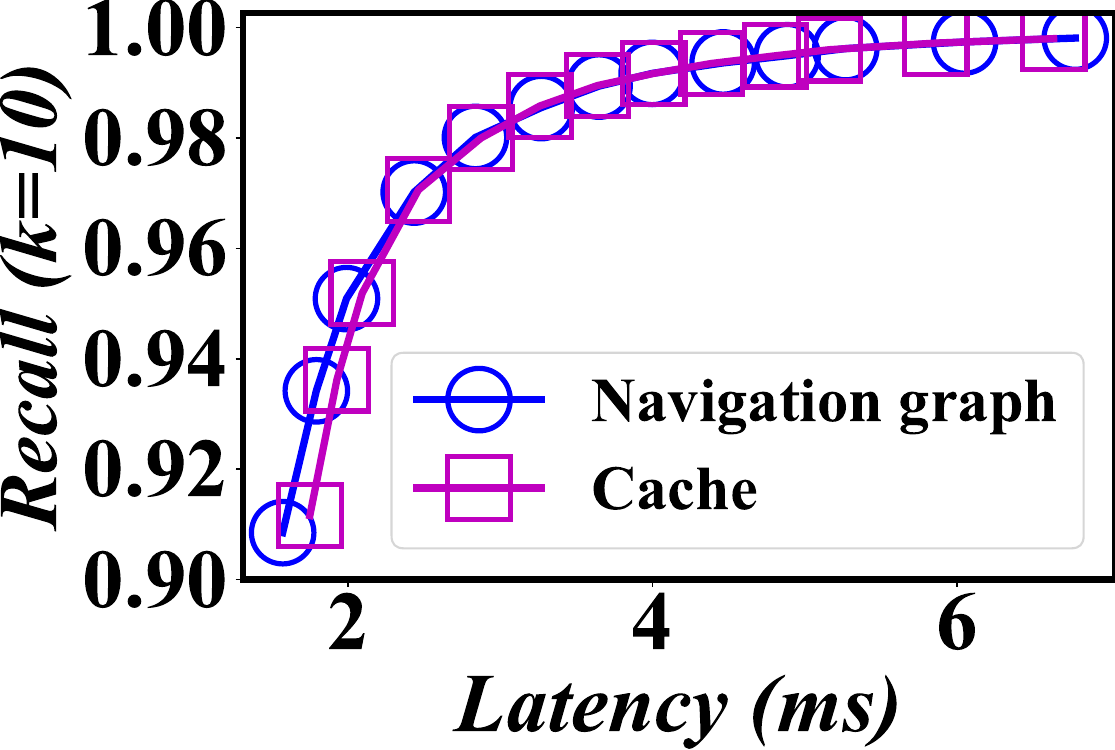}}{(b) $\mu=0.01$}
  \newline
  \stackunder[0.5pt]{\includegraphics[scale=0.21]{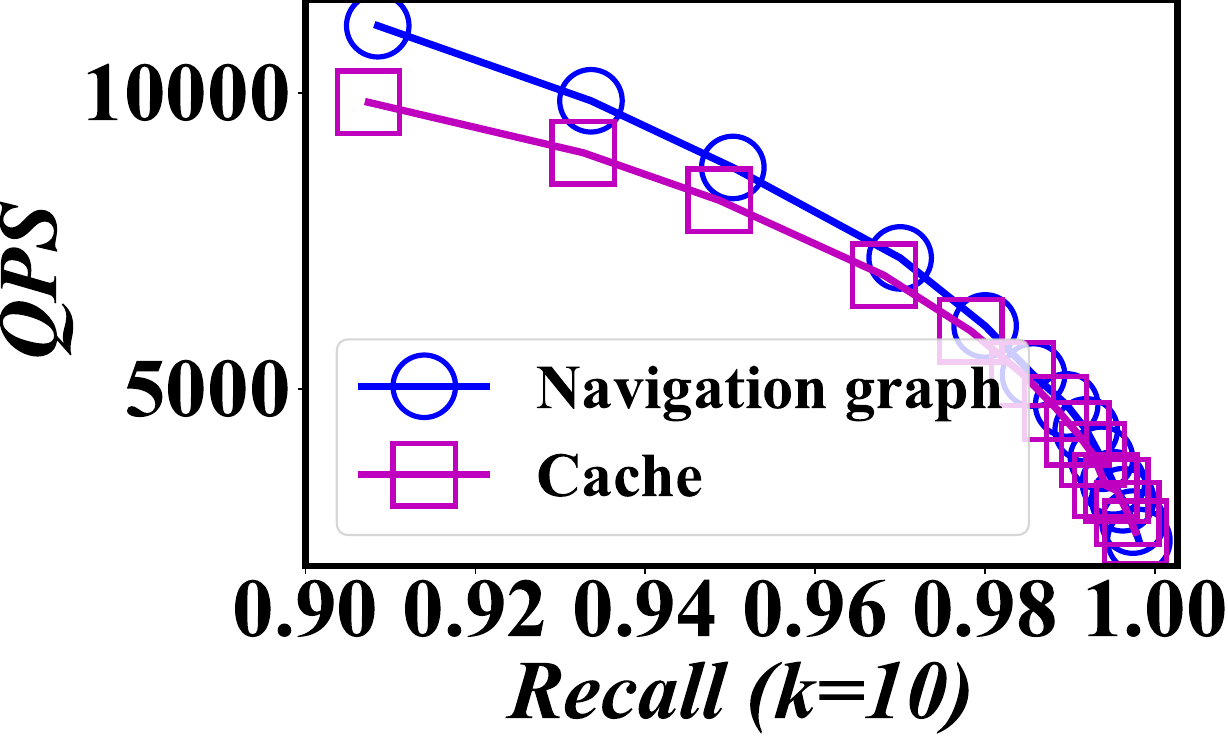}}{(a) $\mu=0.1$}
  \stackunder[0.5pt]{\includegraphics[scale=0.21]{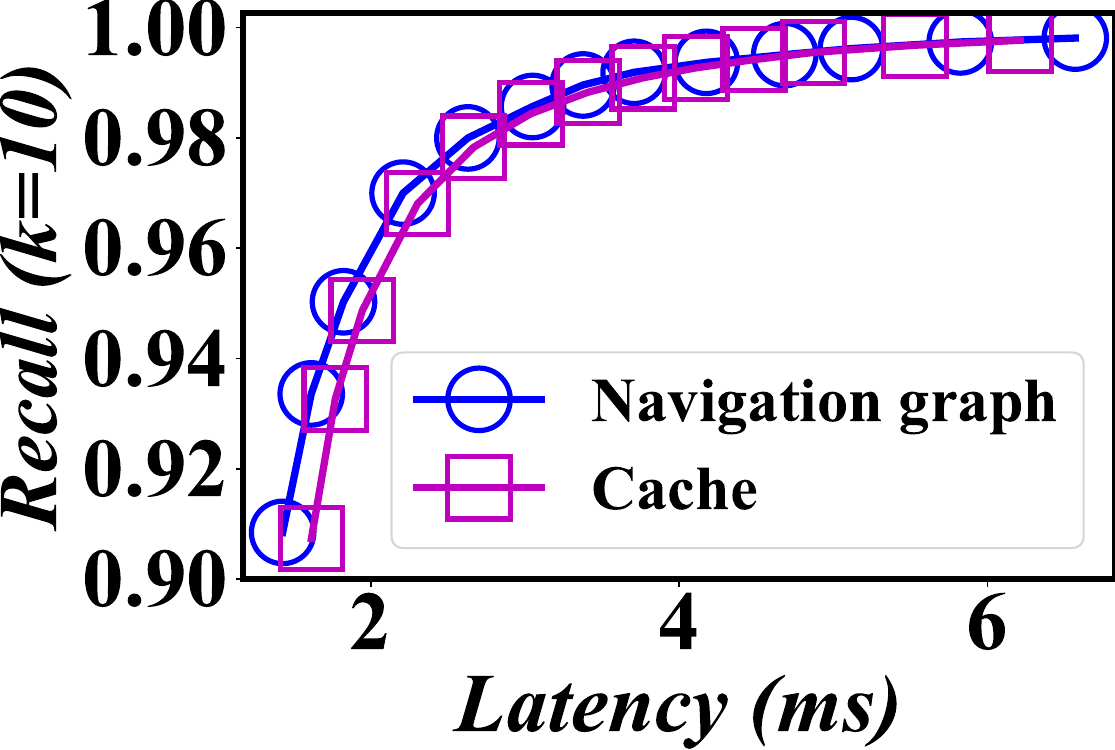}}{(b) $\mu=0.1$}
  \newline
  \caption{Search performance comparison under different $\mu$ values for two data layouts in memory.}
  \label{fig: memory graph vs cache, qps and latency}
\end{figure}

DiskANN \cite{DiskANN} (the baseline framework) caches some hot vertices and their neighbor IDs of disk-based graph in memory. We evaluate the effect of in-memory navigation graph on the BIGANN10M dataset, compared to DiskANN's cache strategy. We implement DiskANN's cache strategy on {\name} by replacing the in-memory navigation graph with it. We keep the sample ratio $\mu$ of two in-memory data layouts the same. We use 64 threads for the search procedure.

\noindent\textbf{Efficiency and accuracy.} Fig. \ref{fig: memory graph vs cache, qps and latency} shows the \textit{QPS} vs. \textit{Recall} and \textit{Recall} vs. \textit{Latency} comparison with different sample ratios $\mu$ on BIGANN10M. We can see that the in-memory navigation graph is always better than the cache strategy, especially when \textit{Recall} is between 0.90 and 0.97. We also see that when \textit{Recall} is above 0.97, two strategies are similar. The reason is that a higher \textit{Recall} needs more disk I/Os, and the I/Os reduction with the cache data or in-memory navigation graph is small compared to the total I/Os. So, the difference between the cache data and in-memory navigation graph becomes insignificant.

\begin{table}[!tb]
  \centering
  \setlength{\abovecaptionskip}{0.1cm}
  \setlength{\belowcaptionskip}{0.3cm}
  \setstretch{0.8}
  \fontsize{6.5pt}{3.3mm}\selectfont
  \caption{Memory overhead (MB) under different $\mu$ values for two data layouts in memory.}
  \label{tab: memory overhead two data layouts}
  \setlength{\tabcolsep}{.015\linewidth}{
  \begin{tabular}{|l|l|l|}
    \hline
    \textbf{Strategy$\rightarrow$} & \textbf{Navigation graph} & \textbf{Cache} \\
    \hline
    $\mu=0.001$ & 521 & 661 \\
    \hline
    $\mu=0.01$ & 558 & 1,039 \\
    \hline
    $\mu=0.1$ & 918 & 5,122 \\
    \hline
  \end{tabular}
  }\vspace{0.3cm}
\end{table}

\noindent\textbf{Memory overhead.} Tab. \ref{tab: memory overhead two data layouts} shows the memory overheads of two in-memory data layouts with the same \textit{Recall} (\textit{Recall=0.99}). We can see that the in-memory navigation graph has a lower memory cost than the cache strategy of DiskANN. This is because our in-memory graph keeps a light-weight graph index built on sample data in memory, but the cache strategy loads the vector data of vertices and their neighbor IDs of disk-based graph (they are always larger than in-memory graph index) into memory. We note that the memory overhead gap between navigation graph and cache strategy is larger as the sample ratio increases.

\section{Pruning ratio for Block Pruning}
\label{appendix: block pruning evaluation}
Fig. \ref{fig: block pruning evaluation} shows how different pruning ratios $\sigma$ affect the search performance on BIGANN10M. We use 8 threads to execute a batch of queries. The results show that increasing $\sigma$ from 0 to 0.3 improves the \textit{QPS} under the same \textit{Recall}. This is because a higher $\sigma$ allows us to check more neighbors and reach the vertices closer to the query with a high probability. This exploits the data locality and reduces the disk I/Os. When $\sigma=0$, we only visit the target vertex, which is equivalent to the vertex search strategy of the baseline framework. This does not benefit from the data locality from block shuffling, so it has the worst performance. However, when $\sigma$ exceeds a threshold (from 0.3 to 0.5), the \textit{QPS} under the same \textit{Recall} declines. This is because a larger $\sigma$ causes redundant computation. The mean I/Os decrease as $\sigma$ increases, because visiting more neighbors helps us find a vertex closer to the query and reduce disk I/Os. Therefore, there is an optimal pruning ratio that we can obtain by grid search. For example, the optimal $\sigma$ for the BIGANN dataset is 0.3. Tab. \ref{tab: parameters for search pruning ratio} shows the optimal $\sigma$ for other datasets.

\begin{figure}
  \setlength{\abovecaptionskip}{0cm}
  \setlength{\belowcaptionskip}{0cm}
  \centering
  \footnotesize
  \stackunder[0.5pt]{\includegraphics[scale=0.21]{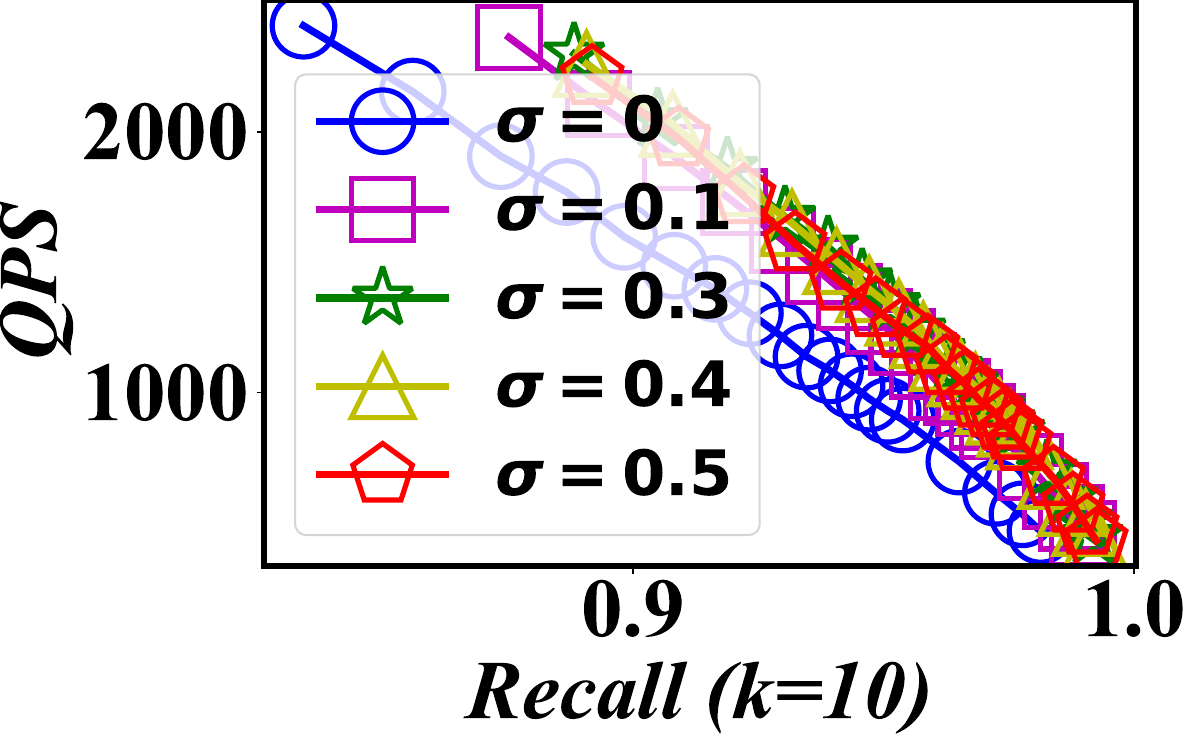}}{(a) \textit{QPS}}
  \stackunder[0.5pt]{\includegraphics[scale=0.21]{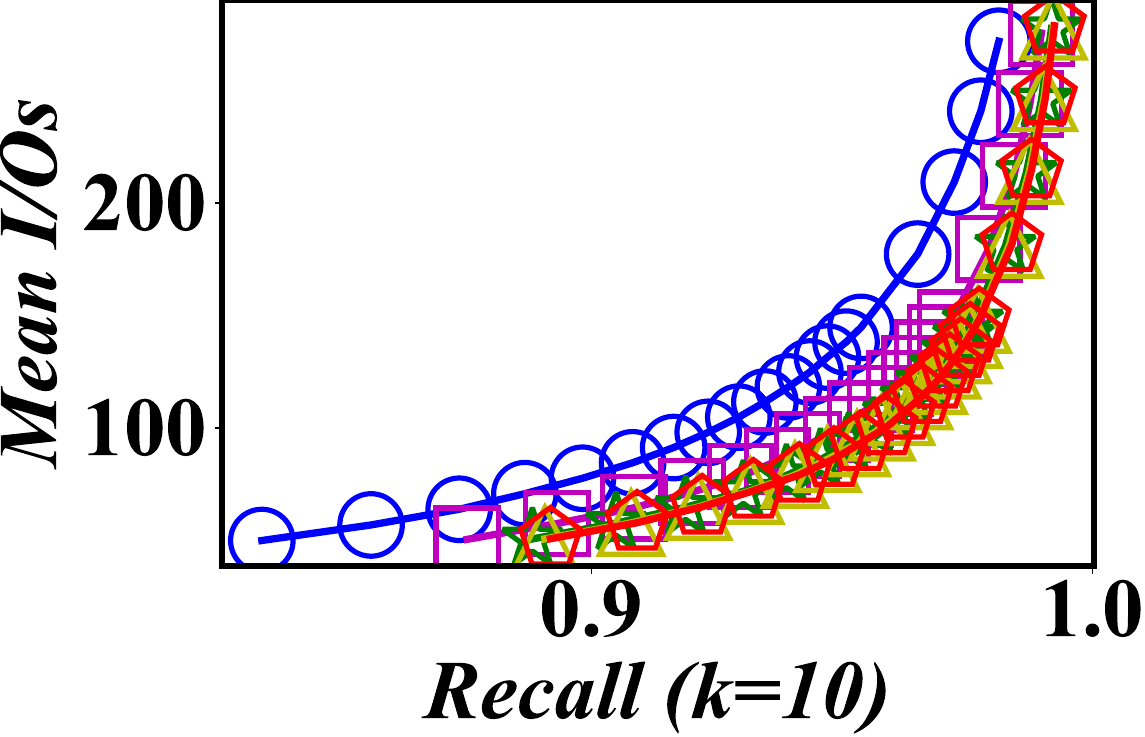}}{(b) \textit{Mean I/Os}}
  \newline
  \caption{\textit{QPS} and \textit{Mean I/Os} under different pruning ratios.}
  \label{fig: block pruning evaluation}
\end{figure}

\begin{table}[!tb]
  \centering
  \setlength{\abovecaptionskip}{0.1cm}
  \setlength{\belowcaptionskip}{0.3cm}
  \setstretch{0.8}
  \fontsize{6.5pt}{3.3mm}\selectfont
  \caption{Parameter values of the disk-based graph used in our experiments.}
  \label{tab: parameters for disk-based graph}
  \setlength{\tabcolsep}{.015\linewidth}{
  \begin{tabular}{|l|l|l|l|l|}
    \hline
    \textbf{Parameter$\downarrow$} & \textbf{BIGANN} & \textbf{DEEP} & \textbf{SSNPP} & \textbf{Text2image} \\
    \hline
    $\Lambda$ & 31 & 48 & 48 & 54 \\
    \hline
    $L$ & 128 & 128 & 128 & 128 \\
    \hline
    $B$ & 0.500 & 0.256 & 0.960 & 0.470 \\
    \hline
    $\gamma$ & 0.250 & 0.566 & 0.441 & 0.996 \\
    \hline
    $\eta$ & 4 & 4 & 4 & 4 \\
    \hline
    $\varepsilon$ & 16 & 7 & 9 & 4 \\
    \hline
    $\rho$ & 2,062,500 & 1,571,429 & 1,777,778 & 1,250,000 \\
    \hline
  \end{tabular}
  }
\end{table}

\begin{table}[!tb]
  \centering
  \setlength{\abovecaptionskip}{0.1cm}
  \setlength{\belowcaptionskip}{0.3cm}
  \setstretch{0.8}
  \fontsize{6.5pt}{3.3mm}\selectfont
  \caption{Parameter values of in-memory navigation graph used in our experiments.}
  \label{tab: parameters for memory graph}
  \setlength{\tabcolsep}{.015\linewidth}{
  \begin{tabular}{|l|l|l|l|l|}
    \hline
    \textbf{Parameter$\downarrow$} & \textbf{BIGANN} & \textbf{DEEP} & \textbf{SSNPP} & \textbf{Text2image} \\
    \hline
    $\Lambda$ & 20 & 10 & 48 & 100 \\
    \hline
    $L$ & 128 & 128 & 128 & 128  \\
    \hline
    $\mu$ & 0.09 & 0.10 & 0.10 & 0.09 \\
    \hline
  \end{tabular}
  }
\end{table}

\section{Parameters for Index Process}
\label{appendix: parameters index process}
Please refer to the DiskANN paper \cite{DiskANN} for details on disk-based graph index construction. We explain the main parameters for building the index. $\Lambda$ is the maximum degree of the disk-based graph. A higher $\Lambda$ leads to larger indexes and longer construction time, but also better search accuracy. $L$ is the size of candidate neighbors during index construction. A larger $L$ takes more time to construct but produces indexes with higher \textit{Recall} for the same search efficiency. $L$ should be at least as large as $\Lambda$. $B$ is the limit on the memory footprint of PQ short codes. This determines how much we compress the vector data to maintain in memory. A smaller $B$ reduces the memory overhead for the PQ short codes. Tab. \ref{tab: parameters for disk-based graph} shows the parameter values we used in our experiments (see {\S \ref{subsec: graph_reorder}} for more parameter description). In-memory navigation graph uses the same proximity graph, but it has smaller $\Lambda$ and $L$. Tab. \ref{tab: parameters for memory graph} provides the parameter values of in-memory navigation graph.

\begin{table}[!tb]
  \centering
  \setlength{\abovecaptionskip}{0.1cm}
  \setlength{\belowcaptionskip}{0.3cm}
  \setstretch{0.8}
  \fontsize{6.5pt}{3.3mm}\selectfont
  \caption{Pruning ratio used in our experiments.}
  \label{tab: parameters for search pruning ratio}
  \setlength{\tabcolsep}{.015\linewidth}{
  \begin{tabular}{|l|l|l|l|l|}
    \hline
    \textbf{Parameter$\downarrow$} & \textbf{BIGANN} & \textbf{DEEP} & \textbf{SSNPP} & \textbf{Text2image} \\
    \hline
    $\sigma$ & 0.3 & 0.3 & 0.3 & 1.0 \\
    \hline
  \end{tabular}
  }\vspace{0.3cm}
\end{table}

\begin{figure}
  \setlength{\abovecaptionskip}{0cm}
  \setlength{\belowcaptionskip}{0cm}
  \centering
  \footnotesize
  \stackunder[0.5pt]{\includegraphics[scale=0.21]{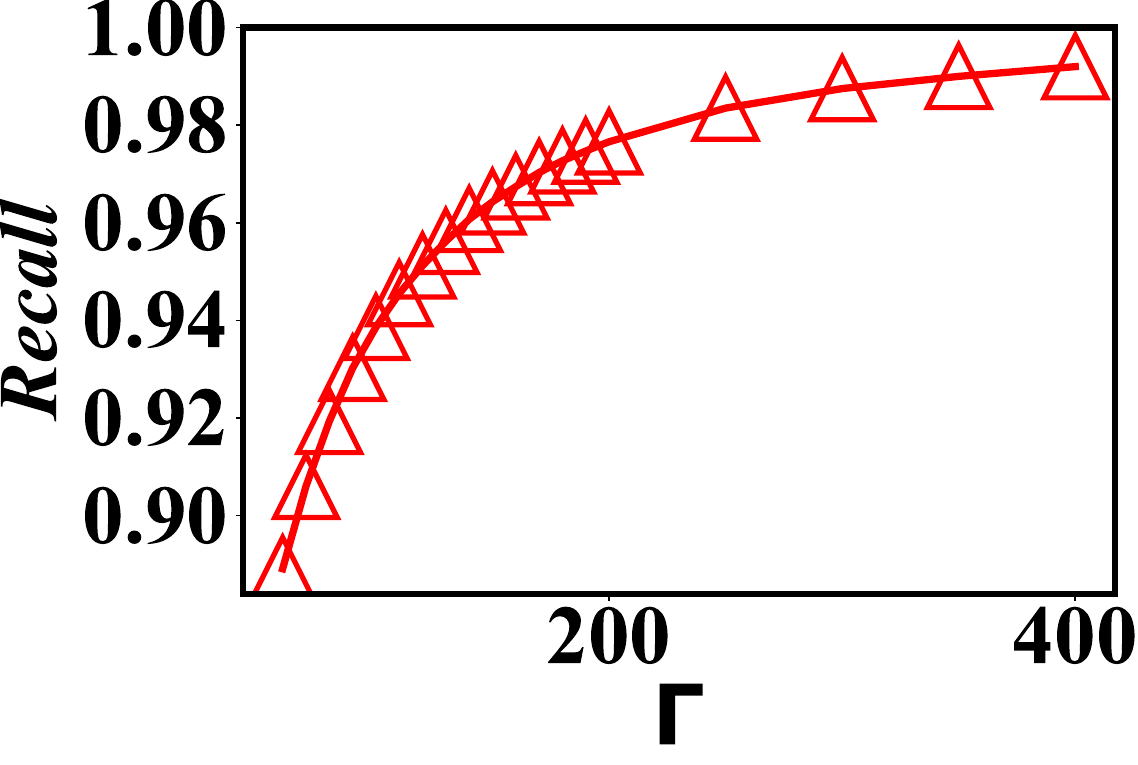}}{(a) \textit{Recall}}
  \stackunder[0.5pt]{\includegraphics[scale=0.21]{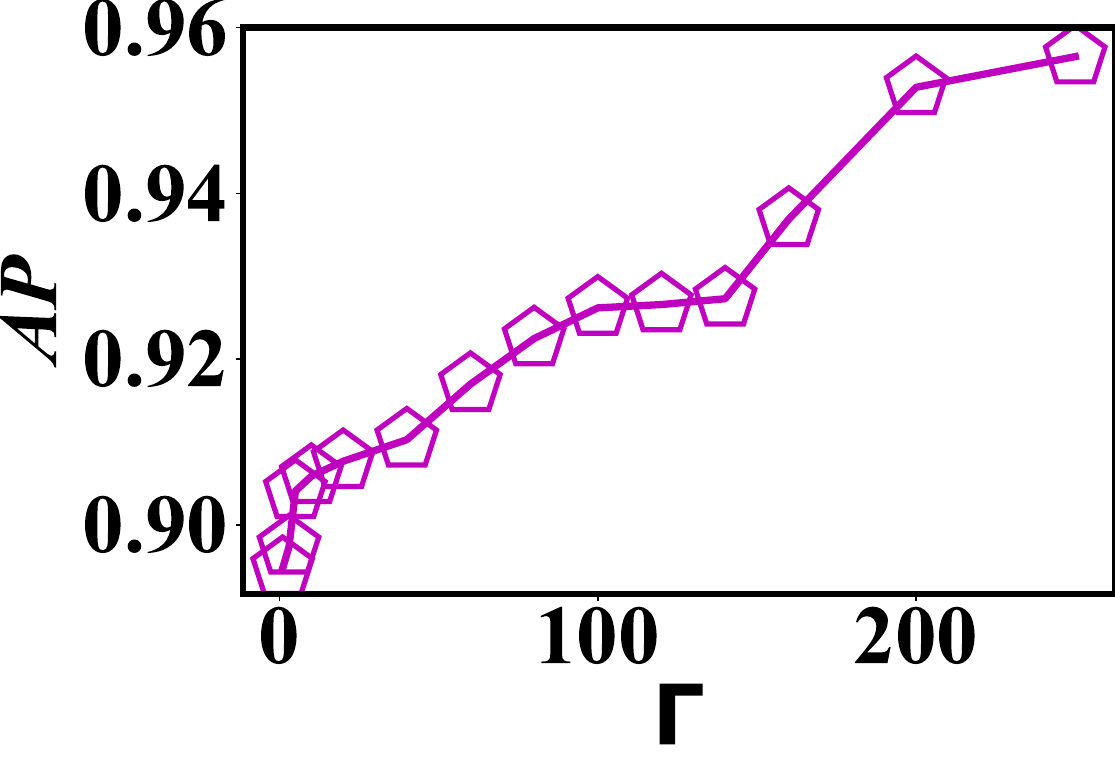}}{(b) \textit{AP}}
  \newline
  \stackunder[0.5pt]{\includegraphics[scale=0.21]{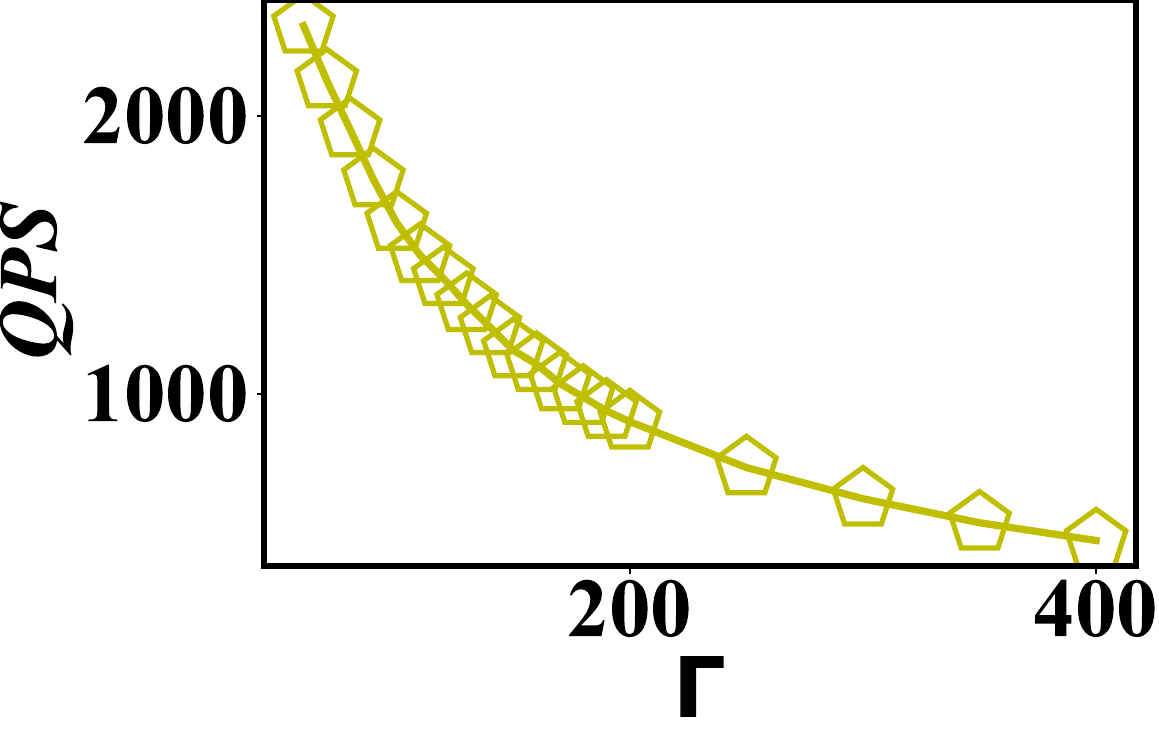}}{(c) \textit{QPS}}\hspace{0.2cm}
  \stackunder[0.5pt]{\includegraphics[scale=0.21]{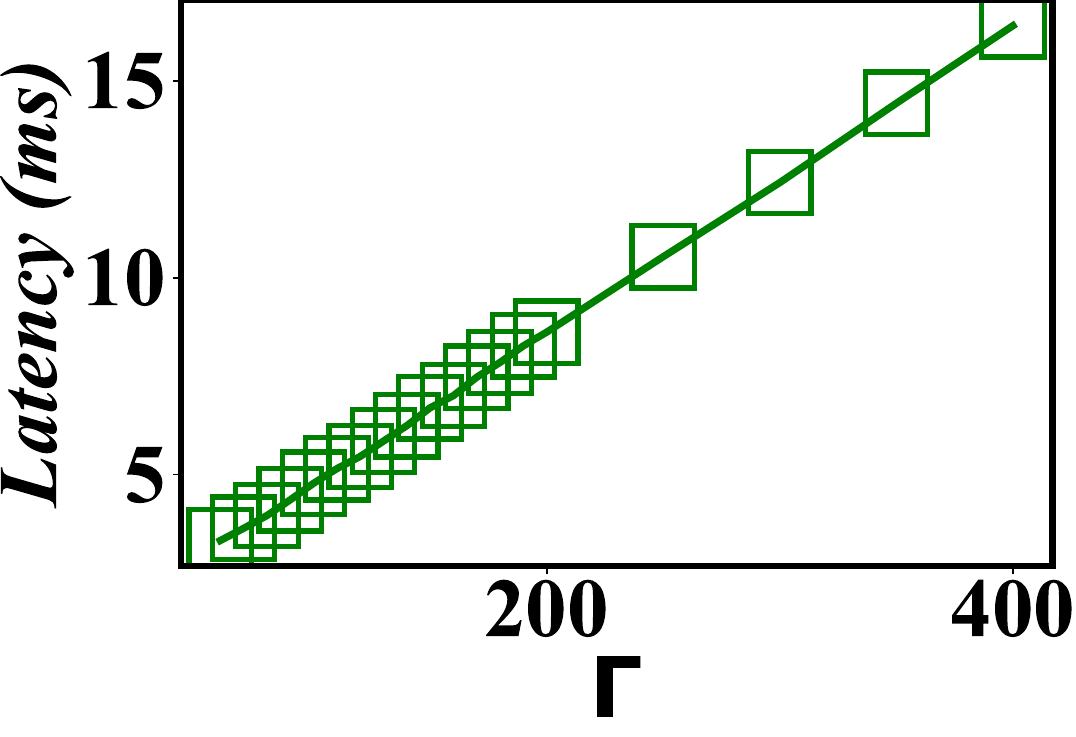}}{(d) \textit{Latency}}
  \newline
  \stackunder[0.5pt]{\includegraphics[scale=0.21]{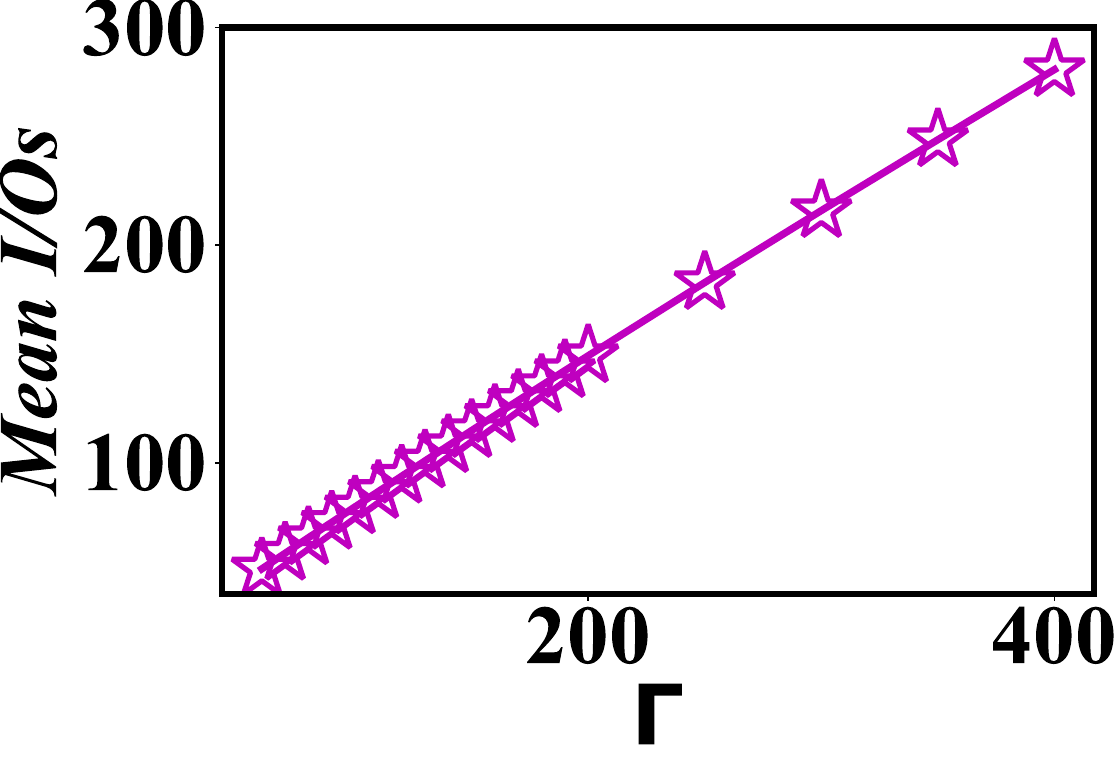}}{(e) Mean I/Os}
  \stackunder[0.5pt]{\includegraphics[scale=0.21]{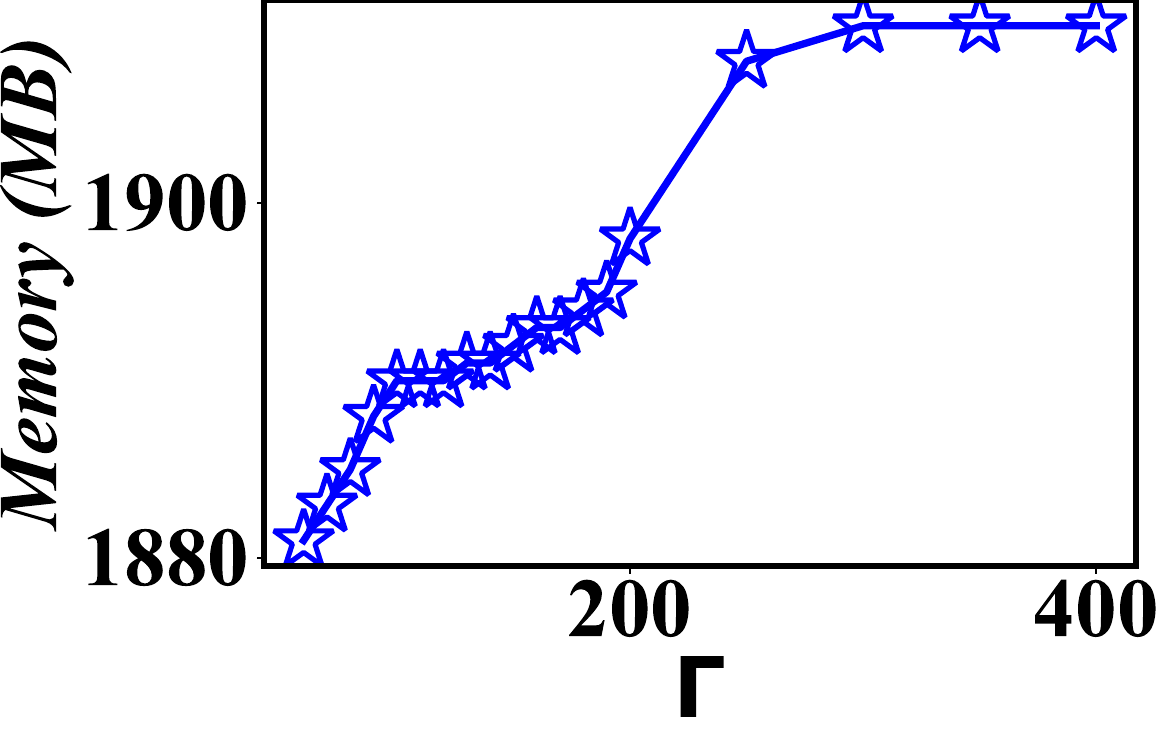}}{(f) Memory overhead}
  \newline
  \caption{Effect of candidate set size $\Gamma$ on search performance}
  \label{fig: search list size}
\end{figure}

\section{Parameters for searching}
\label{appendix: parameters search}
We modify the candidate set size $\Gamma$ to obtain different \textit{Recall} or \textit{AP} when searching. Fig. \ref{fig: search list size} shows the effect of $\Gamma$ on the BIGANN dataset. The results indicate that a larger $\Gamma$ increases the accuracy but also slows down the search and consumes more memory. This parameter applies to both in-memory navigation graph and disk-based graph. Tab. \ref{tab: parameters for search pruning ratio} lists the pruning ratio on different datasets we used in our experiments.

\setlength{\textfloatsep}{0cm}
\setlength{\floatsep}{0cm}
\begin{table}[!tb]
  \centering
  \setlength{\abovecaptionskip}{0.1cm}
  \setlength{\belowcaptionskip}{0.3cm}
  \setstretch{0.8}
  \fontsize{6.5pt}{3.3mm}\selectfont
  \caption{Memory overhead and search performance under different datasets.}
  \label{tab: memory overhead segment}
  \setlength{\tabcolsep}{.02\linewidth}{
  \begin{tabular}{|l|l|l|l|l|}
    \hline
    \textbf{Dataset}$\downarrow$ & \textbf{Method} & \textbf{\textit{Recall}}($k=10$) & \textbf{Memory overhead (MB)} & \textbf{\textit{QPS}} \\
    \hline
    \multirow{2}*{\textbf{BIGANN}} & {\name} & \textbf{0.9920} & \textbf{1,910} & \textbf{475} \\
    \cline{2-5}
    ~ & DiskANN & 0.9827 & 1,925 & 340 \\
    \hline
    \multirow{2}*{\textbf{DEEP}} & {\name} & \textbf{0.9950} & \textbf{1,213} & \textbf{504} \\
    \cline{2-5}
    ~ & DiskANN & 0.9927 & 1,244 & 379 \\
    \hline
  \end{tabular}
  }\vspace{0.2cm}
\end{table}

\section{Memory Cost and Search performance}
\label{appendix: memory cost and search performance}
DiskANN keeps PQ short codes, hot vertices and their neighbor IDs in memory. For {\name}, we mainly store in-memory navigation graph, PQ short codes, and the mapping of vertex IDs to block IDs. Our evaluation in {Appendix \ref{appendix: memory graph vs cache}} shows that navigation graph uses less memory than caching hot vertices and achieves better search performance. In one segment, the memory limit is 2 GB. We adjust the memory overhead by sample ratio for both {\name} and DiskANN (see {Appendix \ref{appendix: parameters of memory graph} and \ref{appendix: memory graph vs cache}}). Tab. \ref{tab: memory overhead segment} shows the memory overhead for similar \textit{Recall}. {\name} has lower memory overhead and higher \textit{QPS} for higher \textit{Recall}.

\setlength{\textfloatsep}{0cm}
\setlength{\floatsep}{0cm}
\begin{figure*}[!th]
\setlength{\abovecaptionskip}{0cm}
\setstretch{0.9}
\fontsize{8pt}{4mm}\selectfont
\begin{minipage}{0.665\textwidth}
  \setlength{\abovecaptionskip}{0.1cm}
  \setlength{\belowcaptionskip}{0cm}
  \centering
  \footnotesize
  \stackunder[0.7pt]{\includegraphics[scale=0.17]{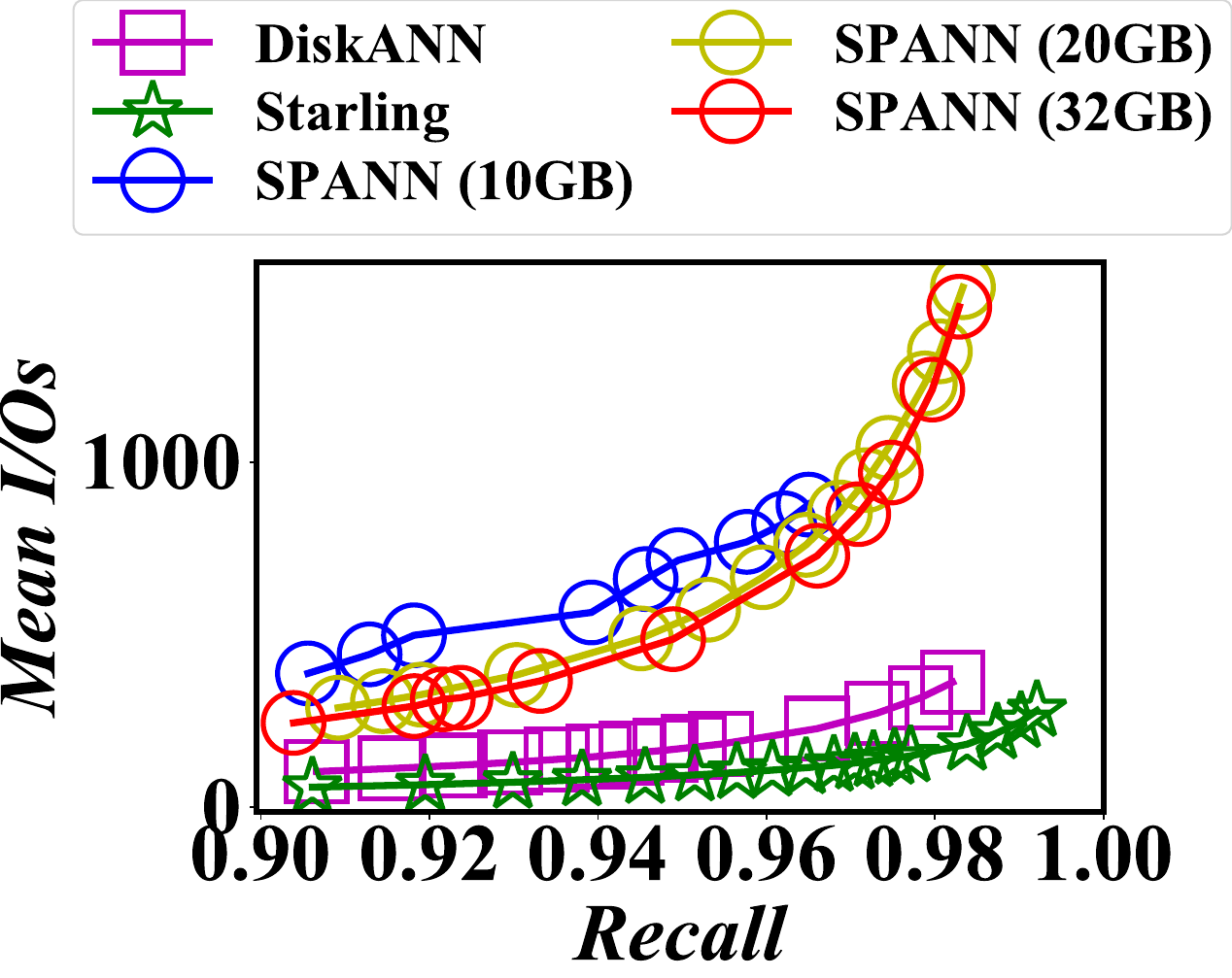}}{(a) BIGANN (4GB)}
  \hspace{0cm}
  \stackunder[0.7pt]{\includegraphics[scale=0.19]{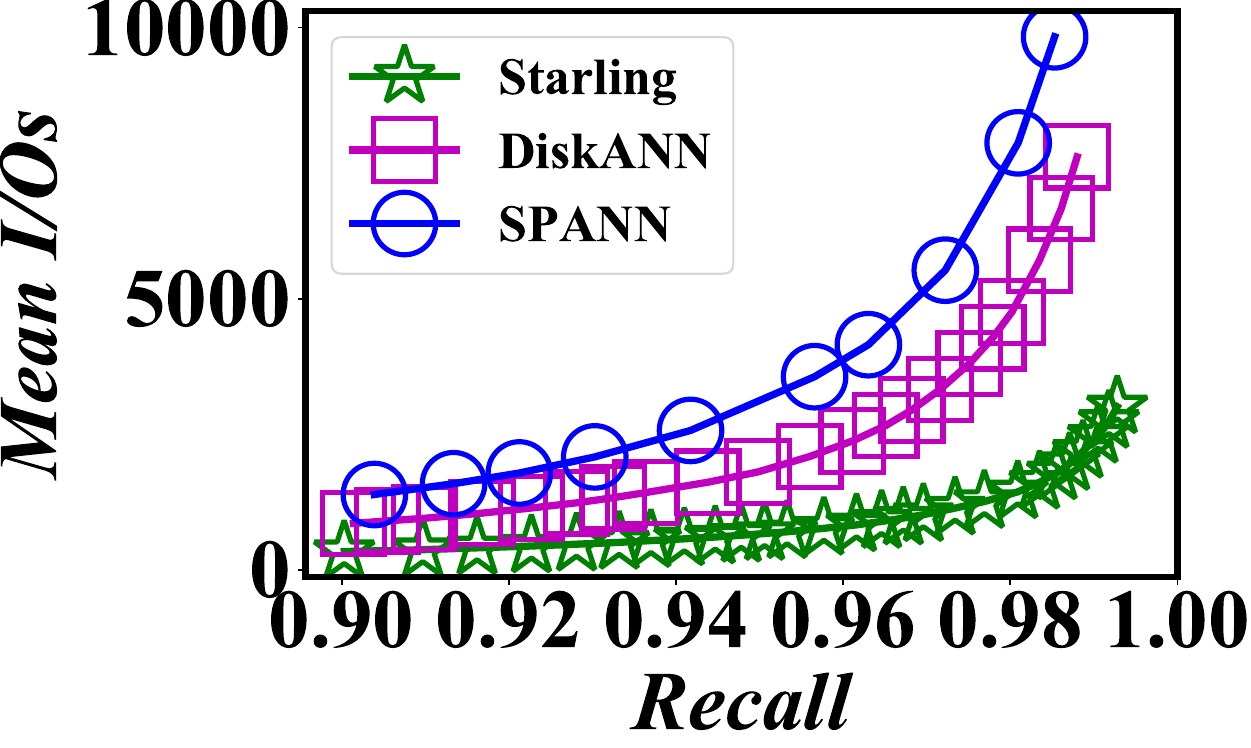}}{(b) BIGANN (8GB)}
  \hspace{0cm}
  \stackunder[0.7pt]{\includegraphics[scale=0.19]{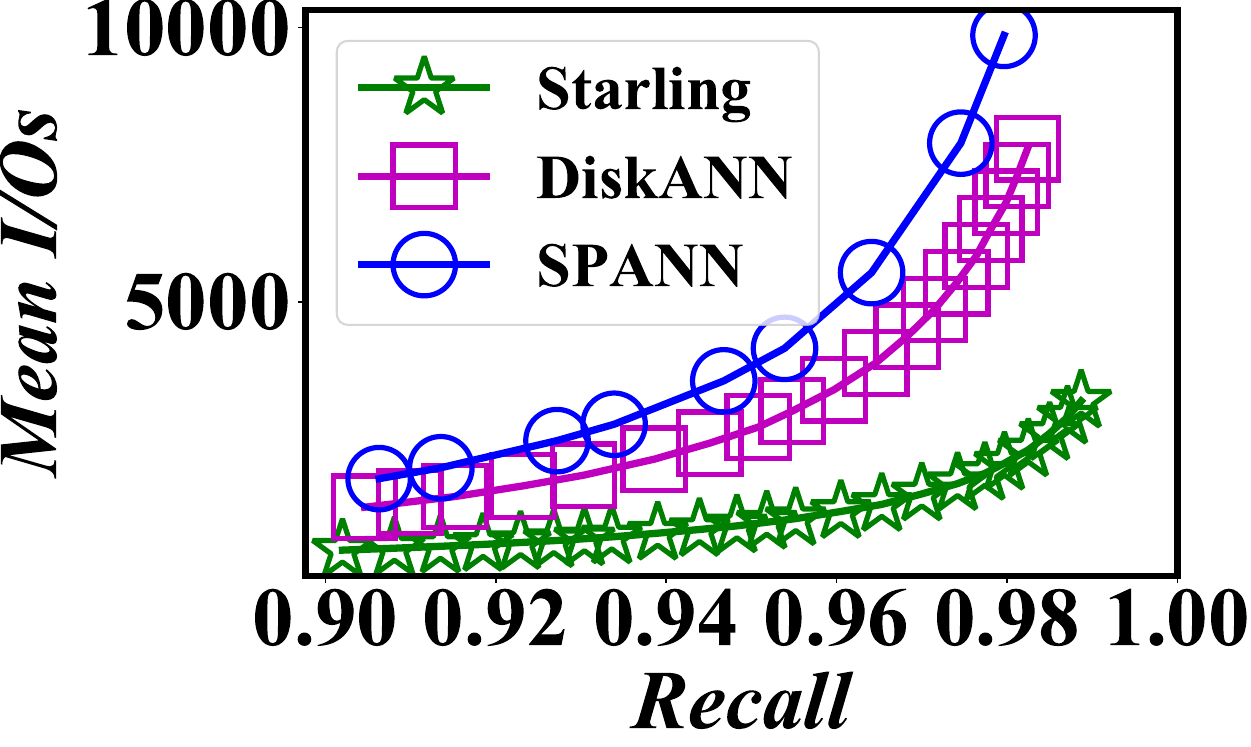}}{(c) BIGANN (16GB)}
  \caption{\textit{Mean I/Os} of search on different segments and data sizes.}
  \label{fig: disk io under different segment configurations}
\end{minipage}
\begin{minipage}{0.33\textwidth}
  \setlength{\abovecaptionskip}{0.1cm}
  \setlength{\belowcaptionskip}{0cm}
  \centering
  \footnotesize
  \stackunder[0.9pt]{\includegraphics[scale=0.19]{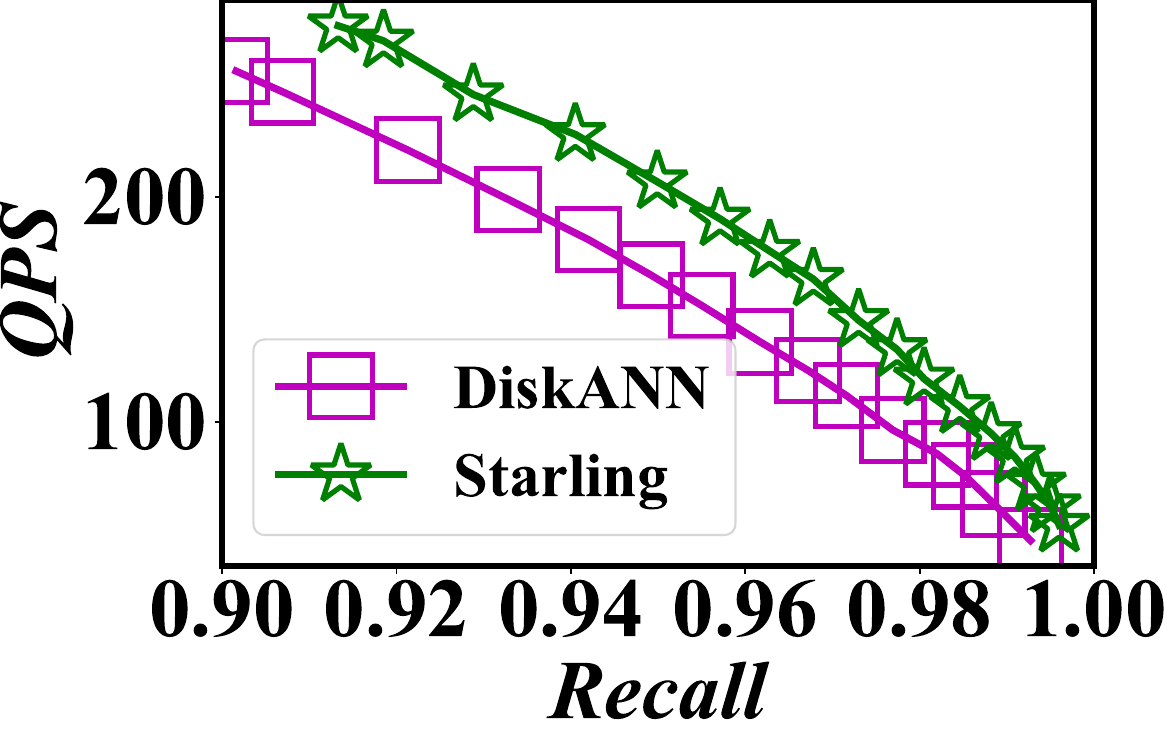}}{DEEP (310GB)}
  \caption{Effect of big data.}
  \label{fig: 341m deep data}
\end{minipage}
\vspace{-0.3cm}
\end{figure*}

\section{Related work}
\label{appendix: related work}
\noindent\textbf{Vector search algorithms.} High-dimensional vector similarity search (HVSS) is a well-studied topic, and most works aim to find an algorithm that balances efficiency and accuracy by preprocessing data \cite{graph_survey_vldb2021}. We can roughly categorize current algorithms into four common groups based on how they process the data: tree-based \cite{DasguptaF08, LuWWK20, MujaL14}; quantization-based \cite{PQ, ScaNN, AndreKS15}; hashing-based \cite{HuangFZFN15, GongWOX20, LiZSWT020}; and graph-based \cite{HNSW,NSG,NSSG}. Graph-based methods have shown to be very effective for HVSS in many empirical studies \cite{DPG,NSG,LiZAH20}. However, existing graph-based algorithms require both raw vector data and graph index to be stored in main memory. This leads to high memory consumption and limits the scalability to hundreds of millions of vectors \cite{SPANN}. {\name} builds a disk-based graph index and optimizes the block-level graph layout by block shuffling. It also maintains an in-memory navigation graph based on sampled data. Thus, {\name} solves the high memory overhead problem while achieving the state-of-the-art efficiency and accuracy trade-off.

Most of the research has focused on million-scale datasets in main memory. For web-scale vector search scenarios, existing vector search algorithms are limited by the main memory-served indexes \cite{DiskANN}. Current solutions include large-scale methods and distributed vector databases.

\vspace{0.5em}
\noindent\textbf{Large-scale solutions.} Large-scale solutions aim to reduce memory consumption from two perspectives: vector quantization and hybrid storage. On one hand, FAISS \cite{Faiss} uses vector quantization techniques (e.g., IVFPQ \cite{PQ}, OPQ \cite{OPQ}) to compress the vector data and store them in memory. However, the compression introduces quantization error, which lowers the search quality. On the other hand, some methods working on disk and heterogeneous memory (HM), such as GRIP \cite{zhang2019grip}, DiskANN \cite{DiskANN}, HM-ANN \cite{HM_ANN}, SPANN \cite{SPANN}, BBAnn \cite{Manu_zilliz}, allocate vector data or index to persistent storage. They use some hardware-oriented optimizations to alleviate the I/O bottleneck and support billion-scale vectors on a single machine. However, as data volume grows, it becomes harder to scale to distributed vector database systems for system features purpose \cite{deng2019pyramid}. Moreover, applying current disk-based solutions to the data segment of vector databases causes huge disk overhead or high I/O complexity. {\name} solves this problem by optimizing the data layout and search strategy following the state-of-the-art disk-based graph index paradigm.

\vspace{0.5em}
\noindent\textbf{Vector database systems.} Vector database research has gained more interest recently \cite{Milvus_sigmod2021,Manu_zilliz,Jingdong_paper}. AnalyticDB-V \cite{ADBV}, PASE \cite{PASE}, and VBase \cite{zhang2023vbase} extend relational databases to support vector data with the one-size-fits-all strategy. However, those systems require optimization for either transaction or analytical workloads and are not specialized for vector data management. Some purpose-built vector data management systems, such as Vearch \cite{Jingdong_paper} and Milvus \cite{Milvus_sigmod2021,Manu_zilliz}, store and search large-scale vector data efficiently by organizing data with segments for industrial applications. There are also some distributed solutions, e.g., Pyramid \cite{deng2019pyramid}, LANNS \cite{lanns}, LEQAT \cite{zhang2022leqat}. They focus on data segmentation strategies or query dispatching optimizations. They assign the vectors that are close to each other to the same segment and probe only a few segments when searching based on the query dispatching optimizations. However, they ignore the search optimization opportunities within the data segment of vector databases. In contrast, {\name} addresses the problem of how to search efficiently in the data segment by optimizing the data layout and search strategy.

\section{Main Parameters of SPANN and DiskANN}
\label{appendix: main parameters of SPANN and DiskANN}

Here, we discuss the key parameters in SPANN and DiskANN.

\begin{table}[!tb]
  \centering
  \setlength{\abovecaptionskip}{0.1cm}
  \setlength{\belowcaptionskip}{0.3cm}
  \setstretch{0.8}
  \fontsize{6.5pt}{3.3mm}\selectfont
  \caption{Parameter values of SPANN in our experiments.}
  \label{tab: parameters for SPANN}
  \setlength{\tabcolsep}{.015\linewidth}{
  \begin{tabular}{|l|l|l|l|}
    \hline
    \textbf{Parameter$\downarrow$} & \textbf{BIGANN} & \textbf{DEEP} & \textbf{Text2image} \\
    \hline
    $\epsilon$ & 2 & 3 & 5 \\
    \hline
    $\alpha$ (KB) & 12 & 48 & 48 \\
    \hline
    $\epsilon_1$ & 10.0 & 10.0 & 10.0 \\
    \hline
    $\epsilon_2$ & 8.0 & 8.0 & 8.0 \\
    \hline
  \end{tabular}
  }
\end{table}

\noindent\textbf{SPANN.} SPANN assigns a vector to multiple closest clusters if the distance between the vector and these clusters is nearly the same. The closure replica parameter ($\epsilon$) indicates the number of replicas for each vector in the closure clustering assignment. The posting list size ($\alpha$) reflects the length of each posting list (each posting list corresponds to a cluster). SPANN uses the RNG rule to reduce the similarity of two close posting lists, and the distance coefficient $\epsilon_1$ adjusts the RNG rule. In the search procedure, SPANN employs the query-aware dynamic pruning technique to reduce the number of posting lists to be searched based on the distance between the query and centroids, where $\epsilon_2$ is the pruning coefficient. In our experiments, we increase the maximum number of posting lists to be searched to obtain different recall quality. The eventual parameters used by SPANN are provided in Tab. \ref{tab: parameters for SPANN}.

\begin{table}[!tb]
  \centering
  \setlength{\abovecaptionskip}{0.1cm}
  \setlength{\belowcaptionskip}{0.3cm}
  \setstretch{0.8}
  \fontsize{6.5pt}{3.3mm}\selectfont
  \caption{Parameter values of DiskANN in our experiments.}
  \label{tab: parameters for DiskANN}
  \setlength{\tabcolsep}{.015\linewidth}{
  \begin{tabular}{|l|l|l|l|l|}
    \hline
    \textbf{Parameter$\downarrow$} & \textbf{BIGANN} & \textbf{DEEP} & \textbf{SSNPP} & \textbf{Text2image} \\
    \hline
    $\Lambda$ & 31 & 48 & 48 & 54 \\
    \hline
    $L$ & 128 & 128 & 128 & 128 \\
    \hline
    $B$ & 0.500 & 0.256 & 0.960 & 0.470 \\
    \hline
    $\pi$ & 0.06 & 0.07 & 0.12 & 0.10 \\
    \hline
  \end{tabular}
  }\vspace{0.2cm}
\end{table}

\noindent\textbf{DiskANN.}
We explain the main parameters used in building the index and executing a query for DiskANN. The parameter $\Lambda$ represents the maximum degree of the disk-based graph. Increasing $\Lambda$ results in larger indexes and longer construction time, but it also improves search accuracy. Another parameter, $L$, denotes the size of candidate neighbors during index construction. A larger value of $L$ requires more time for construction, but it produces indexes with higher \textit{Recall} while maintaining the same search efficiency. It is important to note that $L$ should be at least as large as $\Lambda$. The parameter $B$ sets the limit on the memory footprint of PQ short codes. This determines the level of compression applied to the vector data stored in memory. A smaller value of $B$ reduces the memory overhead for the PQ short codes. In DiskANN, certain hot vertices and their corresponding neighbor IDs from the disk-based graph are loaded into memory. The ratio of loaded vertices is denoted by $\pi$. The specific parameters used by DiskANN in our experimental datasets are provided in Tab. \ref{tab: parameters for DiskANN}. We adjust the size of the candidate set to achieve different levels of recall quality when searching.

\section{Different Segment and Data Size Setups}
\label{appendix: segment and data size setup}

In Fig. \ref{fig: disk io under different segment configurations}, we present an evaluation of the \textit{Mean I/Os} for search operations using different segment and data size configurations.

\noindent\textbf{Varying disk capacity under the same memory space.}
In this experiment, we maintain a fixed dataset size of 4GB and evaluate different segment setups while keeping the memory space at 2GB. The disk capacities vary as 10GB, 20GB, and 32GB, respectively. Fig. \ref{fig: disk io under different segment configurations}(a) illustrates the \textit{Mean I/Os} for the BIGANN dataset. For DiskANN and {\name}, we choose parameter configurations that provide the best trade-off between \textit{Mean I/Os} and \textit{Recall}, resulting in an index size of less than 10GB for the 4GB BIGANN dataset. Therefore, we only plot one \textit{Mean I/Os}-\textit{Recall} curve for DiskANN and {\name} each. In the case of SPANN, the \textit{Mean I/Os} decreases as the disk capacity increases. This is due to the larger disk space allowing for more duplication of boundary data points (which are shared by multiple clusters), thereby reducing disk I/Os required to access these boundary points. We observe that the \textit{Mean I/Os} of SPANN gradually approaches that of DiskANN as the disk space increases. However, {\name} consistently maintains significantly lower disk I/Os compared to DiskANN and SPANN. Tab. \ref{tab:spann index size} presents the index size of SPANN for the 4GB BIGANN dataset at different $\epsilon$ values.

\begin{table}[!tb]
  \centering
  \setlength{\abovecaptionskip}{0cm}
  \setlength{\belowcaptionskip}{0cm}
  \setstretch{0.8}
  \fontsize{6.5pt}{3.3mm}\selectfont
  \caption{Index size of SPANN on 4GB BIGANN dataset.}
  \label{tab:spann index size}
  \setlength{\tabcolsep}{.015\linewidth}{
  \begin{tabular}{|l|l|l|l|l|l|l|l|l|}
    \hline
    \textbf{$\epsilon$} & 1 & 2 & 3 & 4 & 5 & 6 & 7 & 8 \\
    \hline
    Index size (GB) & 5.9 & 8.4 & 12 & 16 & 19 & 23 & 26 & 28 \\
    \hline
  \end{tabular}
  }
\end{table}

\noindent\textbf{Varying dataset size under the same memory and disk space.}
In this experiment, we maintain a fixed memory space of 2GB and disk capacity of 32GB. We evaluate different dataset sizes, specifically 4GB, 8GB, and 16GB. Fig. \ref{fig: disk io under different segment configurations}(a) and (c) depict the search performance of different methods on the 8GB and 16GB BIGANN datasets, respectively (refer to Fig. \ref{fig: disk io under different segment configurations}(a) for the 4GB dataset size).
We adjust the parameters of all methods to achieve the best trade-off between \textit{Mean I/Os} and \textit{Recall} within the limitation of the segment space. From the results, we observe that {\name} consistently outperforms the competing methods in terms of \textit{Mean I/Os}. Furthermore, the performance gain of {\name} compared to the competitors increases as the dataset size grows. For SPANN, a larger dataset size limits its ability to duplicate data, resulting in numerous \textit{Mean I/Os}.

\section{Evaluation on 341M DEEP dataset}
\label{appendix: evaluation on 341M deep dataset}
We divide the 341M DEEP dataset into 31 data segments. Each segment was configured with 2GB of memory and 10GB of disk space. Since each query node has only 32GB of memory available, we assign 31 segments to two query nodes. The final search results are obtained by continuously merging candidates from each segment. Both {\name} and DiskANN adopt this setting to ensure fairness in the comparison.
According to the results shown in Fig. \ref{fig: 341m deep data}, {\name} achieves up to 56\% higher \textit{QPS} than DiskANN in the same high recall regime, such as \textit{Recall} $> 0.96$.

\section{Discussion on Data Locality}
\label{appendix: discussion on data locality}
Data locality, a coveted trait in various domains, presents a unique challenge for graph indexes constructed on high-dimensional vectors. Relational databases naturally achieve data locality through diverse partitioning strategies grouping relevant data into the same cluster \cite{zamanian2015locality,neumann2011efficiently} and storing them in the same disk sector. Graph engines like Pregel \cite{Pregel} and Ligra \cite{Ligra} exhibit inherent data locality for real-world graphs (e.g., social networks \cite{pacaci2019experimental,WeiYLL16}) since their degree follows a power-law distribution, and neighbors tend to cluster \cite{abbas2018streaming}. However, the nature of graph indexes introduces a distinct scenario where neighbors exhibit similarity and navigation traits (long links), and the degree is uniform (constant for all vertices) \cite{graph_survey_vldb2021}. Hence, the neighbors of a vertex may scatter across clusters, as reported in the HNSW paper \cite{HNSW}, making data locality more challenging.


\end{document}
\endinput